\title{Optimal and Robust Controller Synthesis%
  \thanks{Work supported by ERC projects Lasso and EQualIS.}}
\author{Giovanni Bacci\inst{1} \and Patricia Bouyer\inst{2} \and Uli Fahrenberg\inst{3}%
\and Kim G.~Larsen\inst{1} \and \\ Nicolas Markey\inst{4} \and Pierre-Alain Reynier\inst{5}}
\institute{
Department of Computer Science, Aalborg University, Denmark %\email{\{giovbacci,kgl\}@cs.aau.dk}
\and
LSV, CNRS \& ENS Cachan, Universit\'e Paris-Saclay, Cachan, France %\email{bouyer@lsv.fr}
\and
\'Ecole Polytechnique, Palaiseau, France %\email{uli@lix.polytechnique.fr}
\and
Univ. Rennes, IRISA, CNRS \& INRIA, Rennes, France %\email{nicolas.markey@irisa.fr}
\and
Aix-Marseille Univ, LIF, CNRS, Marseille, France %\email{pierre-alain.reynier@lif.univ-mrs.fr}
}
\titlerunning{Optimal and Robust Controller Synthesis}
\authorrunning{G.~Bacci, P.~Bouyer, U.~Fahrenberg, K.G.~Larsen, N.~Markey, P.A.~Reynier}
\begin{document}

\maketitle

\begin{abstract}
  In this paper, we propose a novel framework for the synthesis of
  robust and optimal energy-aware controllers.  The~framework is based
  on energy timed automata, allowing for easy expression of
  timing constraints and variable energy rates.  We prove decidability
  of the energy-constrained infinite-run problem in settings with both
  certainty and uncertainty of the energy rates. We also consider the
  optimization problem of identifying the minimal upper bound that
  will permit existence of energy-constrained infinite runs.
  Our algorithms are based on quantifier elimination for linear real
  arithmetic.  Using Mathematica and Mjollnir, we illustrate our
  framework through a real industrial example of a hydraulic oil pump.
  Compared with previous approaches our method is completely automated
  and provides improved results.
\end{abstract}

% Scheduling of energy- and time-constrained systems such as satellites,
% sensor networks, and hydraulic systems may conveniently be formulated
% as energy-constrained infinite run problems for weighted timed
% automata.  Closing an open problem, we prove in this paper
% decidability of the infinite run problems for weighted timed automata
% in 1-clock and 1-weight settings, where rates of the weight-variable
% may be both positive (reflecting energy harvesting), negative
% (reflecting energy consumption) and possibly subject to uncertainty,
% and the accumulated energy is additionally subject to lower and upper
% bounds (reflecting capacity constraints).  We also consider the
% optimization problem of identifying the minimal upper bound that will
% permit existence of infinity energy-constrained runs.

% Our decision and optimization algorithms for the energy-constrained
% infinite run problem fall in two classes depending on whether the
% rates of the weight-variable are uncertain or not. In both cases the
% algorithm is based on quantifier elimination for linear real
% arithmetic.  We offer a tool-chain using Mathematica and Mjollnir, and
% illustrate our approach using a real industrial example of a hydraulic
% oil pump provided by the German company HYDAC within the European
% project Quasimodo. Compared with previous approaches our method is
% fully automated and provides competitive results.
% \end{abstract}

\section{Introduction}

Design of controllers for embedded  systems is a difficult engineering
task. Controllers must  ensure a variety of safety  properties as well
as optimality with respect to given performance properties.  Also, for
several                          systems,                         e.g.
\cite{DBLP:conf/fm/BisgaardGHKNS16,DBLP:journals/fac/BochmannHLO17,DBLP:conf/birthday/PhanHM14},
the properties involve non-functional aspects such as time and energy.

\looseness=-1
We  provide a  novel framework  for  automatic synthesis  of safe  and
optimal controllers  for resource-aware systems based  on \emph{energy
  timed automata}.   Synthesis of  controllers is obtained  by solving
time-  and energy-constrained  infinite  run  problems.  Energy  timed
automata \cite{BFLMS08}  extend timed automata
\cite{AD94} with a continuous \emph{energy} variable that evolves with
varying rates and discrete updates  during the behaviour of the model.
Closing an open problem from \cite{BFLMS08}, we
prove  decidability of  the infinite  run problem  in settings,  where
rates  and updates  may be  both  positive and  negative and  possibly
subject to  uncertainty.  Additionally, the accumulated  energy may be
subject to lower and upper  bounds reflecting constraints on capacity.
Also  we consider  the  optimization problems  of identifying  minimal
upper   bounds   that   will   permit  the   existence   of   infinite
energy-constrained runs.  Our decision and optimization algorithms for
the energy-constrained infinite run problems are based on reductions to
quantifier  elimination  (QE) for  linear  real arithmetic,  for  which  we
combine Mathematica \cite{Mathematica}  and Mjollnir \cite{Monniaux10}
into a tool chain.

\definecolor{arylideyellow}{rgb}{0.91, 0.84, 0.42}
\definecolor{azure}{rgb}{0.94, 1.0, 1.0}
\begin{figure}[t]
\centering
\begin{subfigure}[b]{0.4\textwidth}
\begin{tikzpicture}[scale=1,
  component/.style={rectangle, rounded corners, draw=gray!90, thick}] %fill=gray!20
  \draw 
  (0,0) node[component] (pump) {\parbox{10ex}{\centering Pump}}
%  ($(pump)+(down:1.5)$) node[component] (accumulator) {\parbox{14ex}{\centering Accumulator}}
  ($(pump)+(down:1.5)$) node[component] (machine) {\parbox{10ex}{\centering Machine}}
  ;
  
  \path[-latex,double, font=\scriptsize]
  	(pump) edge[thick,double] node[above] {$2.2$ \si{\litre/\second}} ($(pump)+(right:2)$)
	($(machine)+(right:2)$) edge[thick,double] (machine)
  	;
	
  \draw[fill=arylideyellow] ($(pump.north)+(right:2)$) rectangle ($(machine.south)+(right:3.5)$);
  \draw[fill=azure] ($(pump.north)+(right:2)$) rectangle ($(machine.south)+(3.5,1.3)$);
  \draw[color=red,dashed] ($(machine.south)+(2,1.5)$) -- ($(machine.south)+(3.6,1.5)$) 
  	($(machine.south)+(4,1.5)$) node {\small{$V_{max}$}};
  \draw[color=red,dashed] ($(machine.south)+(2,0.5)$) -- ($(machine.south)+(3.6,0.5)$)
  	($(machine.south)+(4,0.5)$) node {\small{$V_{min}$}} ;
   \draw ($(pump.north)+(2.75,0.3)$) node {Accumulator};
   
   \draw[color=black,very thin] (-0.9,1) rectangle (4.4,-2.2);
\end{tikzpicture}
\caption{System Components} \label{fig:hydacoverview}
\end{subfigure}
\qquad
\begin{subfigure}[b]{0.5\textwidth}
  \includegraphics[width=1 \textwidth]{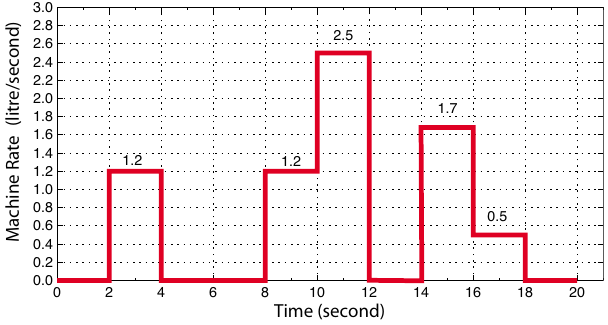}%
  \caption{Cycle of the Machine}
  \label{fig:machinecycle}
 \end{subfigure}
 \caption{Overview of the HYDAC system}
 \label{fig-hydacexpl}
\end{figure}
To demonstrate the applicability of our framework, we revisit an
industrial case study provided by the HYDAC company in the context of
the European project
Quasimodo%
%\footnote{{\url{http://www.quasimodo.aau.dk}}}.
~\cite{quasimodo}.
%The case study
It~consists in an on/off control system
(see~Fig.~\ref{fig:hydacoverview})  composed of
\begin{enumerate*}[label=(\roman*)]\item a machine that consumes oil according to a cyclic pattern of 20~s~(see~Fig.~\ref{fig:machinecycle}), 
%\item a reservoir containing oil, 
\item an accumulator containing oil and a fixed amount of gas in order
  to put the  oil under pressure, and \item a  controllable pump which
  can pump oil into the accumulator with rate 2.2 l/s.
\end{enumerate*}
% When the system is operating,  the machine consumes oil under pressure
% out of  the accumulator.  The  level of the  oil, and so  the pressure
% within the accumulator, can be controlled by pumping additional oil in
% the accumulator (increasing the  gas pressure).
%
The control  objective for switching the  pump on and off  is twofold:
first the level of oil in  the accumulator (and so the gas pressure)
shall be  maintained within  a safe  interval; second,  the controller
should try  to minimize the  (maximum and  average) level of  oil such
that the pressure in the system  is kept minimal.
%\smallpat{I guess we do not model the average, do we?}
We show how to model this system,
with varying constraints on pump  operation, as energy timed automata.
Thus our tool  chain may automatically synthesize  guaranteed safe and
optimal control strategies.

The HYDAC case was first considered in~\cite{CJLRR09} as a timed game
using the tool
\textsc{Uppaal-Tiga}~\cite{DBLP:conf/concur/CassezDFLL05,DBLP:conf/cav/BehrmannCDFLL07}
for synthesis.  Discretization of oil-level (and time) was used to
make synthesis feasible.  Besides limiting the opportunity of
optimality, the discretization also necessitated posterior
verification using PHAVER~\cite{phaver} to rule out possible resulting
incorrectness.  Also, identification of safety and minimal oil levels
were done by manual and laborious search.
In~\cite{DBLP:journals/tcst/MiremadiFAL15} the timed game models
of~\cite{CJLRR09} (rephrased as Timed Discrete Event Systems) are
reused, but BDDs are applied for compact representation of the
discrete oil-levels and time-points encountered during synthesis.
%
% Synthesizing Switching Logic to Minimize Long-Run Cost .. in ACM. Find
\cite{DBLP:conf/emsoft/JhaST11}  provides  a  framework  for  learning
optimal  switching  strategies  by   a  combination  of  off-the-shelf
numerical optimization and generalization  by learning.  The~HYDAC case
is  one  of the  considered  cases.   The~method offers  no  absolute
guarantees of hard constraints on energy-level, but rather attempts to
enforce these through the use of high penalties.
%
 % A" Hybrid" Approach for Synthesizing Optimal Controllers of Hybrid
 % Systems: A Case Study of the Oil Pump Industrial Example.
\cite{DBLP:conf/fm/ZhaoZKL12}  focuses exclusively  on the  HYDAC case
using a direct  encoding of the safety-  and optimality-constraints as
QE problems.  This  gives---like in our case---absolute guarantees.
However,  we~are  additionally  offering  a  complete  and  decidable
framework based  on energy  timed automata,  which extends  to several
other systems.
Moreover,  the controllers  we obtain  perform significantly  better
than   those  of   \cite{CJLRR09}  and   \cite{DBLP:conf/fm/ZhaoZKL12}
(respectively  up   to  22\%  and   16\%  better)  and   are  obtained
automatically by  our tool  chain combining Mjollnir  and Mathematica.
This   combination   permits   quantifier   elimination   and   formula
simplification  to be  done  in a  compositional  manner, resulting  in
performance surpassing  each tool individually.  We~believe  that this
shows  that our  framework  has a  level of  maturity  that meets  the
complexity of several relevant industrial control problems.

% \todo[inline]{Do we consider the following objective? the controller should try to minimise the level of oil such that the accumulated energy in the system is kept minimal.
% \NMshort{We can. With ``minimise the level of oil'' I guess you mean minimise the maximal level that will be reached, and not minimise the average level}}

Our work is related to controllability of (constrained) piecewise affine~(PWA)~\cite{BemporadFTM00} and hybrid systems~\cite{AlurCHH93}. In particular, the energy-constrained infinite-run problem is related to the so called \emph{stability problem} for PWAs. Blondel and Tsitsiklis~\cite{BlondelT99} have shown that verifying stability of autonomous piecewise-linear~(PWL) systems is \NP-hard, even in the simple case of two-component subsystems; several global properties (e.g.~global convergence, asymptotic stability and mortality) of PWA systems have been shown undecidable in~\cite{BlondelBKT01}.

\section{Energy Timed Automata} \label{sec:ETA}

\label{sec-eta}

\subsubsection{Definitions.}

%% Energy timed automata (also called \emph{priced} or \emph{weighted
%%   timed automata}~\cite{}) extend timed automata~\cite{AD94} with
%% piecewise-linear observer variables (i.e., variables that cannot be
%% used in guards and invariants).

%% We~write $\barbbQ$, $\bar\bbR$, $\barbbQ+$ and~$\barbbR+$ respectively
%% for the extensions $\bbQ\cup\{-\infty,+\infty\}$,
%% $\bbR\cup\{-\infty,+\infty\}$, $\bbQ+\cup\{+\infty\}$, and
%% $\bbR+\cup\{+\infty\}$. For $X\subseteq\barbbR$, we~let $\calI(X)$ be
%% the set of closed intervals of~$\bbR$ with bounds in~$X\cap
%% \barbbQ$. Notice that any interval~$E$ in~$\calI(\bbR)$ is bounded, while
%% $\calI(\barbbR)$ contains unbounded intervals.

%

%% For~$I=[a,b]\in
%% \calI(X)$, $c\in\bbR$ and $d\in\bbR+$, we~let $I+c=[a+c,b+c]$ and
%% $I\times d=[a\times d, b\times d]$ (which may not belong
%% to~$\calI(X)$).\NMshort{Not sure we need the last 2 definitions}

%% \NM{Let me add one more definition (as discussed with Kim this
%%   afternoon).  It is more general, allowing arbitrarily many
%%   clocks. Can be handled by energy relation, but Patricia says it
%%   might not work for computing upper bound. We can restrict to one
%%   clock at that time.}

%% \NM{Here we should add classical definitions for clock valuations,
%%   clock constraints... We should also remove unbounded intervals in
%%   def of~$\calI$ above}

Given a finite set~$\Cl$ of clocks, the set of \emph{closed clock
  constraints} over~$\Cl$, denoted~$\Constr(\Cl)$, is the set of
formulas built using
%obtained from the grammar
\(
g \coloncolonequals x \sim n \mid g\wedge g
\),
where $x$ ranges over~$\Cl$, $\mathord\sim$ ranges over
$\{\mathord\leq,\mathord\geq\}$ and $n$ ranges over~$\bbQ+$. That
a~clock valuation~$v\colon \Cl \to \bbR+$ satisfies a clock
constraint~$g$, denoted~$v\models g$, is defined in the natural way.
For~a~clock valuation~$v$, a~real~$t\in\bbR+$, and a
subset~$R\subseteq \Cl$, we~write $v+t$ for the valuation mapping each
clock~$x\in\Cl$ to~$v(x)+t$, and $v[R\to 0]$ for the valuation mapping
clocks in~$R$ to~zero and clocks not in~$R$ to their value
in~$v$. Finally we write $\mathbf{0}_X$ (or simply $\mathbf{0}$) for
the clock valuation assigning $0$ to every $x \in X$.

For $E\subseteq\bbR$, we~let $\calI(E)$ be the set of closed intervals
of~$\bbR$ with bounds in~$E\cap \bbQ$. Notice that any interval
in~$\calI(E)$ is bounded, for any~$E\subseteq\bbR$.

\begin{definition}
  An \emph{energy timed automaton} (\emph{ETA} for short;
  a.k.a. \emph{priced} or \emph{weighted timed
    automaton}~\cite{ALP01,BFH+01}) is a
  tuple~$\calA=\tuple{S,S_0,\Cl,\inv,\rate,T}$ where $S$ is a finite
  set of states, $S_0\subseteq S$ is the set of initial states,
  $\Cl$~is a finite set of clocks, $\inv\colon S \to \Constr(\Cl)$
  assigns invariants to states, $\rate\colon S \to \bbQ$ assigns rates
  to states, and $T\subseteq S\times \Constr(\Cl)\times \bbQ \times
  2^{\Cl}\times S$ is a finite set of transitions.

  An~\emph{energy timed path} (\emph{ETP}, a.k.a. \emph{linear energy
    timed automaton}) is an energy timed automaton for which $S$ can
  be written as $\{s_i \mid 0\leq i\leq n\}$ in such a way that
  $S_0=\{s_0\}$, and $T=\{(s_i,g_i,u_i,z_i,s_{i+1}) \mid 0\leq i<n\}$.
  We~additionally require that all clocks are reset on the last
  transition, i.e.,~$z_{n-1}=\Cl$.
  %%   It~is
  %% additionally required that for some~$i$, the guard~$g_i$ contains a
  %% conjunct $x\geq \epsilon$ for some clock~$x$ and some positive
  %% integer~$\epsilon$, and that $r_{n-1}=\Cl$.
  %% \NMshort{the former restriction is only used with uncertainties.
  %%   Should be added only later?}
\end{definition}

Let $\calA=\tuple{S,S_0,\Cl,\inv,\rate,T}$ be an ETA.
A~\emph{configuration} of $\calA$ is a triple $(\ell,v,w) \in S \times
(\bbR+)^\Cl \times \bbR$, where $v$ is a clock valuation, and $w$ is
the energy level.  Let $\tau = (t_i)_{0 \le i < n}$ be a finite
sequence of transitions, with $t_i = (s_i,g_i,u_i,z_i,s_{i+1})$ for
every $i$.  A~finite \emph{run} in $\calA$ on $\tau$ is a sequence of
configurations $\rho=(\ell_j,v_j,w_j)_{0\leq j\leq 2n}$ such that
there exists a sequence of delays~$(d_i)_{0\leq i<n}$ for which the
following requirements hold:
\begin{itemize}
\item for all $0\leq j<n$, $\ell_{2j}=\ell_{2j+1}=s_j$, and
  $\ell_{2n}=s_n$;
\item for all $0\leq j<n$, $v_{2j+1}=v_{2j}+d_j$ and
  $v_{2j+2}=v_{2j+1}[z_j\to 0]$;
%\item $v_{2j+2}=v_{2j+1}[z_i\to 0]$ for all $0\leq j<n$;
%\item $v_{2j+1}=v_{2j}+d_j$ for all $0\leq j<n$;
\item for all $0\leq j<n$, $v_{2j}\models \inv(s_j)$ and
  $v_{2j+1}\models \inv(s_j) \wedge g_j$;
\item for all $0\leq j<n$, % it holds
  $w_{2j+1}=w_{2j} + d_j\cdot r(s_j)$ and
  $w_{2j+2}=w_{2j+1} + u_j$.
\end{itemize}
We will by extension speak of runs read on ETPs (those runs will then
end with clock valuation $\mathbf{0}$).
% Such a run is initial whenever $s_0 \in S_0$ and $v_0 = \mathbf{0}_X$,
% and maximal whenever it cannot be extended further. 
The notion of infinite run is defined similarly.  Given~$E\in
\calI(\bbQ)$, such a run is said to satisfy energy constraint~$E$ if
$w_j\in E$ for all $j$.

\begin{example}
Fig.~\ref{fig-extp} displays an example of an ETP~$\calP$ and one of
its runs~$\rho$. Since no time will be spent in~$s_2$, we~did not
indicate the invariant and rate of that state.
%
%% The~total delay that can be elapsed for traversing~$\calP$
%% is~exactly~$1$, because of the guard~$x=1$ on the last
%% transition; the~delay spent in~$s_0$ must be at least~$0.25$ time
%% units, as~required by the guard~$y\geq 0.25$ on the first transition.
%
The~sequence~$\rho$ is a run of~$\calP$. Spending $0.6$ time units
in~$s_0$, the value of clock~$x$ reaches~$0.6$, and the energy level
grows to~${3+0.6\times 2=4.2}$; it~equals ${4.2 - 3=1.2}$ when
entering~$s_1$. Then $\rho$~satisfies
energy constraint~$[0;5]$.
%\NMshort{Should we develop more/less?}
\end{example}

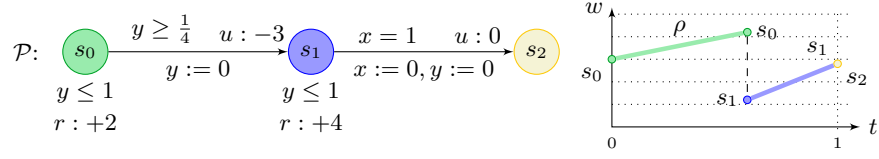
\begin{figure}[t]
  \centering
  \begin{tikzpicture}
    \begin{scope}[xscale=.75]
    \draw (0,0) node[rond6,vert] (a) {$s_0$}
      node[below=3mm] {$\stack{y\leq 1}{\rate:+2}$} node[left=5mm] {$\calP$:};
    \draw (4,0) node[rond6,bleu] (b) {$s_1$}
      node[below=3mm] {$\stack{y\leq 1}{\rate:+4}$};
    \draw (8,0) node[rond6,jaune,opacity=1] (c) {$s_2$};
    \draw (a) edge[-latex'] node[pos=.3,above] {$y\geq \frac14$}
      node[pos=.8,above] {$\update:-3$}
      node[below] {$y:=0$} (b);
    \draw (b) edge[-latex'] node[pos=.3,above] {$x=1$} node[below] {$x:=0,y:=0$}
      node[above, pos=.8] {$\update: 0$}  (c);
    \end{scope}

    \begin{scope}[yshift=-1cm,xshift=7cm,xscale=5,yscale=.5,scale=.6]
      \draw[latex'-latex'] (0,5.3) node[left] {$w$} |- (1.1,0) node[right] {$t$};
      \foreach \i in {1,2,3,4,5}
        {\draw[dotted] (0,\i) -- +(1.05,0);}
      \draw[dotted] (1,-.1) node[below] {$\scriptstyle 1$} -- +(0,5.3);
      \path (0,-.1)  node[below] {$\scriptstyle 0$};
      \draw (0,3) node[rond1,vert] (a) {} node[below left] {$s_0$};  
      \draw (0.6,4.2) node[rond1,vert] (b) {} node[right] {$s_0$};
      \draw (0.6,1.2) node[rond1,bleu] (c) {} node[left] {$s_1$};
      \draw (1,2.8) node[rond1,jaune] (d) {} node[above left] {$s_1$} node[below right] {$s_2$};
      \draw[line width=.6mm,fvert] (a) -- (b) node[midway,above,black] {$\rho$};
      \draw[dashed] (b) -- (c);
      \draw[line width=.6mm,fbleu] (c) -- (d);
    \end{scope}
  \end{tikzpicture}
  \caption{An energy timed path~$\calP$, and a run~$\rho$
    of~$\calP$ with initial energy level~$3$.}
  \label{fig-extp}
\end{figure}

%% \pat{\begin{definition} A \emph{simple energy timed automaton} is a
%%     tuple $\calA = \tuple{S,T,P}$, where $\tuple{S,T}$ is a finite
%%     graph (whose states ($S$) are called \emph{macro-states}) and $P$
%%     associates with each transition $t=(s,s')$ of~$\calA$ an atomic
%%     timed path.  A~simple energy timed automaton is \emph{flat} if the
%%     underlying graph~$\tuple{S,T}$~is (i.e., for any~$s\in S$, there
%%     is at most one non-empty path in the graph~$\tuple{S,T}$ from~$s$
%%     to~itself).
%% \end{definition}}
%% \todo[inline,color=orange!30!white]{
%% GV: The definition above assumes that the concept of ``flat graph'' is well-known.
%% I could not find it in graph theoretical literature (actually, sometimes
%% flat is used as a synonym for planar graph). I suggest to give an explicit definition 
%% of ``flat graph'' as a graph whose simple cycles are pairwise disjoint. 
%% This definition will be useful later in the algorithm when we will need to 
%% detect all simple cycles in the underlying graph.}

\begin{definition}
  A \emph{segmented energy timed automaton} (\emph{SETA} for short) is
  a tuple $\calA=\tuple{S,T,P}$ where $\tuple{S,T}$ is a finite graph
  (whose states and transitions are called \emph{macro-states} and
  \emph{macro-transitions}), $S_0$ is a set of initial macro-states,
  and $P$ associates with each macro-transition $t=(s,s')$ of~$\calA$
  an ETP with initial state~$s$ and~final state~$s'$. We~require that
  for any two different transitions~$t$ and~$t'$ of~$\calA$, the state
  spaces of~$P(t)$ and~$P(t')$ are disjoint and contain no
  macro-states, except (for~both~conditions) for their first and last
  states.
  
  A~SETA is \emph{flat} if the underlying graph~$\tuple{S,T}$~is
  (i.e., for any~$s\in S$, there is at most one non-empty path in the
  graph~$\tuple{S,T}$ from~$s$
  to~itself~\cite{CJ98,BIL06}).
  %\smallpat{It will furthermore be
  It~is called
    \emph{depth-1} whenever the
  graph~$\tuple{S,T}$ is tree-like, with only loops at
    leaves.
    %\fbox{OK ?}}
  %\NMshort{find other refs. Boigelot PhD?}).
\end{definition}

A~(finite or infinite) execution of a SETA is a (finite or infinite)
sequence of runs~$\rho=(\rho^i)_i$ such that for all~$i$, writing
$\rho^i=(\ell^i_j, v^i_j, w^i_j)_{0\leq j\leq 2n_i}$,
%the following conditions hold:
it~holds:
\begin{itemize}
\item $\ell^i_0$ and $\ell^i_{2n_i}$ are macro-states of~$\calA$, and
  $\rho^i$ is a run of the ETP $P(\ell^i_0, \ell^i_{2n_i})$;
\item $\ell^{i+1}_0=\ell^i_{2n_i}$ and $w^{i+1}_0=w^i_{2n_i}$.
\end{itemize}
%\smallpat{?? No. Should be added as a condition ??} Notice that the
%first condition implies $\ell^{i+1}_0=\ell^i_{2n_i}$. 
Hence a run in a SETA should be seen as the concatenation of
paths~$\rho^i$ between macro-states.
%\footnote{Notice that we can
%  ``flatten'' a SETA and get an ETA as initially defined.}
Notice also that each $\rho^i$ starts and ends with all clock values~zero, since all
clocks are reset at the end of each ETP, when a main state is entered. Finally, given an interval $E\in \calI(\bbQ)$, an~execution~$(\rho^i)_i$ satisfies energy constraint~$E$ whenever all
individual runs~$\rho^i$ do.

\begin{remark}
In contrast with ETAs, the class of SETAs is not closed under parallel composition. Intuitively, the ETA resulting from the parallel composition of two SETAs may not be ``segmented'' into a graph of energy timed-paths because the requirement that all clocks are reset on the last transition may not be satisfied. 
Furthermore, parallel composition does not preserve flatness because it may introduce nested loops.
%\begin{align*}
%\begin{tikzpicture}[location/.style={circle, draw=gray!90, thick}]
%\draw 
%      (0,0) node[location] (b) {$s_i$}
%      node[left=5mm] {$(S_i,T_i)\colon{}$};
%      \path[-latex, font=\small]
%      (b) edge[loop right] (b);
%\end{tikzpicture}
%\end{align*}
\end{remark}
%% \begin{remark}
%% Notice that our notion of flatness corresponds to that of~\cite{CJ99}\todo{was it for graphs or TAs?}.
%% \end{remark}

\begin{example}
  Figure~\ref{fig-exeta} displays a SETA~$\calA$ with two
  macro-states~$s_0$ and~$s_2$, and two macro-transitions.
  The~macro-self-loop on~$s_2$ is associated with the energy timed
  path of Fig.~\ref{fig-extp}.
  The~execution $\rho=\rho^1\cdot (\rho^2\cdot \rho^3)^\omega$ is an
  ultimately-periodic execution of~$\calA$. This~infinite execution
  satisfies the energy constraint $E=[0;5]$ (as well as the (tight)
  energy constraint~$[1;4.6]$).
\end{example}

\begin{figure}[tb]
  \centering
  \begin{tikzpicture}[location/.style={circle, draw=gray!90, thick}]
    % \path [use as bounding box,draw=red] (0,3.7) -- (8,-6);
    \begin{scope}[scale=1,yshift=.6cm]
      \draw 
      (0,0) node[location] (b) {$s_0$}
      node[font=\small,left=8mm] {$(S,T)\colon{}$}
      ($(b)+(1.5,0)$) node[location] (c) {$s_2$};
      \path[-latex, font=\small]
      (b) edge (c)
      (c) edge[loop right] (c)
      (b)+(-.8,0) edge (b);
    
    \path (3,0) node {\begin{tabular}{l} $P_{0,2}=$ \\[1cm]
        $P_{2,2} =$ \end{tabular}};
    \begin{scope}[xshift=4cm,yshift=.7cm]
      \draw (0,0) node[rond6,rouge] (b) {$s_0$} node[below=3mm] {$\scriptstyle r:0$} ;
      \draw (4.5,0) node[rond6,rouge] (c) {$s_2$};
      \draw (2.25, 0) node[rond6,jaune] (c1) {$s_1$} node[below=3mm] {$\scriptstyle r:-1$};
      \draw (b) edge[-latex'] node[below] {$\scriptstyle y:=0$} node[above,pos=.75] {$\scriptstyle u:+1$} (c1);
      \draw (c1) edge[-latex'] node[below] {$\smstack{x:=0}{y:=0}$} node[above] {$\scriptstyle x\geq 1$} (c);
    \end{scope}
    \begin{scope}[xshift=4cm,yshift=-.5cm]
      \draw (0,0) node[rond6,rouge] (c) {$s_2$} node[below=3mm] {$\scriptstyle r:+2$};
      \draw (2.25, 0) node[rond6,orange] (d1) {$s_3$} node[below=3mm] {$\scriptstyle r:+4$};
      \draw (4.5,0) node[rond6,rouge] (cbis) {$s_2$};
      \draw (c) edge[-latex'] node[above,pos=.25] {$\scriptstyle y\geq 0.25$}  node[above,pos=.8] {$\scriptstyle u:-3$}  node[below] {$\scriptstyle y:=0$} (d1);
      \draw (d1) edge[-latex'] node[above] {$\scriptstyle x=1$} node[below] {$\smstack{x:=0}{y:=0}$} (cbis);      
    \end{scope}
    \end{scope}

    \begin{scope}[yshift=-3cm,xshift=-1cm,xscale=3,yscale=.5,scale=.9]
      \draw[latex'-latex'] (0,5.3) node[left] {$w$} |- (3.1,0) node[right] {$t$};
      \foreach \i in {1,2,3,4,5}
        {\draw[dotted] (0,\i) -- +(3.05,0);}
      \foreach \i in {1,2,3}
        {\draw[dotted] (\i,-.1) node[below] {$\scriptstyle \i$} -- +(0,5.3);}
      \path (0,-.1)  node[below] {$\scriptstyle 0$};
      \draw (0,3) node[rond1,rouge] (a) {} node[below left] {$s_0$};  
      \draw (0.2,3) node[rond1,rouge] (b) {} node[right] {$s_0$};
      \draw (0.2,4) node[rond1,jaune] (c) {} node[left] {$s_1$};
      \draw (1,3.2) node[rond1,rouge] (d) {} node[above left] {$s_1$} node[below right] {$s_2$};
      \draw[line width=.6mm,frouge] (a) -- (b);
      \draw[dashed] (b) -- (c);
      \draw[line width=.6mm,djaune] (c) -- (d);
      \path (.5,0.5) node {$\rho^1$};
      
      \draw (1,3.2) node[rond1,rouge] (a') {};
      \draw (1.4,4) node[rond1,rouge] (b') {} node[right] {$s_2$};
      \draw (1.4,1) node[rond1,orange] (c') {} node[left] {$s_3$};
      \draw (2,3.4) node[rond1,rouge] (d') {} node[above left] {$s_3$} node[below right] {$s_2$};
      \draw[line width=.6mm,frouge] (a') -- (b');
      \draw[dashed] (b') -- (c');
      \draw[line width=.6mm,dorange] (c') -- (d');
      \path (1.5,0.5) node {$\rho^2$};

      \draw (2,3.4) node[rond1,rouge] (a'') {};
      \draw (2.6,4.6) node[rond1,rouge] (b'') {} node[right] {$s_2$};
      \draw (2.6,1.6) node[rond1,jaune] (c'') {} node[left] {$s_3$};
      \draw (3,3.2) node[rond1,rouge] (d'') {} node[above left] {$s_3$} node[below right] {$s_2$};
      \draw[line width=.6mm,frouge] (a'') -- (b'');
      \draw[dashed] (b'') -- (c'');
      \draw[line width=.6mm,dorange] (c'') -- (d'');
      \path (2.5,0.5) node {$\rho^3$};
    \end{scope}
    
  \end{tikzpicture}

  \caption{A SETA~$\calA = \protect\tuple{S,T,P}$ with
    implicit global invariant~$y\leq 1$; omitted discrete updates are
    assumed to be zero. The map $P$ associates with each $(s_i,s_j) \in
    T$ the ETP $P_{i,j}$. The~infinite sequence $\rho^1\cdot
    (\rho^2\cdot \rho^3)^\omega$ is an infinite execution of~$\calA$
    with initial energy level~$3$ satisfying the energy
    constraint~$E=[0;5]$.}
  \label{fig-exeta}
\end{figure}
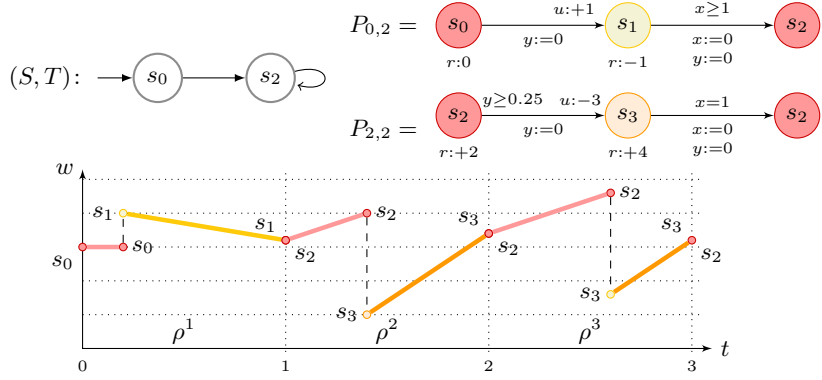

In this paper, we consider the following \emph{energy-constrained
  infinite-run problem}~\cite{BFLMS08}: given an energy timed
automaton~$\calA$ and a designated state~$s_0$, an energy
constraint~$E\in\calI(\bbQ)$ and an initial energy level~$w_0\in E$,
does there exist an infinite execution in~$\calA$ starting
from~$(s_0,\mathbf{0},w_0)$ that satisfies $E$?

In the general case, the energy-constrained infinite-run problem is
undecidable, even when considering ETA with only two
clocks~\cite{Mar11}.
In this paper, we~prove:
%\begin{theorem}\label{thm-ecir}
\begin{restatable}{theorem}{thmecir}\label{thm-ecir}
The energy-constrained infinite-run problem is decidable for flat
SETA.
%\pat{BOAF. Je dirais :
%The energy-constrained infinite-run problem is decidable for flat SETA.}
\end{restatable}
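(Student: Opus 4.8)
\emph{Proof plan.} The plan is to reduce the question to finitely many instances of quantifier elimination (QE) for linear real arithmetic, using flatness to keep the combinatorics finite.

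\smallskip\noindent\textbf{Step~1: ETPs as polyhedral energy relations.}
Fix the energy constraint $E=[a,b]\in\calI(\bbQ)$. For an ETP~$P$ with states $s_0,\dots,s_n$, a run from initial energy level~$x$ is entirely described by its delay vector $\vec d=(d_0,\dots,d_{n-1})\in(\bbR+)^n$: the guards and invariants of~$P$ translate to linear constraints on~$\vec d$, every intermediate energy level $w_j$ is an affine function of $(x,\vec d)$, and the final level equals $x+L(\vec d)$ with $L(\vec d)=\sum_i(d_i\,\rate(s_i)+u_i)$. Since $E$ is an interval, a run stays inside~$E$ iff $w_j\in E$ at each of the finitely many intermediate configurations, so the existence of a run of~$P$ from~$x$ to~$y$ staying in~$E$ is expressed by
\[
\varphi_P(x,y)\;:\;\exists\vec d\;\Bigl(\Phi(\vec d)\,\wedge\,\textstyle\bigwedge_j a\le w_j(x,\vec d)\le b\,\wedge\, y=x+L(\vec d)\Bigr),
\]
where $\Phi$ collects $\vec d\ge 0$ and the clock constraints. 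Eliminating~$\vec d$ turns $\varphi_P$ into a quantifier-free linear formula, which geometrically defines a convex polyhedron $R_P\subseteq E\times E$. Relations of this form are closed under composition (again the projection of a polyhedron, handled by QE), so every finite macro-path of the SETA, and in particular every simple macro-loop~$C$, carries a computable one-step polyhedral relation $R_C\subseteq E\times E$.

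\smallskip\noindent\textbf{Step~2: iterating one loop forever.}
For a simple macro-loop~$C$ let $W_C\subseteq E$ be the set of energy levels from which $C$ can be iterated infinitely often while staying in~$E$; by definition $W_C$ is the greatest fixpoint of the monotone operator $X\mapsto\{x\in E\mid\exists y\in X.\ (x,y)\in R_C\}$. I would show that $W_C$ is effectively QE-definable: since $R_C$ is compact (as $E$ is bounded), a routine compactness argument gives that $x\in W_C$ iff there is an infinite $E$-valued sequence $w_0=x,w_1,w_2,\dots$ with $(w_\ell,w_{\ell+1})\in R_C$ for every~$\ell$, and every Kleene approximant of the fixpoint is a closed subinterval of~$E$; hence $W_C$ is itself a closed interval, namely the largest interval $[\alpha,\beta]\subseteq E$ that is \emph{stable} under~$R_C$, i.e.\ satisfies $\forall w\bigl(\alpha\le w\le\beta\rightarrow\exists w'(\alpha\le w'\le\beta\wedge(w,w')\in R_C)\bigr)$. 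Applying QE to this last sentence yields a quantifier-free description of the stable pairs $(\alpha,\beta)$, from which the (rational) endpoints of~$W_C$ are obtained by a linear optimisation; in particular $W_C\neq\emptyset$ exactly when $R_C$ meets the diagonal $\{(w,w)\}$ (otherwise, being convex and compact, $R_C$ lies strictly on one side of it, forcing the energy to shift by at least a fixed positive amount at each iteration and hence to leave the bounded window~$E$).

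\smallskip\noindent\textbf{Step~3: assembling the run; the depth-1 case.}
Because $\tuple{S,T}$ is flat, its condensation is a DAG whose strongly connected components are simple cycles, so every infinite macro-path has the shape $\pi_0\,C_1^{k_1}\,\pi_1\cdots\pi_{m-1}\,C_m^{k_m}\,\pi_m\,C^{\,\omega}$, with simple connecting paths~$\pi_i$, simple loops $C_1,\dots,C_m,C$, length~$m$ bounded by the number of components, and counts $k_1,\dots,k_m\ge 0$. There are finitely many such skeletons and finitely many possible final loops~$C$, and for a fixed choice, a corresponding infinite execution from $(s_0,\mathbf 0,w_0)$ exists iff the energy can be carried from~$w_0$ to some $w^\ast\in W_C$ by chaining the relations $R_{\pi_0}, (R_{C_1})^{k_1}, R_{\pi_1},\dots,(R_{C_m})^{k_m}, R_{\pi_m}$ through intermediate values. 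The answer to the problem is the disjunction, over all skeletons and final loops, of these conditions. In the depth-1 sub-case the prefix is loop-free ($m=0$), so this is already a finite conjunction of QE-definable constraints, and decidability is immediate.

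\smallskip\noindent\textbf{The main obstacle.}
The delicate ingredient --- the part I expect to be hardest --- is the existential over the counts~$k_i$ of the \emph{prefix} loops, i.e.\ expressing the restricted transitive closure $(R_{C_i})^{*}=\bigcup_{k\ge 0}(R_{C_i})^{k}$ in linear arithmetic; in full generality the iterated composition of a polyhedral relation is not first-order definable (a loop acting on one linear piece as $w\mapsto\lambda w+\mu$ with $\lambda\neq 1$ already produces a $\lambda^{k}$ dependence). I would resolve this by a flat-acceleration argument tailored to the present one-dimensional, bounded-window setting: after iterating a simple loop the energy ranges over a compact interval on which the loop relation acts piecewise-affinely, so it suffices to consider counts up to a bound~$N_i$ computable from~$C_i$, $E$ and the already-computed target interval~$W_C$ --- an iteration that is monotone in energy fits inside~$E$ only boundedly often, a contracting one reaches any non-degenerate target interval within a computable number of steps, and an expanding one leaves~$E$ quickly. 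Replacing each $(R_{C_i})^{*}$ by the finite disjunction $\bigvee_{k\le N_i}(R_{C_i})^{k}$ turns the whole criterion into a single linear-arithmetic formula, which QE then decides.
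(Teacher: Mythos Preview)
Your Steps~1 and~2 track the paper closely: the binary energy relation of an ETP is a convex polytope obtained by QE, and your $W_C$ is exactly the paper's greatest fixpoint $\nu\calR_C^{-1}$, computed via the same ``stable interval'' formula (the paper's equation~\eqref{eq:post-fixpoint}). Your side observation that $W_C\neq\emptyset$ iff $R_C$ meets the diagonal is correct and is not made explicit in the paper.

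The gap is exactly where you flagged it, but your proposed fix does not close it. The trichotomy monotone\,/\,contracting\,/\,expanding is not well-defined for a polyhedral \emph{relation}, and it does not cover the case you actually need: $W_{C_i}\neq\emptyset$ (so $R_{C_i}$ meets the diagonal and no uniform-shift bound applies) while the incoming interval~$I$ is disjoint from~$W_{C_i}$. The paper's own running example is already of this kind: there $W_C=[2;4]\subsetneq E=[0;5]$, and from $w_0=3$ the relation allows both increasing and decreasing the energy, so no ``contracting''/``expanding'' label fits. Your appeal to the \emph{final} loop's target interval~$W_C$ to bound~$N_i$ is also misplaced---the bound for a prefix loop~$C_i$ has nothing to do with the final loop. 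And your worry about first-order definability of $(R_{C_i})^*$ is a red herring: you never need the full closure, only a finite union of iterates; the question is purely whether that union is finite.

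The paper's resolution is far simpler than the acceleration argument you anticipate. It proves a single dichotomy (Theorem~\ref{th:termination}): for any interval $I\in\calI(E)$, either $I\cap\nu\calR_{C_i}^{-1}\neq\emptyset$, or $\calR_{C_i}^n(I)=\emptyset$ for some finite~$n$. The proof is three lines: $I\cap\nu\calR^{-1}=\bigcap_n\bigl(I\cap(\calR^n)^{-1}(E)\bigr)$ is a nested intersection of compact intervals, so if empty, Cantor's theorem gives a finite~$n$ with $I\cap(\calR^n)^{-1}(E)=\emptyset$, whence $\calR^n(I)=\emptyset$. No a~priori bound and no case analysis are needed. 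The algorithm (Algorithm~\ref{alg:infiniterun}) simply tests the first branch, and in the second branch iterates $\calR_{C_i}$ on~$I$ until the image becomes empty, pushing each non-empty intermediate interval $\calR_{C_i}^k(I)$ out through the cycle's exit edges. Termination is guaranteed---not computed in advance---by compactness.
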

%\end{theorem}

%We~also prove (in Appendix~\ref{}):
%\begin{theorem}\label{thm-optU}
\begin{restatable}{theorem}{thmoptU}
\label{thm-optU}
  Given a fixed lower bound~$L$, the existence of an upper bound~$U$,
  such that there is a solution to the energy-constrained infinite-run
  problem for energy constraint $E = [L;U]$, is decidable for flat
  SETA. If~such~a~$U$ exists, then for depth-1 flat SETA, we can
  compute the least one.
\end{restatable}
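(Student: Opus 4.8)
The plan is to reduce the problem once more to quantifier elimination for linear real arithmetic, this time keeping the upper bound~$U$ as a free variable. Concretely, I would revisit the proof of Theorem~\ref{thm-ecir}: for a \emph{fixed} energy interval $E=[L;U]$, that proof reduces the existence of an infinite energy-constrained execution of a flat SETA~$\calA$ from $(s_0,\mathbf{0},w_0)$ to a sentence of linear real arithmetic, obtained by composing, along each of the finitely many ways an execution can traverse the flat graph (a finite run along ETPs, a finite number of iterations of intermediate loops, and -- since in a flat graph an infinite execution is eventually trapped in a single loop -- infinitely many iterations of one final loop), three kinds of building blocks: (i)~formulas $R_\pi(w,w')$ stating that the ETP $\pi$ on a macro-transition admits a run from energy level $w$ to energy level $w'$ staying inside $[L;U]$, obtained by eliminating the non-negatively constrained delays from a conjunction of linear constraints, the clocks having value $\mathbf 0$ at both endpoints; (ii)~formulas saying that a loop can be traversed from $w$ to $w'$ by arbitrarily many iterations within $[L;U]$; and (iii)~formulas $\mathit{Inf}_\pi(w)$ saying that a loop can be iterated forever from $w$ within $[L;U]$ -- again linear-arithmetic formulas, since the set of energy levels reachable by arbitrarily many iterations of a loop within $[L;U]$ is a closed interval whose endpoints are rational fixpoints of the piecewise-affine min/max energy-transfer functions of the loop. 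The key observation is that this whole construction is \emph{uniform in $U$}: performing it with $L$ a rational constant but $U$ a free variable, and then eliminating $w_0$ (or substituting its given value), yields a quantifier-free formula $\Phi(U)$ such that $\Phi(U)$ holds exactly when $[L;U]$ admits an infinite energy-constrained execution.

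Given $\Phi$, decidability of the existence of a suitable $U$ is immediate: $\exists U.\, \Phi(U)$ is a closed formula of linear real arithmetic, whose truth is decidable by quantifier elimination. I would also record the elementary monotonicity fact that the solution set $\mathcal{U}=\{U\in\bbR \mid \Phi(U)\}$ is upward closed -- any execution witnessing feasibility of $[L;U]$ also witnesses feasibility of $[L;U']$ for every $U'\geq U$ -- and that $\mathcal U\subseteq[L,\infty)$ since $[L;U]$ is empty for $U<L$. As $\Phi(U)$ is quantifier-free in the single variable $U$, it is a Boolean combination of linear inequalities in~$U$, hence describes $\mathcal U$ effectively as a finite union of intervals with rational endpoints; together with upward closedness this forces $\mathcal U$, when non-empty, to be $(U^\star,\infty)$ or $[U^\star,\infty)$ for a rational $U^\star$ that can be read off directly.

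It remains to show that for depth-1 flat SETA the value $U^\star$ is actually \emph{attained}, so that it is the least feasible upper bound and the computation above returns it exactly. Here I would exploit the simple shape of depth-1 flat SETA: the underlying graph is a tree with loops only at the leaves, so every infinite execution is a finite run along one of the finitely many root-to-leaf paths, followed by iterating the unique self-loop at that leaf forever; accordingly $\Phi(U)$ is a finite disjunction, one disjunct per leaf~$\lambda$, of the form ``the prefix is feasible in $[L;U]$ and reaches leaf energy $w_\lambda$'' $\wedge$ ``$\mathit{Inf}_\lambda(w_\lambda)$''. For one such disjunct, let $U^\star_\lambda$ be the infimum of the $U$'s satisfying it. As $U$ decreases to $U^\star_\lambda$, the set of admissible prefix-delay vectors forms a nested family of non-empty compact polytopes -- all clock and energy constraints being non-strict (closed) -- and, for each admissible $w_\lambda$, the set $\{w_\lambda \mid \mathit{Inf}_\lambda(w_\lambda)\}$ forms a nested family of non-empty closed intervals; intersecting these nested families shows feasibility is preserved at $U^\star_\lambda$. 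Hence each disjunct defines a closed up-set, so does their finite union $\Phi$, and therefore $U^\star\in\mathcal U$.

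The step I expect to be the real obstacle -- and the reason the ``compute the least $U$'' claim is stated only for depth-1 -- is precisely this attainment/closedness argument. For a general flat SETA an infinite execution may thread through a chain of loops, iterating each an unbounded number of times before moving on, so the infimum of the feasible $U$'s is a priori an infimum over execution prefixes of \emph{unbounded length}; turning it into a minimum would require a compactness argument across such growing prefixes (controlling the limits of arbitrarily long iterations through intermediate loops), rather than the easy compactness over a single bounded-length prefix that depth-1 provides. Everything else -- the construction of $\Phi(U)$ and the decidability of mere existence of $U$ -- goes through uniformly for all flat SETA.
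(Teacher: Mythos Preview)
Your plan works for the depth-1 case but has a genuine gap for the decidability claim over general flat SETA. The problem is building block~(ii): you assert that ``a loop can be traversed from $w$ to $w'$ by arbitrarily many iterations within $[L;U]$'' is a linear-arithmetic formula, but this is the transitive closure $\bigcup_{n\ge 1}\calR^n$ of the energy relation, and it is in general \emph{not} first-order definable over $(\bbR,+,<)$. A concrete obstruction: if the ETP labelling a loop has fully determined timing (say a single state with rate~$1$, invariant $x\le 1$, guard $x=1$, reset $x:=0$), then the energy relation is the function $w\mapsto w+1$ and its transitive closure is $w'-w\in\bbN_{\ge 1}$, which is not definable in linear real arithmetic. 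The proof of Theorem~\ref{thm-ecir} never produces such a formula either: Algorithm~\ref{alg:infiniterun} handles an intermediate loop by invoking Theorem~\ref{th:termination}, which for a \emph{fixed}~$U$ yields a finite bound~$n$ on the number of iterations to unroll --- but that~$n$ depends on~$U$, so the construction is not uniform in~$U$ as you claim.

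The paper avoids this via a different decomposition. It observes that an infinite run exists for \emph{some}~$U$ iff (a)~some cycle~$\calC$ admits, for some~$U$, a non-empty post-fixpoint $[a;b]$ of $(\calR_\calC^{[L;U]})^{-1}$ --- the least such~$a$, call it $a_{\min}^{\calC}$, is read off from the convex relation $\calR_\calC^\infty(a,b,U)$ --- and (b)~the first state of~$\calC$ can be reached from $(s_0,\mathbf 0,w_0)$ with energy at least $a_{\min}^{\calC}$ under the \emph{one-sided} constraint $[L;+\infty)$. Crucially, step~(b) no longer involves~$U$: with no upper bound, the effect of iterating an intermediate loop collapses to three cases (energy can strictly increase, can only be preserved, or must strictly decrease), so a simple forward labelling computes the supremum of reachable energy at each macro-state without any transitive-closure machinery. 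This separation --- cycle analysis parametrised by~$U$, prefix analysis with~$U$ removed --- is the idea your sketch is missing for general flat SETA.

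For depth-1 flat SETA your argument is essentially the paper's: with no intermediate loops only blocks~(i) and~(iii) are needed, both are genuinely linear in~$U$, and your compactness argument for attainment of $U^\star$ is correct and matches the paper's reliance on closed constraints.
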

%\end{theorem}
\noindent We~only sketch a proof of the former result, and
refer to~\cite{BacciBFLMR18long} for the full proof.

\subsubsection{Binary energy relations.}
\label{page-subsec:binary_energy_relations}

Let $\calP=\tuple{\{s_i \mid 0 \le i \le n\},\{s_0\},X,I,r,T}$ be an
ETP from $s_0$ to~$s_n$.
Let~$E\subseteq \calI(\bbQ)$ be an energy constraint. The~\emph{binary
  energy relation}~$\calR_\calP^{E} \subseteq E \times E$ for~$\calP$
under energy constraint~$E$ relates all pairs~$(w_0,w_1)$ for which
there is a finite run of~$\calP$ from~$(s_0,\mathbf{0},w_0)$
to~$(s_n,\mathbf{0},w_1)$ satisfying energy constraint~$E$.
This~relation is characterized by the following first-order
formula:
\begin{xalignat*}1
  \calR_\calP^E(w_0,w_1) \iff \exists (d_i)_{0\leq i<n} .\
  & \Phi_{\text{timing}} \wedge \Phi_{\text{energy}} \wedge
       %{} \\
    %&
    %w_0\in E \wedge w_1\in E \wedge
    w_1=w_0+\sum_{k=0}^{n-1}(d_k\cdot r(s_k) + u_k) 
    %% &
    %% \bigwedge_{0\leq i<n} \left[d_i \in d(s_i) \wedge
    %% \sum_{k=0}^{i} d_k \in g(s_i)\right] \wedge {}\\
    %% &\bigwedge_{0\leq i<n} w_0+\sum_{k=0}^{i}(d_k\cdot r(s_k))
    %% + \sum_{k=0}^{i-1}u_k \in E \wedge{} \\
    %% &\bigwedge_{0\leq i<n} w_0+\sum_{k=0}^{i}(d_k\cdot r(s_k))
    %%     + \sum_{k=0}^{i}u_k \in E.
\end{xalignat*}
where $\Phi_{\text{timing}}$ encodes all the timing constraints that
the sequence~$(d_i)_{0\leq i<n}$ has to fulfill (derived from guards
and invariants, by expressing the values of the clocks in terms
of~$(d_i)_{0\leq i<n}$), while $\Phi_{\text{energy}}$ encodes the
energy constraints (in~each state, the~accumulated energy must be
in~$E$).

\begin{wrapfigure}[10]{r}{3.6cm}
\centering
  \begin{tikzpicture}
    \begin{scope}[xshift=5cm,yshift=0cm,scale=.5]
      \path[use as bounding box] (-.2,-.2) -- (5.8,5.8);
      \draw[latex'-latex'] (5.8,0) node[right] {$w_0$}
        -| (0,5.8) node[left] {$w_1$};
      \foreach \i in {0,1,...,5}
      {\draw[dotted] (\i,0) -- +(0,5.6);
      \draw (\i,-0.5) node {\scriptsize{\i}};
      \draw[dotted] (0,\i) -- +(5.6,0);
      \draw (-0.5,\i) node {\scriptsize{\i}}; }
      \begin{scope}
        \path[clip] (0,0) -| (5.6,5.6) -| cycle;
        \fill[black!20!white] (2.5,3) -- (4.5,5) -- (3,2) -- (1,0) -- cycle;
        \draw[dashed] (2,0) -- (5,6);
        \draw[dashed] (1,0) -- (4,6);
        \draw[dashed] (1,0) -- (6,5);
        \draw[dashed] (0,.5) -- (6,6.5);        
      \end{scope}
      \begin{scope}
        \path[clip] (5,0) -| (5.6,5.6) -| (0,5) -| cycle;
        \fill[pattern=north west lines,opacity=.25] (0,0) -| (6,6) -| cycle;
      \end{scope}
    \end{scope}
  \end{tikzpicture}
%  \caption{Energy relation for~$\calP$ with $E=[0;5]$.}
%  \label{fig-erel}
\end{wrapfigure}
%\end{figure}
It is easily shown that $\calR_P^E$ is a closed, convex subset of
$E\times E$ (remember that we consider closed clock constraints); thus
it~can be described as a conjunction of a finite set of linear
constraints over $w_0$ and~$w_1$ (with non-strict inequalities), using
quantifier elimination of variables~$(d_i)_{0\leq i<n}$.
%An application of this can be found in Fig.~\ref{fig2}.

\begin{example}\label{ex-enerel}
  We illustrate this computation on the ETP of
  Fig.~\ref{fig-extp}. For energy constraint~$[0;5]$, the energy
  relation (after removing redundant constraints) reads as
\begin{xalignat*}1
  \calR_\calP^E(w_0,w_1) \iff \exists d_0 ,d_1.\ &
  d_0\in[0.25;1] \wedge d_1\in [0;1] \wedge d_0+d_1=1 \wedge {} \\
  & w_0\in[0;5] \wedge 
    w_0+2d_0\in [0;5] \wedge w_0+2d_0-3\in[0;5] \wedge{}\\
  &  w_1=w_0+2d_0+4d_1-3 \wedge w_1\in[0;5].
\end{xalignat*}
%Applying quantifier elimination,
%% we~end up with
%% \[
%% \calR_\calP^E(w_0,w_1) \iff
%% %w_0\geq 0 \wedge w_1\geq 0 \wedge
%% (w_1+2 \leq 2w_0\leq w_1+4) \wedge
%% (w_1-0.5\leq w_0\leq w_1+1).
%% \]
This simplifies to $(w_1+2 \leq 2w_0\leq w_1+4) \wedge (w_1-0.5\leq
 w_0\leq w_1+1)$. 
The~corresponding polyhedron is depicted above.
\end{example}

\subsubsection{Energy functions.}
We~now focus on properties of energy relations. First notice that
for any interval~$E\in\calI(\bbQ)$, the partially-ordered set
$(\calI(E),\supseteq)$ is $\omega$-complete, meaning that for any chain
$(I_j)_{j\in\bbN}$, with $I_j\supseteq I_{j+1}$ for all~$j$, the limit
$\bigcap_{j\in\bbN} I_j$ also belongs to~$\calI(E)$.
%\NMshort{I don't think this holds for $\subseteq$ and $\bigcup_j I_j$...}
By~Cantor's Intersection Theorem, if additionally each interval~$I_j$
is non-empty, then so is the limit $\bigcap_{j\in\bbN}
I_j$. 

With an energy relation~$\calR_\calP^E$, we~associate an
\emph{energy function} (also denoted with~$\calR_\calP^E$, or
simply~$\calR$, as long as no ambiguity may arise), defined for any
closed sub-interval~$I\in \calI(E)$ as
\(
\calR(I)=\{w_1 \in E\mid \exists w_0\in I.\ \calR(w_0,w_1)\}
\).
Symmetrically:
\[
\calR^{-1}(I)=\{w_0\in E \mid \exists w_1\in I.\ \calR(w_0,w_1)\}.
\]
Observe that $\calR(I)$ and $\calR^{-1}(I)$ also belong
to~$\calI(E)$ (because the relation~$\calR$ is closed and convex).
Moreover, $\calR$ and~$\calR^{-1}$ are non-decreasing: 
%\begin{lemma}
%  Let $I\subseteq J$ be two intervals in~$\calI(E)$.
%  Then
  for any two intervals $I$ and~$J$ in~$\calI(E)$ such that
  $I\subseteq J$, it~holds $\calR(I)\subseteq \calR(J)$ and
  $\calR^{-1}(I)\subseteq \calR^{-1}(J)$.
%\end{lemma}
%
Energy function~$\calR^{-1}$ also satisfies the following
continuity property:%
\begin{restatable}{lemma}{lemcontinuity}\label{lemma-continuity}
  Let $(I_j)_{j\in\bbN}$ be a chain of intervals of~$\calI(E)$, such
  that $I_j\supseteq I_{j+1}$ for all~$j\in\bbN$. Then
  $\calR^{-1}(\bigcap_{j\in\bbN} I_j) = \bigcap_{j\in\bbN} \calR^{-1}(I_j)$.
\end{restatable}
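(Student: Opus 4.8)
The plan is to prove the two inclusions separately. The inclusion $\calR^{-1}(\bigcap_{j} I_j) \subseteq \bigcap_{j} \calR^{-1}(I_j)$ is immediate from monotonicity: since $\bigcap_j I_j \subseteq I_j$ for every $j$, applying the already-established fact that $\calR^{-1}$ is non-decreasing gives $\calR^{-1}(\bigcap_j I_j) \subseteq \calR^{-1}(I_j)$ for every $j$, hence containment in the intersection. So the real content is the reverse inclusion $\bigcap_{j} \calR^{-1}(I_j) \subseteq \calR^{-1}(\bigcap_{j} I_j)$.

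For the reverse inclusion, first I would dispose of the trivial case: if some $I_j$ is empty, then $\calR^{-1}(I_j)=\emptyset$ and both sides are empty, so assume every $I_j$ is non-empty. Now take $w_0 \in \bigcap_j \calR^{-1}(I_j)$. For each $j$, the set $\calR(\{w_0\}) \cap I_j$ is non-empty (this is precisely what $w_0 \in \calR^{-1}(I_j)$ says). Moreover each such set is a closed interval: $\calR(\{w_0\})$ is a closed bounded interval because $\calR$ is closed and convex and lives inside the bounded set $E$ (as noted in the text, images under $\calR$ stay in $\calI(E)$), and intersecting two closed intervals of $\bbR$ yields a closed interval. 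These sets form a decreasing chain $\calR(\{w_0\}) \cap I_j \supseteq \calR(\{w_0\}) \cap I_{j+1}$ since $(I_j)_j$ is decreasing. By Cantor's Intersection Theorem (invoked just above the lemma in the excerpt), the intersection $\bigcap_j \big(\calR(\{w_0\}) \cap I_j\big) = \calR(\{w_0\}) \cap \bigcap_j I_j$ is non-empty. Picking any $w_1$ in this set witnesses $\calR(w_0,w_1)$ with $w_1 \in \bigcap_j I_j$, i.e.\ $w_0 \in \calR^{-1}(\bigcap_j I_j)$, as desired.

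The step that requires the most care — and the reason the statement is phrased for $\calR^{-1}$ rather than being a generic fact about monotone maps on $\omega$-complete posets — is the \emph{compactness} argument: without the boundedness of $E$ (so that all the intervals involved are compact, not merely closed) the decreasing chain of non-empty closed sets could have empty intersection, and the lemma would fail. So the main obstacle is really making sure all the interval manipulations stay within the bounded universe $\calI(E)$; the text has already recorded that $\calR(\{w_0\}), \calR(I), \calR^{-1}(I)$ all belong to $\calI(E)$ and that any interval in $\calI(E)$ is bounded, so these facts can simply be cited. Everything else is routine interval bookkeeping.
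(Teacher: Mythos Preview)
Your proposal is correct and follows essentially the same approach as the paper: monotonicity for the easy inclusion, and for the reverse inclusion the paper also considers the decreasing chain $\calR(\{w_0\})\cap I_j$ of non-empty closed intervals in~$\calI(E)$ and applies Cantor's Intersection Theorem to extract a common~$w_1$. Your treatment is slightly more careful (handling the empty case explicitly and flagging the role of boundedness), but the argument is the same.
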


%% \begin{proof}
%%   For any~$i\in\bbN$, we~have $I_i \supseteq \bigcap_{j\in\bbN} I_j$.
%%   By~monotonicity of~$\calR^{-1}$, we~get $\calR^{-1}(I_i)\supseteq
%%   \calR^{-1}(\bigcap_{j\in\bbN}I_j)$. It~follows $\bigcap_{i\in\bbN}
%%   \calR^{-1}(I_i) \supseteq \calR^{-1}(\bigcap_{j\in\bbN}I_j)$.

%%   Now, let $w_0\in\bigcap_{j\in\bbN} \calR^{-1}(I_j)$. Then for
%%   all~$i\in\bbN$, there exists $w_1^i$ such that
%%   $\calR(w_0,w_1^i)$. It~follows that for any~$i\in\bbN$,
%%   $\calR(\{w_0\}) \cap I_i$ is a non-empty interval of~$\calI(E)$.
%%   Applying Cantor's Intersection Theorem, we~get that
%%   $\bigcap_{i\in\bbN} \calR(\{w_0\}) \cap I_i$ is a non-empty interval
%%   of~$\calI(E)$. This~intersection can be rwritten as $\calR(\{w_0\})
%%   \cap \bigcap_{i\in\bbN} I_i$; hence there exists $w_1\in
%%   \bigcap_{i\in\bbN} I_i$ such that $\calR(w_0,w_1)$, which proves
%%   that $w_0\in \calR^{-1}(\bigcap_{i\in\bbN} I_i)$. \qed
%% \end{proof}

\subsubsection{Composition and fixpoints of energy functions.}
\label{page-subsec:composition_fixpoints}

%% \pat{I think this should be stated somewhere that $\nu \calR^{-1} \ne
%%   \emptyset$ implies that one can iterate the cycle from every
%%   relevant energy level while satisfying the energy constraint}

%% \NMshort{Define concatnation of timed paths?}

Consider a finite sequence of paths~$(\calP_i)_{1\leq i\leq k}$.
Clearly, the energy relation for this sequence can be obtained as the
composition of the individual energy relations $\calR_{\calP_k}^E
\circ \cdots \circ \calR_{\calP_1}^E$; the~resulting energy relation
still is a closed convex subset of~$E\times E$ that can be described
as the conjunction of finitely many linear constraints over~$w_0$
and~$w_1$.  As~a special case, we~write $(\calR_\calP^E)^k$ for the
composition of $k$ copies of the same relations~$\calR_\calP^E$.

Now, using Lemma~\ref{lemma-continuity}, we~easily prove that the
greatest fixpoint $\nu \calR^{-1}$ of~$\calR^{-1}$ in the complete
lattice $(\calI(E), \supseteq)$ exists and equals:
\[
\nu \calR^{-1} = \bigcap_{i\in\bbN} (\calR^{-1})^i(E).
\]
Moreover $\nu \calR^{-1}$ is a closed (possibly empty) interval.  Note
that $\nu \calR^{-1}$ is the maximum subset $S_\calR$ of~$E$ such
that, starting with any $w_0\in S_\calR$, it~is possible to
iterate~$\calR$ infinitely many times (that~is, for~any~$w_0\in
S_\calR$, there exists $w_1\in S_\calR$ such that
$\calR(w_0,w_1)$---any~such set $S$ is a post-fixpoint of $\calR^{-1}$, i.e.
$S\subseteq \calR^{-1}(S)$).

%In~the~end,
If $\calR$ is the energy relation of a cycle~$\calC$ in
the flat SETA, then $\nu \calR^{-1}$ precisely describes the set of initial
energy levels allowing infinite runs through~$\calC$ satisfying the
energy constraint~$E$.
%Now if $\calR$ is the energy relation for a cycle~$\calC$,
If~$\calR$ is described as
the conjunction~$\phi_\calC$ of a finite set of linear constraints, then
we~can characterize those intervals $[a,b]\subseteq E$ that constitute
a post-fixpoint for~$\calR^{-1}$ by the following first-order formula:
\begin{equation}
a \leq b \wedge a\in E \wedge b\in E \wedge \forall w_0\in[a;b].\
\exists w_1\in[a;b].\ \phi_\calC(w_0,w_1).
\label{eq:post-fixpoint}
\end{equation}

Applying quantifier elimination (to~$w_0$ and~$w_1$), the~above
formula may be transformed into a direct constraint on~$a$ and~$b$,
characterizing all post-fixpoints of~$\calR^{-1}$.  We~get a
characterization of~$\nu \calR^{-1}$ by computing the values of $a$ and~$b$
that satisfy these constraint and maximize~$b-a$.

\begin{example}\label{ex-iel-seta}
  We~again consider the flat SETA of Fig.~\ref{fig-exeta}, and consider the
  energy constraint~$E = [0;5]$. We~first focus on the cycle~$\calC$ on
  the macro-state~$s_2$:
  using the energy relation computed in Example~\ref{ex-enerel},
  %% the
%%   energy relation for this cycle can be written as
%% \[
%% \calR_\calC^E(w_0,w_1) \iff
%% %w_0\geq 0 \wedge w_1\geq 0 \wedge
%% (w_1+2 \leq 2w_0\leq w_1+4) \wedge
%% (w_1-0.5\leq w_0\leq w_1+1).
%% \]
our~first-order formula for the fixpoint then reads as follows:
\begin{multline*}
  0\leq a\leq b\leq 5 \wedge
  \forall w_0\in[a;b].\ \exists w_1\in[a;b].\\
%  w_0\geq 0 \wedge w_1\geq 0 \wedge
\bigl((w_1+2 \leq 2w_0\leq w_1+4) \wedge
(w_1-0.5\leq w_0\leq w_1+1)\bigr).
\end{multline*}
Applying quantifier elimination, we~end up with
\(
2\leq a\leq b\leq 4
\).
%This characterizes all post-fixpoints; t
The~maximal fixpoint then is~$[2;4]$.
%Now, the energy relation f
Similarly, for the path $\calP$ from~$s_0$ to~$s_2$:
%\begin{xalignat*}1
\begin{multline*}
\calR_\calP^E(w_0,w_1) \iff
\exists d_0, d_1.\ 0\leq d_0\leq 1 \wedge 0\leq d_1\leq 1 \wedge d_0+d_1\geq 1 \wedge{}\\
  0\leq w_0\leq 5 \wedge 0\leq w_0+1\leq 5 \wedge
  w_1=w_1+1-d_1 \wedge 0\leq w_1\leq 5
 \end{multline*}
%\end{xalignat*}
which reduces to $0\leq w_0\leq 4 \wedge w_0\leq w_1\leq w_0+1$.
Finally, the initial energy levels~$w_0$ for which there is an infinite-run
in the whole SETA are characterized by
%the following constraint:
\(
\exists w_1.\ (0\leq w_0\leq 4 \wedge w_0\leq w_1\leq w_0+1) \wedge (2\leq w_1\leq 4)
\),
which reduces to $1\leq w_0\leq 4$.
\end{example}

%% \begin{figure}[t]
%%   \centering
%%   \includegraphics[width=1\linewidth]{Fixpoint2}%
%%   \caption{Maximum Fixpoint of Energy Relation of Fig.~\ref{fig2}.}
%%   \label{fig3}
%% \end{figure}

\subsubsection{Algorithm for flat segmented energy timed automata.}

Following Example~\ref{ex-iel-seta}, we~now prove that we can solve
the energy-constrained infinite-run problem for any flat SETA.  The
next theorem is crucial for our algorithm:

\begin{restatable}{theorem}{thmterm} \label{th:termination}
  Let $\calR$ be the energy relation of an ETP $\calP$ with energy
  constraint~$E$, and let $I\in\calI(E)$.
  %be a closed sub-interval of $E$.
  Then
  either $I\cap\nu \calR^{-1} \neq \emptyset$ or $\calR^n(I)=\emptyset$
  for some~$n$.
\end{restatable}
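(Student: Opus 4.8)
The plan is to prove the contrapositive: assuming $\calR^n(I)\neq\emptyset$ for every $n\in\bbN$, I will exhibit a point of $I\cap\nu\calR^{-1}$. The obstacle to applying Cantor's Intersection Theorem directly is that the forward images $\calR^n(I)$ need not form a nested family; the remedy is to work instead with a descending chain of intervals whose intersection is \emph{exactly} $I\cap\nu\calR^{-1}$.

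Concretely, for $n\in\bbN$ I would set $C_n = I\cap(\calR^{-1})^n(E)$, where $(\calR^{-1})^0(E)=E$ and $(\calR^{-1})^{n+1}(E)=\calR^{-1}\bigl((\calR^{-1})^n(E)\bigr)$. Since the energy function $\calR^{-1}$ maps $\calI(E)$ into $\calI(E)$, each $(\calR^{-1})^n(E)$ is a closed bounded interval, and hence so is each $C_n$. Two facts then do the work. First, $(C_n)_n$ is a descending chain: $\calR^{-1}(E)\in\calI(E)$ gives $\calR^{-1}(E)\subseteq E$, and monotonicity of $\calR^{-1}$ yields $(\calR^{-1})^{n+1}(E)\subseteq(\calR^{-1})^n(E)$ by induction, whence $C_{n+1}\subseteq C_n$. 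Second, $C_n\neq\emptyset$ if and only if $\calR^n(I)\neq\emptyset$: viewing the energy function $\calR^n$ as the $n$-fold relational composition of $\calR\subseteq E\times E$, and using that the preimage operator of a composition of relations is the composition of the preimage operators, one gets $(\calR^{-1})^n(E)=\{w_0\mid\exists w_n.\ \calR^n(w_0,w_n)\}$ --- the memberships $w_0,w_n\in E$ being automatic because $\calR\subseteq E\times E$. Hence $C_n=\{w_0\in I\mid\exists w_n.\ \calR^n(w_0,w_n)\}$, which is empty exactly when $\calR^n(I)=\{w_n\mid\exists w_0\in I.\ \calR^n(w_0,w_n)\}$ is.

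Granting these, under the standing assumption $(C_n)_n$ is a descending chain of nonempty compact intervals of $\bbR$, so $\bigcap_n C_n\neq\emptyset$ by Cantor's Intersection Theorem; moreover $\bigcap_n C_n=I\cap\bigcap_n(\calR^{-1})^n(E)=I\cap\nu\calR^{-1}$ using the fixpoint characterization $\nu\calR^{-1}=\bigcap_n(\calR^{-1})^n(E)$ established above. This yields $I\cap\nu\calR^{-1}\neq\emptyset$ and closes the argument.

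I expect the only delicate point to be the second fact: one must keep track of the fact that the energy constraint $E$ is met ``for free'' along runs of $\calP$ (precisely because $\calR\subseteq E\times E$), which is what makes the backward-saturated sets $(\calR^{-1})^n(E)$ coincide with ``there is a length-$n$ run from $w_0$'', and hence lets nonemptiness of the forward image $\calR^n(I)$ translate into nonemptiness of $C_n$. Everything else is monotonicity plus compactness; in particular Lemma~\ref{lemma-continuity} is used only indirectly, through the already-established formula for $\nu\calR^{-1}$.
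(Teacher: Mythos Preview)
Your proof is correct and is essentially the paper's own argument: both intersect $I$ with the descending chain $(\calR^{-1})^n(E)=(\calR^n)^{-1}(E)$, observe that $I\cap(\calR^n)^{-1}(E)$ is nonempty iff $\calR^n(I)$ is, and invoke Cantor's Intersection Theorem together with $\nu\calR^{-1}=\bigcap_n(\calR^{-1})^n(E)$. The only cosmetic difference is that the paper argues directly (empty intersection forces an empty term) while you take the contrapositive.
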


%% \begin{proof}
%% %We claim that $\calR^i(I)\not=\emptyset$ for all $i \in \bbN$ implies
%% %$I\cap\nu \calR^{-1} \neq \emptyset$.  We~prove this by showing that
%% %assuming
%% Assume that $I\cap\nu \calR^{-1} = \emptyset$. Then:
%% %leads to a contradiction.
%% \begin{align*}
%% \emptyset &= I\cap\nu \calR^{-1} \\
%%   &= \textstyle I\cap \bigcap_{n \in \bbN} \bigl(\calR^{-1}\bigr)^{n}(E) \\
%%   &= \textstyle I\cap \bigcap_{n \in \bbN} \bigl(\calR^{n}\bigr)^{-1}(E) 
%%      \tag{by $\calR^{-1} \circ \calR^{-1} = (\calR \circ \calR)^{-1}$} \\
%%   &= \textstyle \bigcap_{n \in \bbN} \bigl( I \cap \bigl(\calR^{n}\bigr)^{-1}(E)
%%      \bigr)
%% \end{align*}
%% Note that $\big(I \cap (\calR^{n})^{-1}(E)\big)_{n \in \bbN}$ is a
%% decreasing sequence because
%% $\big((\calR^{-1})^{n}(E)\big)_{n \in \bbN}$~is. Therefore, by
%% Cantor's intersection theorem $I \cap \big(\calR^{n}\big)^{-1}(E)
%% = \emptyset$ for some $n \in \bbN$. But only elements
%% $w_0 \in \big(\calR^{n}\big)^{-1}(E)$ admit some $w_1 \in E$ such that
%% $\calR^{n}(w_0,w_1)$. Therefore $\calR^n(I) = \emptyset$.
%% %which is a contradiction.
%% \qed
%% \end{proof}

%We now show that
It~follows that the energy-constrained infinite-run problem is
decidable for flat SETAs.  The decision procedure traverses
the underlying graph of~$\calA$, forward propagating an initial energy
interval $I_0 \subseteq E$ looking for a simple cycle~$C$ such that
$\nu \calR_C^{-1} \cap I \neq \emptyset$, where $I \subseteq E$ is
the energy interval forward-propagated until reaching the cycle.
%
%Given a flat 1-clock energy timed automaton $\calA$ we may forward propagate an initial energy interval $I_0$ through the paths of $\mathcal{A}$ according to Algorithm~\ref{alg:infiniterun}.
%\begin{enumerate}
%  \item Forward the current energy interval $I$ through one of the
%    neighbouring paths $P$ applying the corresponding energy relation
%    $\calR(I)$;
%  \item  Whenever a  cycle $C$  is  met return $\mathsf{tt}$  if
%    $I\cap \nu \calR_C\not=\emptyset$ (a lower-upper-bound respecting
%    infinite run has been found);
%  \item Otherwise continue exploring with the finite non-empty sets of\\
%    $I, \calR(I), \calR^2(I), \ldots, \calR^n(I), \ldots$.
%\end{enumerate}
\begin{algorithm}[tb]
    \algsetup{linenodelimiter=.}
    \begin{algorithmic}[1]  
    \REQUIRE  A \emph{flat SETA} $\calA=\tuple{S,T,P}$; initial state $m_0 \in S$;
    energy interval $I_0$
    \STATE $W \gets \{(m_0, I_0, c)\}$ \COMMENT{initialize the waiting list}%\label{line:1}
    \WHILE{$W \neq \emptyset$} 
    	\STATE pick $\tuple{m,I,\flag} \in W$ 	\COMMENT{pick an element from the waiting list}%\label{line:3}
	\STATE $W \gets W \setminus \tuple{m,I,\flag}$ \COMMENT{remove the element from the waiting list}
	\IF[the node $m$ shall be explored without following a cycle]{$\flag = \bar{c}$}
		\FOR{\textbf{each} $(m,m') \in T$ that is not part of a simple cycle of $(S,T)$}%\label{line:6}
			\STATE $W \gets W \cup \{ \tuple{m', \calR_{P(m,m')}^E(I), c}\}$ \COMMENT{add this new task to the waiting list}%\label{line:7}
		\ENDFOR
	\ELSE[the node $m$ shall be explored by following a cycle]
		\IF{$m$ belongs to a cycle of $(S,T)$}
			\STATE let $\calC = (m_1,m_2)\cdots (m_k,m_{k+1})$ be the simple cycle s.t.\ $m = m_1 = m_{k+1}$
			\STATE let $\calR_\calC = \calR_{P(m_k,m_{k+1})} \circ \cdots \circ \calR_{P(m_1,m_{2})}$
			\COMMENT{energy relation of the cycle}
        			\IF[check if there is an infinite run via the cycle $C$]{$I \cap \nu \calR_\calC^{-1} \neq \emptyset$}
        				\RETURN $\mathtt{tt}$
        			\ELSE[the cycle can be executed only finitely many times]
				\STATE $W \gets W \cup \{(m, I, \bar{c})\}$ \COMMENT{add a new task to the waiting list}%\label{line:16}
        				\STATE $i \gets 0$ \COMMENT{initialize the number of cycle executions}
        				\WHILE[while $i$-th energy relation is satisfied]{$\calR_\calC^i(I) \neq \emptyset$}\label{line:18}
        					\FOR{$1 \leq j < k$} %\FOR{\!\textbf{each} $(m_j,m_{j+1}) \in C$}
						\STATE let $\calR_{\calP_j} = \calR_{P(m_j,m_{j+1})} \circ \cdots \circ \calR_{P(m_1,m_{2})}$ \COMMENT{unfold $C$ up to $m_{j+1}$}%\label{line:20}
        						\STATE $W \gets W \cup \{ \tuple{m_{j+1}, \calR_{\calP_j}(\calR_C^i(I)), \bar{c}}\}$%\label{line:21}
						\COMMENT{add a task to the waiting list}
        					\ENDFOR
        					\STATE $i \gets i + 1$ \COMMENT{increment the number of cycle executions}
        				\ENDWHILE\label{line:24}
        			\ENDIF
		\ELSE[$m$ doesn't belong to a cycle]
			\STATE $W \gets W \cup \{(m, I, \bar{c})\}$ \COMMENT{add a new task to the waiting list}%\label{line:27}
		\ENDIF
	\ENDIF
     \ENDWHILE
     \RETURN $\mathtt{ff}$ \COMMENT{no infinite run could be found}
    \end{algorithmic}
    \caption{Existence of energy-constrained infinite runs in flat SETA}
    \label{alg:infiniterun}\label{alg:infiniteruns}
\end{algorithm}
\newcommand{\node}{macro-state}
Algorithm~\ref{alg:infiniterun} gives a detailed description of the
decision procedure.
It~traverses the underlying graph $(S,T)$ of
the flat SETA~$\calA$, using a waiting list~$W$ to
keep track of the \node s that need to be further
explored. The~list~$W$ contains tasks of the form $\tuple{m,I,\flag}$
where $m \in S$ is the current \node, ${I \in\calI(E)}$~is the current
energy interval, and $\flag \in \{c, \bar{c}\}$ is a flag
indicating if $m$~shall be explored by following a cycle it belongs to
($\flag = c$), or proceed by exiting that cycle ($\flag
= \bar{c}$).
Theorem~\ref{th:termination} ensures termination of the \textbf{while}
loop of lines~\ref{line:18}-\ref{line:24}, whereas flatness ensures
the correctness of Algorithm~\ref{alg:infiniterun}.

It is worth noting that the flatness assumption for the
SETA~$\calA$ implies that the graph~$(S,T)$ has finitely many cycles
(each~macro-state belongs to at most one simple cycle of~$(S,T)$,
therefore the number of cycles is bounded by the number of
macro-states). As a consequence, Algorithm~\ref{alg:infiniterun}
performs in the worst case an exhaustive search of all cycles
in~$\calA$. The~technique does not trivially extend to SETAs with
nested cycles, because they may have infinitely many cycles.

\section{Energy Timed Automata with Uncertainties} \label{sec:ETAu}

The   assumptions   of   perfect   knowledge   of   energy-rates   and
energy-updates  are often  unrealistic, as  is the  case in  the HYDAC
oil-pump  control problem  (see~Section~\ref{sec-hydac}). Rather,  the
knowledge  of energy-rates  and  energy-updates comes  with a  certain
imprecision,  and the  existence of  energy-constrained infinite  runs
must take these into account in  order to be robust.  In~this section,
we~revisit the  energy-constrained infinite-run problem in  the setting
of imprecisions, by~viewing it as a two-player game problem.

\subsubsection{Adding uncertainty to ETA.}
%
%\begin{definition}
  An \emph{energy timed automaton with \textbf{uncertainty}}~(ETAu
  for~short) is a
  tuple~$\calA=\tuple{S,S_0,\Cl,\inv,\rate,T,\epsilon,\Delta}$, where
  $\tuple{S,S_0,\Cl,\inv,\rate,T}$ is an energy timed automaton, with
  $\epsilon\colon S\to \bbQ_{>0}$ assigning imprecisions to rates of
  states and $\Delta\colon T\to\bbQ_{>0}$ assigning imprecisions to
  updates of transitions.
%\end{definition}
%
This notion of uncertainty extends to
\emph{energy timed path with uncertainty}~(ETPu) and to
\emph{segmented energy timed automaton with uncertainty}~(SETAu).

\smallskip
Let $\calA=\tuple{S,S_0,\Cl,\inv,\rate,T,\epsilon,\Delta}$ be an ETAu,
and let $\tau = (t_i)_{0 \le i < n}$ be a finite sequence of
transitions, with $t_i = (s_i,g_i,u_i,z_i,s_{i+1})$ for every $i$.
A~finite \emph{run} in $\calA$ on $\tau$ is a sequence of
configurations $\rho=(\ell_j,v_j,w_j)_{0\leq j\leq 2n}$ such that
there exist a sequence of delays~$d=(d_i)_{0\leq i<n}$ for which the
following requirements hold:
\begin{itemize}
\item for all $0\leq j<n$, $\ell_{2j}=\ell_{2j+1}=s_j$, and
  $\ell_{2n}=s_n$;
\item for all $0\leq j<n$, $v_{2j+1}=v_{2j}+d_j$ and
  $v_{2j+2}=v_{2j+1}[z_j\to 0]$;
%\item $v_{2j+2}=v_{2j+1}[z_i\to 0]$ for all $0\leq j<n$;
%\item $v_{2j+1}=v_{2j}+d_j$ for all $0\leq j<n$;
\item for all $0\leq j<n$, $v_{2j}\models \inv(s_j)$ and
  $v_{2j+1}\models \inv(s_j) \wedge g_j$;
\item for all $0\leq j<n$, it holds that $w_{2j+1}=w_{2j} + d_j\cdot
  \alpha_j$ and $w_{2j+2}=w_{2j+1} + \beta_j$, where
  $\alpha_j\in[r(s_j)-\epsilon(s_j), r(s_j)+\epsilon(s_j)]$ and
  $\beta_j\in[u_j-\Delta(t_j), u_j+\Delta(t_j)]$.
\end{itemize}
% A    run    in    an    energy    timed    path    with    uncertainty
% $\calP=\tuple{S,S_0,\Cl,\inv,\rate,T,\epsilon,\Delta}$            with
% $S=\{s_i        \mid        0\leq         i\leq        n\}$        and
% $T=\{t_i=(s_i,g_i,u_i,z_i,s_{i+1}) \mid 0\leq i<n\}$  is a sequence of
% configurations  $\rho=(\ell_j,v_j,e_j)_{0\leq  j\leq  2n}$  such  that
% there exists a sequence of  delays~$d=(d_i)_{0\leq i<n}$ for which the
% following requirements hold:
% \begin{itemize}
% \item $\ell_{2j}=\ell_{2j+1}=s_j$ for all $0\leq j<n$, and $\ell_{2n}=s_n$;
% \item $v_{2j+2}=v_{2j+1}[z_i\to 0]$ for all $0\leq j<n$;
% \item $v_{2j+1}=v_{2j}+d_j$ for all $0\leq j<n$;
% \item $v_{2j}\models \inv(s_j)$ and $v_{2j+1}\models \inv(s_j) \wedge g_j$
%   for all $0\leq j<n$;
% \item     for      all     $0\leq     j<n$,     it      holds     that
%   $e_{2j+1}=e_{2j}        +        d_j\cdot       \alpha_j$        and
%   $e_{2j+2}=e_{2j+1}            +           \beta_j$,            where
%   $\alpha_j\in[r(s_j)-\epsilon(s_j),     r(s_j)+\epsilon(s_j)]$    and
%   $\beta_j\in[u_j-\Delta(t_j), u_j+\Delta(t_j]$.
% \end{itemize}
%
We say that $\rho$ is a possible outcome of $d$ along $\tau$, and that
$w_{2n}$ is a possible final energy level for $d$ along $\tau$, given
initial energy level $w_0$.  Note that due to uncertainty, a~given
delay sequence $d$ may have several possible outcomes (and
corresponding energy levels) along a given transition sequence $\tau$.
In~particular, we~say
that $\tau$ together with $d$ and initial energy level~$w_0$ satisfy
an energy constraint $E\in \calI(\bbQ)$ if any possible outcome run
$\rho$ for $t$ and $d$ starting with~$w_0$ satisfies~$E$. All these
notions are formally extended to ETPu.

Given an ETPu $\calP$, and a delay sequence $d$ for $\calP$ satisfying
a given energy constraint $E$ from initial level $w_0$, we denote by
$\calE^E_{\calP,d}(w_0)$ the set of possible final energy levels.
It~may be seen that $\calE^E_{\calP,d}(w_0)$ is a closed subset
of~$E$.

\begin{example}
  Figure~\ref{fig-unctp} is the energy timed path~$\calP$ of
  Fig.~\ref{fig-extp} extended with uncertainties of~$\pm 0.1$ on all
  rates and updates.  The~runs associated with path~$\calP$, delay
  sequence~$d=\tuple{0.6,0.4}$ and initial energy level~$w_0=3$
  satisfy the energy constraint~$E=[0;5]$. The~set
  $\calE^E_{\calP,d}(w_0)$ then is $[2.5; 3.1]$.
%
%  The sequence $\mathbf{\Gamma}$ summarizes all the runs ..
%  
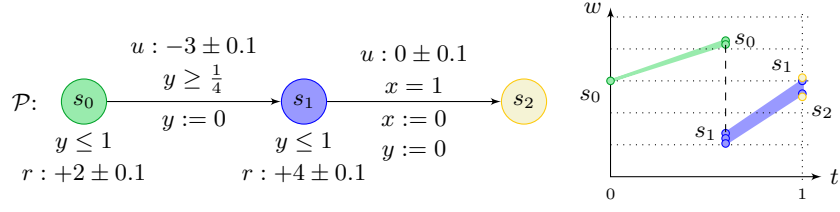
\begin{figure}[t]
  \centering
  \begin{tikzpicture}
  \begin{scope}[scale=.65]
  \draw (0,0) node[rond6,vert] (a) {$s_0$}
      node[below=3mm] {$\stack{y\leq 1}{\rate:+2\pm0.1}$} node[left=5mm] {$\calP$:};
    \draw (4.5,0) node[rond6,bleu] (b) {$s_1$}
      node[below=3mm] {$\stack{y\leq 1}{\rate:+4\pm0.1}$};
    \draw (9,0) node[rond6,jaune,opacity=1] (c) {$s_2$};
    \draw (a) edge[-latex'] node[pos=.5,above] {$y\geq \frac14$}
      node[pos=.5,above=5mm] {$\update:-3\pm0.1$}
      node[below] {$y:=0$} (b);
    \draw (b) edge[-latex'] node[pos=.5,above] {$x=1$} node[below] {$x:=0$} node[below=4mm] {$y:=0$}
      node[above=4mm, pos=.5] {$\update: 0\pm0.1$}  (c);
      \end{scope}

    \begin{scope}[yshift=-1cm,xshift=7cm,xscale=3,yscale=.5,scale=.85]
      \fill[fvert,line width=.4mm,draw=fvert] (0,3) -- (.6,4.26) -- (.6,4.14) -- cycle;
      \fill[fbleu, line width=.4mm,draw=fbleu] (0.6,1.36) -- (1,3) -- (1,2.6) -- (0.6,1.04) -- (0.6,1.36);
      \draw[latex'-latex'] (0,5.3) node[left] {$w$} |- (1.1,0) node[right] {$t$};
      \foreach \i in {1,2,3,4,5}
        {\draw[dotted] (0,\i) -- +(1.05,0);}
      \draw[dotted] (1,-.1) node[below] {$\scriptstyle 1$} -- +(0,5.3);
      \path (0,-.1)  node[below] {$\scriptstyle 0$};
      \draw (0,3) node[rond1,vert] (a) {} node[below left] {$s_0$};  
      \draw (0.6,4.26) node[rond1,vert] (b) {} node[right] {$s_0$};
      \draw (0.6,4.14) node[rond1,vert] (b') {};
      \draw (0.6,1.36) node[rond1,bleu] (c) {} node[left] {$s_1$};
      \draw (0.6,1.2) node[rond1,bleu] {};
      \draw (0.6,1.04) node[rond1,bleu] (c') {};
      \draw (1,3) node[rond1,bleu] (d) {} node[above left] {$s_1$};
      \draw (1,2.6) node[rond1,bleu] (d') {};
      \draw (1,3.1) node[rond1,jaune] (e) {};
      \draw (1,2.5) node[rond1,jaune] (e') {}  node[below right] {$s_2$};
      \draw[dashed] (b) -- (c');
      \end{scope}

\end{tikzpicture}
  \caption{An energy timed path~$\calP$ with
    uncertainty, and a representation of the runs corresponding to the
    delay sequence $\protect\tuple{0.6, 0.4}$ with initial energy level~$3$.}
%    The~range of final energy levels in~$s_2$ is $[2.5;3.1]$.}
  \label{fig-unctp}
\end{figure}
\end{example}

Now let $\calA=\tuple{S,T,P}$ be an SETAu and let $E$ be an energy
constraint.  A~(memoryless\footnote{For the infinite-run problem,
it can be shown that memoryless strategies suffice.})
\emph{strategy} $\sigma$ returns for any macro-configuration $(s,w)$
($s\in S$ and $w\in E$) a pair $(t,d)$, where $t=(s,s')$ is a
successor edge in $T$ and $d\in\bbR+^n$ is a delay sequence for the
corresponding energy timed path, i.e. $n=|P(t)|$.  A (finite or
infinite) execution of $(\rho^i)_i$ writing $\rho^i=(\ell^i_j, x^i_j,
w^i_j)_{0\leq j\leq 2n_i}$, is an outcome of $\sigma$ if the following
conditions hold:
\begin{itemize}
\item $s^i_0$ and $s^i_{2n_i}$ are macro-states of~$\calA$, and
  $\rho^i$ is a possible outcome of $P(s^i_0,s^i_{2n_i})$ for $d$
  where $\sigma(s^i_0,w^i_0)=\big((s^i_0,s^i_{2n_i}),d\big)$;
\item $s^{i+1}_0 = s^i_{2n_i}$ and $w^{i+1}_0=w^i_{2n_i}$.
\end{itemize}
%
% A  strategy  $\sigma$ is  \emph{winning}  with  respect to  an  energy
% constraint $E$ if  all runs $(\rho^i)_i$ that are  outcome of $\sigma$
% from  the  initial  configuration   $(s_0,0)$  satisfy  $E$.  Now  the
% \emph{LU-problem} in the setting of uncertainty is as follows: given a
% a segmented  energy timed automata with uncertainty and an energy
% constraint $E$,  does there  exist a wining  strategy with  respect to
% $E$.
Now  we may  formulate  the  infinite-run  problem in  the
setting of uncertainty:
%\begin{definition}
  for a SETAu~$\calA$, an energy constraint~$E\in\calI(\bbQ)$,
  and a macro-state~$s_0$ and an initial energy level~$w_0$
  %(with~$s_0$ macro-state of $\calA$ and $w_0 \in E$ energy level).  T
  the \emph{energy-constrained
    infinite-run problem} is to decide the existence of a strategy
  $\sigma$ for $\calA$ such that all runs $(\rho^i)_i$ that are
  outcome of $\sigma$ starting from configuration $(s_0,w_0)$ satisfy
  $E$?
%\end{definition}

% A  strategy  $\sigma$ is  \emph{winning}  with  respect to  an  energy
% constraint $E$ if  all runs $(\rho^i)_i$ that are  outcome of $\sigma$
% from  the  initial  configuration   $(s_0,0)$  satisfy  $E$.  Now  the
% \emph{LU-problem} in the setting of uncertainty is as follows: given a
% a \fbox{simple?}  energy timed automata with uncertainty and an energy
% constraint $E$,  does there  exist a wining  strategy with  respect to
% $E$.

\subsubsection{Ternary energy relations.}

% \todo{KGL: The ternary energy relation for an energy timed
%   path must be defined.  This will demonstrate the convexity.

%   Make the assumption that on any energy timed path some guard $g_i$
%   will compare some clock $x_j$ with a positive lower bound.  This
%   ensures that some minimum time will necessary elapse in compleeting
%   the path.

%   Recall the result from the LU-Robust note.
%   }

% \begin{align*}
%   R_P^H(w_0,u, v) & \iff \tag{energy relation with uncertainty}\\
% & \exists \, d_0, \dots, d_{n-1}. \,
%  d_0 > 0 \wedge \cdots \wedge d_{n-1} > 0 \wedge \sum_{i = 0}^{n-1} d_i = 1 \\
% &  0\leq w_0\leq H \wedge 0\leq w_1\leq H \,\,\wedge \\
% &  0\, \leq w_0+ \sum_{i=0}^{j} d_i \cdot (r_i - \varepsilon_i) + \sum_{i = 0}^{j-1} c_i \,\,\,(j=0 \ldots n-1) \,\,\wedge\\
% &  w_0+ \sum_{i=0}^{j} d_i \cdot (r_i + \varepsilon_i) + \sum_{i = 0}^{j-1} c_i \leq H \,\,\,(j=0 \ldots n-1) \,\,\wedge\\
% &  0 \leq w_0+ \sum_{i=0}^{j} d_i \cdot (r_i - \varepsilon_i) + c_i \,\,\,(j=0 \ldots n-1) \,\,\wedge\\
% &  w_0+ \sum_{i=0}^{j} d_i \cdot (r_i + \varepsilon_i) + c_i \leq H \,\,\,(j=0 \ldots n-1) \,\,\wedge\\
% &  u \leq w_0+ \sum_{i=0}^{n-1} d_i \cdot (r_i - \varepsilon_i) + c_i  \wedge {} \\
% & v \geq w_0+ \sum_{i=0}^{n-1} d_i \cdot (r_i+\varepsilon_i) + c_i
% \end{align*}

Let $\calP=(\{s_i \mid 0 \le i \le
n\},\{s_0\},X,I,r,T,\epsilon,\Delta)$ be an ETPu and let
$E\in\calI(\bbQ)$ be an energy constraint.  The ternary energy
relation $\calU^E_\calP\subseteq E\times E\times E$ relates all
triples $(w_0,a,b)$ for which there is a strategy $\sigma$ such that
any outcome of $\rho$ from $(s_0,\mathbf{0},w_0)$ satisfies $E$ and
ends in a configuration $(s_n,\mathbf{0},w_1)$ where $w_1 \in[a;b]$.
This relation can be characterized by the following first-order
formula:
\begin{multline*}
  \calU_\calP^E(w_0,a,b) \iff 
     \exists (d_i)_{0\leq i<n}.\
     \Phi_{\text{timing}} \wedge \Phi_{\text{energy}}^i \wedge {}\\
    \qquad w_0+\sum_{k=0}^{n-1}(r(s_k)\cdot d_k + u_k) +
            \sum_{k=0}^{n-1}([-\epsilon(s_k);\epsilon(s_k)]\cdot d_k
            + [-\Delta(t_k); \Delta(t_k)]) \subseteq [a;b]
\end{multline*}
where $ \Phi_{\text{energy}}^i$ encodes the energy constraints as the
inclusion of the interval of reachable energy levels in the energy
constraint (in~the same way as we do on the second line of the
formula). Interval inclusion can then be expressed as constraints on
the bounds of the intervals.
%This~way, we~get linear arithmetic
%expressions and no quantifier alternations.
%
It~is clear  that $\calU^E_{\calP}$ is a closed, convex
subset  of $E\times  E\times  E$  and can  be  described  as a  finite
conjunction  of  linear  constraints  over  $w_0,  a$  and  $b$  using
quantifier elimination.

\begin{example}\label{ex7}
  We   illustrate    the   above   translation   on    the   ETPu   of
  Fig.~\ref{fig-unctp}.   For energy  constraint  $[0;5]$, the  energy
  relation can be written as:
\begin{xalignat*}1
& \calU_\calP^E(w_0,a,b) \iff \exists d_0,d_1.\  d_0\in[0.25;1] \wedge d_1\in [0;1] \wedge d_0+d_1=1 \wedge w_0\in[0;5] \wedge{}\\
& \qquad w_0+ [1.9; 2.1]\cdot d_0 \subseteq [0;5]  \wedge{}\\
& \qquad w_0+ [1.9; 2.1]\cdot d_0 +[-3.1;-2.9]\subseteq [0;5]  \wedge{}\\
& \qquad w_0+ [1.9; 2.1]\cdot d_0 +[-3.1;-2.9] + [3.9;4.1]\cdot d_1\subseteq [0;5]  \wedge{}\\
& \qquad w_0+ [1.9; 2.1]\cdot d_0 +[-3.1;-2.9] + [3.9;4.1]\cdot d_1+[-0.1;0.1]\subseteq [a;b]\subseteq [0;5]] 
\end{xalignat*}
%\fbox{write the energy constraint}.
%
Applying quantifier elimination, we end up with:
\begin{xalignat*}1
\calU_\calP^E(w_0,a,b) \iff &
0\leq a\leq b\leq 5 \wedge b\geq a+0.6 \wedge
a-0.2 \leq w_0 \leq b+0.7 \wedge{} \\
& (4.87+1.9\cdot a)/3.9  \leq w_0 \leq (7.27+2.1\cdot b)/4.1
\end{xalignat*}
%\fbox{write the final constraint on $w_0$, $u$, and $v$.}
We~can use this relation in order to compute the set of initial energy
levels from which there is a strategy to end up in~$[2.5;3.1]$ (which
was the set of possible final energy levels in the example of
Fig.~\ref{fig-unctp}). We~get $w_0\in[37/15;689/205]$, which is
(under-)approximately $w_0\in[2.467;3.360]$.
%\pat{The rest of the example is postponed to section 3.3}
\end{example}

\subsubsection{Algorithm for SETAu.}

Let $\calA=(S,T,P)$ be a SETAu and let $E\in\calI(\bbQ)$ be an energy
constraint. Let $\calW\subseteq S\times E$ be the maximal set of
configurations satisfying the following:
\begin{align}
  (s,w)\in\calW \,\Rightarrow & \,\exists t=(s,s')\in T. \exists a,b \in
                       E. \nonumber\\
  & \,\,\,\calU^E_{P(t)}(w,a,b) 
    \wedge \forall w'\in[a;b]. (s',w')\in\calW
\label{Weq}
\end{align}
Now $\calW$ is easily shown to characterize the set of configurations
$(s,w)$ that satisfy the energy-constrained infinite-run problem.
%
% \begin{lemma}
%   Let $\calA$ be  an SETAu, $E\in\calI(\bbQ)$   an energy constraint,
%   and  $(s,w)$ a  given  configuration.  Then $(s,w)$ satisfies the
%   energy-constrained infinite-run problem if and only if $(s,w)\in\calW$.
% \end{lemma}
%
Unfortunately this characterization does not readily provide an
algorithm.  We~thus make the following restriction and show that
it leads to decidability of the energy-constrained infinite-run
problem:
\begin{description}
\item[(R)]\label{restr} in any of the ETPu $P(t)$ of $\calA$, on at least one of
  its transitions, some clock $x$ is compared with a positive lower
  bound. Thus, there is an (overall minimal) positive time-duration
  $D$ to complete any $P(t)$ of $\calA$.
\end{description}

\begin{restatable}{theorem}{thmuncert}
  \label{unc-thm}
%  \pat{Mettre dans le meme style que theoreme 3 ? I.e. The
%    energy-constrained infinite-run problem is decidable for SETA
%    satisfying  {\bf{(R)}}.}
%  Let $\calA$ be a SETAu satisfying {\bf{(R)}}, $E\in\calI(\bbQ)$ an
%  energy constraint, and $(s_0,w_0)$ an initial
%  macro-configuration. Then it is decidable whether the
%  energy-constrained infinite-run problem is satisfied.
 The energy-constrained infinite-run problem is decidable for SETAu
    satisfying~{\bf{(R)}}.
\end{restatable}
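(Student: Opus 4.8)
The plan is to mimic the strategy used for flat SETA (Theorem~\ref{thm-ecir}), replacing binary energy relations by the ternary relations $\calU^E_{P(t)}$ and the fixpoint computation of $\nu\calR^{-1}$ by a fixpoint computation for the set $\calW$ defined by~\eqref{Weq}. First I would recast~\eqref{Weq} as a monotone operator on subsets of $S\times E$: for $X\subseteq S\times E$, let $F(X)$ be the set of $(s,w)$ such that there is a successor edge $t=(s,s')$ and an interval $[a;b]\subseteq E$ with $\calU^E_{P(t)}(w,a,b)$ and $\{s'\}\times[a;b]\subseteq X$. Since $\calU^E_{P(t)}$ is closed and convex and monotone in the target interval, $F$ is monotone and, slicing by macro-state, maps families of closed intervals to families of closed intervals; moreover one checks a continuity property analogous to Lemma~\ref{lemma-continuity} using that $\calU$ is the strategic analogue of $\calR^{-1}$. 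Hence $\calW=\nu F=\bigcap_{i\in\bbN}F^i(S\times E)$, each slice $\calW\cap(\{s\}\times E)$ is a closed subinterval of $E$, and $\calW$ characterises exactly the winning configurations for the energy-constrained infinite-run problem; this last point is the observation already stated after~\eqref{Weq}.

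The core of the argument is to show this greatest fixpoint is reached after finitely many iterations, and this is where restriction~(R) enters. Because every ETPu $P(t)$ takes at least time $D>0$, any infinite execution performs infinitely many macro-transitions while total elapsed time diverges; combined with the bounded energy window $E=[L;U]$ and the fact that rates are bounded, the uncertainty accumulated over each macro-transition has a uniform positive ``width'' effect. Concretely, I would argue that on each macro-step the ternary relation forces the target interval $[a;b]$ to have length at least some fixed $\delta>0$ (the uncertainty $[-\epsilon;\epsilon]\cdot d + [-\Delta;\Delta]$ spreads by at least $\delta$ once $\sum d_k\geq D$), so that the ``controllable'' part of the dynamics is a contraction on a bounded interval only up to this irreducible spread. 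This yields a termination/finite-convergence statement: the descending chain $F^i(S\times E)$ stabilises, because the interval endpoints obtained by quantifier elimination move by rational amounts bounded below, inside the bounded region $E$ — the precise analogue of Theorem~\ref{th:termination}, now for the strategic (ternary) setting. I expect this quantitative convergence argument, i.e.\ extracting from~(R) a uniform lower bound on the spread and turning it into finite stabilisation of the fixpoint iteration, to be the main obstacle.

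Once finite convergence is established, decidability is immediate: each $F^i(S\times E)$ is, slice by slice, a finite union of (in fact a single) closed interval with rational endpoints described by a first-order formula over linear real arithmetic, obtained from the formula for $\calU^E_{P(t)}$ in~\eqref{Weq} by one round of quantifier elimination per iteration (eliminating $w$, $a$, $b$ and the delays $d_k$, exactly as in Example~\ref{ex7}). Iterating until two successive stages coincide terminates by the previous paragraph, and the algorithm then answers ``yes'' for the instance $(s_0,w_0)$ iff $w_0$ lies in the computed slice $\calW\cap(\{s_0\}\times E)$. Finally I would note that memoryless strategies suffice: a winning strategy simply picks, at each macro-configuration $(s,w)\in\calW$, a witnessing edge $t$ and delay sequence $d$ from the (satisfiable, by definition of $\calW$ as a post-fixpoint) formula~\eqref{Weq}, which keeps every outcome inside $\calW$ and hence inside $E$ forever.
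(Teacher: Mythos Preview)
Your approach diverges substantially from the paper's, and has a genuine gap at exactly the place you flag as ``the main obstacle.''

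First, a smaller point: the claim that $F$ maps families of single intervals to families of single intervals is false. For a macro-state $s$ with several outgoing edges $t_1=(s,s'_1),\dots,t_k=(s,s'_k)$, the slice $F(X)_s$ is the \emph{union} over these edges of the projections $\{w:\exists[a;b]\subseteq X_{s'_i},\ \calU^E_{P(t_i)}(w,a,b)\}$. Each projection is convex (an interval), but their union need not be. So from the very first iteration you must track finite unions of intervals, and you need an a~priori bound on how many.

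Second, and more seriously, your termination argument does not go through. The assertion that endpoints ``move by rational amounts bounded below'' is unjustified: the iterates $F^i(S\times E)$ form a descending chain of closed sets, but nothing prevents their endpoints from converging only in the limit. The lower bound $\delta$ on the width of target intervals that you extract from~(R) does not translate into a lower bound on how much the iterates shrink per step; a minimum spread on the outcome of one macro-step says nothing about how fast the backward iteration stabilises. The analogue of Theorem~\ref{th:termination} that you invoke is neither stated nor proved, and I do not see how to prove it along the lines you sketch.

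The paper sidesteps iteration entirely. From~(R) it draws the consequence that whenever $\calU^E_{P(t)}(w,a,b)$ holds one has $b-a\ge D\cdot\Delta_{\text{min}}$, and uses this to argue that every slice $\calW_s$ of the greatest fixpoint is a union of at most $N=\lceil|E|/(D\cdot\Delta_{\text{min}})\rceil$ closed intervals. This bound lets one rewrite~\eqref{Weq} as a \emph{single} first-order sentence of linear real arithmetic whose free variables are the $2N|S|$ interval endpoints $a_{s,j},b_{s,j}$ (this is formula~\eqref{Ueq}); quantifier elimination then decides satisfiability in one shot---no fixpoint iteration, no termination argument. The role of~(R) is thus not to force convergence of an iteration but to bound the descriptive complexity of $\calW$ so that it can be captured by a fixed-size formula.
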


It is worth noticing that we do \textbf{not} assume flatness of the
model for proving the above theorem. Instead, the minimal-delay
assumption~{\bf{(R)}} has to be made: it~entails that any stable set
is made of intervals whose size is bounded below, which provides an
upper bound on the number of such intervals. We~can then rewrite the
right-hand-size expression of~\eqref{Weq} as:
\begin{align}
    % \exists (a_{s,j}, b_{s,j})_{s\in S, 1\leq j\leq N}. &
    \bigwedge_{s\in S}\bigwedge_{1\leq j\leq N} & [a_{s,j};b_{s,j}]
    \subseteq E \wedge w_0 \in \bigvee_{1 \le j \le N}
    [a_{s_0,j};b_{s_0,j}] \wedge \forall
    w\in[a_{s,j};b_{s,j}]. \nonumber\\
    & \bigvee_{(s,s')\in T}\big[ \exists a,b\in E.
    \,\,\calU^E_{P(s,s')}(w,a,b)\wedge \bigvee_{1\leq k\leq
      N}([a;b]\subseteq [a_{s',k};b_{s',k}])\big]
      \label{Ueq}
    \end{align}

\begin{example}
We pursue on Example~\ref{ex7}.
If ETPu $\calP$ is iterated (as on the loop on state~$s_2$ of
Fig.~\ref{fig-exeta}, but now with uncertainty), the set $\calW$
(there is a single macro-state) can be captured with a single interval
$[a,b]$. We characterize the set of energy levels from which the path
$\calP$ can be iterated infinitely often while satisfying the energy
constraint $E = [0;5]$ using equation~\eqref{Ueq}, as follows:
\[
0\leq a\leq b\leq 5 \wedge
\forall w_0\in[a;b].~\calU_\calP^E(w_0,a,b).
\]
% we~can characterize all the intervals~$[a;b]$ that are post-fixpoints
% for~$\calR^{-1}$ as \pat{?? What is this $\calR$? Should that not be
%   $\calU_\calP^E$??}
% \[
% 0\leq a\leq b\leq 5 \wedge
% \forall w_0\in[a;b]. \calR(w_0,a,b).
% \]
We~end up with
\(
2.435\leq a \wedge b\leq 3.635 \wedge b\geq a+0.6
\),
so that the largest interval is $[2.435;3.635]$ (which can be compared
to the maximal fixpoint $[2;4]$ that we obtained in
Example~\ref{ex-iel-seta} for the same cycle without uncertainty).
\end{example}

As in the setting without uncertainties, we~can also synthesize an
(optimal) upper-bound for the energy constraint:

%% \subsection{Synthesis of optimal upper bound}

%% For the (optimal) upper-bound synthesis problem, we have the following
%% results in the setting of uncertainty. Though restricting to simple
%% cycles, the result is highly useful for our case study:

\begin{restatable}{theorem}{thmubuncert}
  % Let $\calA=(S,T,P)$ be an SETAu satisfying {\bf{(R)}} with the graph
  % $(S,T)$ being a single cycle.  Let $L\in\bbQ$ be an energy lower
  % bound, and let $(s_0,w_0)$ be an initial macro-configuration. Then
  % the existence of an upper energy bound $U$, such the
  % energy-constrained infinite-run problem is satisfied for the energy
  % constraint $[L;U]$ is decidable.  Furthermore, one can compute the
  % least upper bound, if there is one. \pat{Go to depth-1 flat SETAu?}
  Let $\calA=(S,T,P)$ be a depth-1 flat SETAu satisfying {\bf{(R)}}.
  Let $L\in\bbQ$ be an energy lower bound, and let $(s_0,w_0)$ be an
  initial macro-configuration. Then the existence of an upper energy
  bound $U$, such that the energy-constrained infinite-run problem is
  satisfied for the energy constraint $[L;U]$ is decidable.
  Furthermore, one can compute the least upper bound, if there is one.
\end{restatable}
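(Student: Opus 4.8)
The plan is to encode, as a single formula $\Psi(w_0,L,U)$ of linear real arithmetic with $U$ occurring free, the statement ``the energy-constrained infinite-run problem from $(s_0,w_0)$ is solvable for the energy constraint $[L;U]$'', and then to apply quantifier elimination and read off the least admissible $U$. The starting point is structural: since $\calA$ is depth-1 flat, its graph $(S,T)$ is a finite tree with at most one simple cycle attached at each leaf, so every infinite execution is a finite descent through the tree followed by infinitely many turns around one leaf cycle. Hence, writing $\calW_s\subseteq[L;U]$ for the set of energy levels $w$ with $(s,w)$ satisfying the infinite-run problem — the set $\calW$ of~\eqref{Weq} split by macro-state — one can describe the sets $\calW_s$ bottom-up along the tree.

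At a leaf $\ell$, fold its cycle $\calC_\ell$ into a single ternary game relation $\calU^E_{\calC_\ell}$ (obtained, as in Section~\ref{sec:ETAu}, by composing the per-ETPu relations through a universal quantification over intermediate energy levels, which preserves closedness and convexity); then $\calW_\ell$ is the greatest post-fixpoint of the operator $I\mapsto\{w\mid\exists a,b.\ [a;b]\subseteq I\wedge\calU^E_{\calC_\ell}(w,a,b)\}$. Since this operator sends closed intervals to closed intervals (projection of a closed bounded convex set), its greatest fixpoint is a single closed (possibly empty) sub-interval of $[L;U]$, characterised — exactly as for a cycle without uncertainty in~\eqref{eq:post-fixpoint} — by a first-order post-fixpoint condition together with width maximisation. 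At an internal node $s$ with successors $s_1,\dots,s_m$, $w\in\calW_s$ iff $\exists i\,\exists a,b.\ \calU^E_{P(s,s_i)}(w,a,b)\wedge[a;b]\subseteq\calW_{s_i}$; since every outcome interval $[a;b]$ is connected, it must lie inside one connected component of $\calW_{s_i}$, so $\calW_s$ is again a finite union of closed intervals, each a projection of a convex set. The crucial point is that the number of components stays bounded \emph{independently of $U$}: it equals $1$ at each leaf and only accumulates over successors as one climbs, hence is at most the number of leaves of the tree, a purely structural quantity. (This is where depth-1 flatness is used; restriction~\textbf{(R)} is used as in Theorem~\ref{unc-thm} to make the leaf-cycle fixpoints effectively computable.)

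Fixing $N$ to be the number of leaves, one can therefore introduce real variables $(a_{s,j},b_{s,j})_{s\in S,\,1\le j\le N}$ and write, in the spirit of~\eqref{Ueq} but now with $N$ a fixed structural constant rather than one depending on $E$, a single first-order formula $\Psi(w_0,L,U)$ stating: each $[a_{s,j};b_{s,j}]\subseteq[L;U]$; at each leaf $\ell$ the family $\{[a_{\ell,j};b_{\ell,j}]\}_j$ is a post-fixpoint of the operator above; at each internal $s$, every interval of the $s$-family is mapped by $\calU^E_{P(s,s_i)}$, for some successor $s_i$, into the union of the $s_i$-family; and $w_0\in\bigcup_j[a_{s_0,j};b_{s_0,j}]$. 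Soundness holds because any family satisfying these conditions is, component-by-component and by induction up the tree, contained in the true $\calW_s$ (a post-fixpoint at a leaf lies below the greatest post-fixpoint $\calW_\ell$, and the inductive step is immediate); completeness holds by taking the families to be the $\calW_s$ themselves, which have at most $N$ components each. All constituents — the relations $\calU^E_\calP$, their cyclic compositions, and interval inclusions — are expressible in linear real arithmetic by the arguments of Section~\ref{sec:ETAu}, so $\Psi$ is such a formula.

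Finally, substituting the given rationals for $w_0$ and $L$ and eliminating every quantifier yields a quantifier-free formula in the single variable $U$, i.e.\ a finite union of intervals with rational endpoints; emptiness of this set is decidable, which settles existence. As enlarging $U$ only relaxes the constraint — any witnessing strategy for $[L;U]$ still witnesses $[L;U']$ for $U'\ge U$ — the set is upward closed, hence of the form $[U^\ast;\infty)$ (closed at $U^\ast$ since all clock and energy constraints are non-strict), and $U^\ast$, the least admissible upper bound, is read off directly, mirroring Theorems~\ref{unc-thm} and~\ref{thm-optU}. The hard part is the step preceding quantifier elimination: showing that the winning sets are finite unions of intervals whose \emph{number} depends on $\calA$ alone and not on $U$, so that a single $U$-parametric formula of bounded size exists — this rests on the connectedness of the leaf-cycle stable sets and on connected outcome intervals lying inside single components of the successor sets.
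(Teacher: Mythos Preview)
Your proof is correct and follows the same overall strategy as the paper: parametrise the winning sets $\calW_s$ by a bounded family of closed intervals whose number does not depend on $U$, express the infinite-run condition as a first-order sentence of linear real arithmetic with $U$ free, and apply quantifier elimination to obtain a constraint on $U$ alone from which the least admissible $U$ is read off.

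The paper reaches the same formula by a slightly different route. Instead of treating the tree directly, it observes that a depth-1 flat SETAu decomposes into finitely many pieces of the form ``one path $\calP$ followed by one leaf cycle $\calC$'' (concatenating all the ETPs along each root-to-leaf branch into a single ETPu~$\calP$). Each such piece has just two macro-states, so a \emph{single} interval variable $[a_{s'};b_{s'}]$ at the leaf suffices, and the final answer is a disjunction over pieces. You instead keep interval variables at every macro-state and bound their number by the number of leaves, relying on the fact---which the paper also uses, though it leaves the justification implicit---that each leaf winning set $\calW_\ell$ is a single closed interval. Both produce a formula of size independent of $U$; yours carries more variables but avoids the explicit path-concatenation step, while the paper's decomposition keeps each sub-formula to two interval variables. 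One minor remark: your structural bound on the number of components does not in fact use restriction~\textbf{(R)}; the paper's reliance on~\textbf{(R)} stems from the general bound of Theorem~\ref{unc-thm}, not from anything specific to the depth-1 argument.
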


\section{Case Study}
\label{sec-hydac}

\subsubsection{Modelling the Oil Pump System.}
\newcommand{\on}{\texttt{On}}
\newcommand{\off}{\texttt{Off}}

In this section we describe the characteristics of each component of
the HYDAC case, which we then model as a~SETA.
\begin{trivlist}
\item \emph{The Machine.} The oil consumption of the machine is
  cyclic. One cycle of consumptions, as given by HYDAC, consists of
  $10$ periods of consumption, each having a duration of two seconds,
  as~depicted in Figure~\ref{fig:machinecycle}. Each period is
  described by a rate of consumption~$m_r$ (expressed in litres per
  second).
%and a duration~$D$ (expressed in seconds).
  The~consumption rate is subject to noise: if the mean consumption
  for a period is $c\;\si{\litre/\second}$ (with~$c \geq 0$) its
  actual value lies within $[\max(0,c -\epsilon); c + \epsilon]$,
  where $\epsilon$ is fixed to $0.1\;\si{\litre/\second}$.

\item \emph{The Pump.} The pump is either \on\ or \off, and we assume
  it is initially \off\ at the beginning of a cycle.  While it is~\on,
  it~pumps oil into the accumulator with a rate~$p_r =
  2.2\;\si{\litre/\second}$.  The~pump is also subject
  to timing constraints, which prevent switching it on and off too often. 
%Additionally, the control of the pump
  %must respect the following constraint: it must elapse at least two
  %seconds between two consecutive pump switches, i.e., if it is
  %turned \on\ (resp.\ \off) at time $t$, it must stay \on\
  %(resp.\ \off) at least until time $t + 2$.
%
\item \emph{The Accumulator.} The volume of oil within the accumulator
  will be modelled by means of an energy variable~$v$.
  %depending on the
  %value $m_r \pm \epsilon$ (the~rate of consumption of the machine)
  %and $p_r$ (the rate of incoming oil from the pump).
%
  Its~evolution is given by the differential inclusion $dv/dt -u\cdot
  p_r\in -[m_r+\epsilon; m_r-\epsilon]$ (or~$-[m_r+\epsilon;0]$ if
  $m_r-\epsilon<0$), where $u\in\{0,1\}$ is the state of the pump.
%  where
%  $m_r^{\bowtie}(\epsilon) = m_r \bowtie \epsilon$ if $m_r > 0$ and
%  $m_r$ otherwise.
\end{trivlist}
The controller must operate the pump (switch it on and~off) to ensure
the following requirements:
\begin{enumerate*}[label=(R\arabic*)]
\item the level of oil in the accumulator must always stay within the
  safety bounds $E = [V_{\min}; V_{\max}]=[4.9;25.1]\litr$
  %\footnote{The HYDAC company
  %  has fixed $V_{\min} = 4.9\litr$ \si{\litre} and $V_{\max} = 25.1\litr$.}
\item the average level of oil in the accumulator is kept as low as possible.
%at the end of each machine cycle, the level of oil in the
%  accumulator must ensure the controllability of the following cycle.
\end{enumerate*}
%% Clearly, to make sure that the safety bounds are met, one needs to
%% determine an initial volume of the accumulator $v \in E$ which will
%% make possible the above control problem.
% to control the pump so that it the system will never reach an unsafe configuration. 

%\smallskip
By modelling the oil pump system as a SETA~$\mathcal{H}$, the above
control problem can be reduced to finding a deterministic schedule
that results in a safe infinite run in~$\mathcal{H}$. Furthermore, we
are also interested in determining the minimal safety interval~$E$,
i.e., finding interval bounds that minimize~$V_{\max} - V_{\min}$,
while ensuring the existence of a valid controller for~$\mathcal{H}$.

As a first step in the definition of~$\calH$, we~build an ETP
representing the behaviour of the machine, depicted on Fig.~\ref{fig-setaconsum}.
\begin{figure}[ht]
  \centering
  \begin{tikzpicture}[xscale=1.3]
    \begin{scope}
\everymath{\scriptstyle}
\path[use as bounding box] (-.4,.3) -- (9.4,-.5);
\draw (0,0) node[rond5,orange] (a) {} node {$0$} node[below=3mm] {$x\leq 2$};
\draw (1,0) node[rond5,orange] (b) {} node {$-1.2$} node[below=3mm] {$x\leq 2$};
\draw (2,0) node[rond5,orange] (c) {} node {$0$} node[below=3mm] {$x\leq 2$};
\draw (3,0) node[rond5,orange] (d) {} node {$0$} node[below=3mm] {$x\leq 2$};
\draw (4,0) node[rond5,orange] (e) {} node {$-1.2$} node[below=3mm] {$x\leq 2$};
\draw (5,0) node[rond5,orange] (f) {} node {$-2.5$} node[below=3mm] {$x\leq 2$};
\draw (6,0) node[rond5,orange] (g) {} node {$0$} node[below=3mm] {$x\leq 2$};
\draw (7,0) node[rond5,orange] (h) {} node {$-1.7$} node[below=3mm] {$x\leq 2$};
\draw (8,0) node[rond5,orange] (i) {} node {$-0.5$} node[below=3mm] {$x\leq 2$};
\draw (9,0) node[rond5,orange] (j) {} node {$0$} node[below=3mm] {$x\leq 2$};
\draw[-latex'] (a) -- (b) node[midway,above] {$x=2$} node[midway,below] {$x:=0$};
\draw[-latex'] (b) -- (c) node[midway,above] {$x=2$} node[midway,below] {$x:=0$};
\draw[-latex'] (c) -- (d) node[midway,above] {$x=2$} node[midway,below] {$x:=0$};
\draw[-latex'] (d) -- (e) node[midway,above] {$x=2$} node[midway,below] {$x:=0$};
\draw[-latex'] (e) -- (f) node[midway,above] {$x=2$} node[midway,below] {$x:=0$};
\draw[-latex'] (f) -- (g) node[midway,above] {$x=2$} node[midway,below] {$x:=0$};
\draw[-latex'] (g) -- (h) node[midway,above] {$x=2$} node[midway,below] {$x:=0$};
\draw[-latex'] (h) -- (i) node[midway,above] {$x=2$} node[midway,below] {$x:=0$};
\draw[-latex'] (i) -- (j) node[midway,above] {$x=2$} node[midway,below] {$x:=0$};
    \end{scope}
  \end{tikzpicture}
  \caption{The ETP representing the oil consumption of the machine.}
  \label{fig-setaconsum}
  \medskip

  \begin{tikzpicture}
    \everymath{\scriptstyle}
\path[use as bounding box] (-.4,.3) -- (6.4,-.5);
\draw (0,0) node[rond5,orange,opacity=.9,dashed] (a) {} node {$-m$} node[below=3mm] {$x\leq 2$};
    \draw (2,0) node[rounded corners=2mm,minimum width=1cm,minimum height=5mm,rouge] (b) {$p-m$} node[below=3mm] {$x\leq 2$};
    \draw (4,0) node[rounded corners=2mm,minimum width=1cm,minimum height=5mm,vert] (c) {$-m$} node[below=3mm] {$x\leq 2$};
    \draw (6,0) node[rond5,jaune,opacity=.9,dashed] (d) {$-m'$} node[below=3mm] {$x\leq 2$};
    \draw[-latex'] (a) -- (b);
    \draw[-latex'] (b) -- (c);
    \draw[-latex'] (c) -- (d) node[midway,above] {$x=2$} node[midway,below] {$x:=0$};
  \end{tikzpicture}
  \caption{An ETP for modelling the pump}
  \label{fig-setapump}
\end{figure}
In~order to fully model the behaviour of our oil-pump system, one
would require the parallel composition of this ETP with another
ETP representing the pump. The~resulting ETA would not be a flat SETA,
and is too large to be handled by our algorithm with uncertainty.
%\textcolor{red}{
Since it~still provides interesting
results, we~develop this (incomplete) approach in 
Appendix~\ref{app:hydacfull}.
%the long version~\cite{} of this article.}

Instead, we~consider an alternative model of the pump, which only allows
to switch it on and off once during each 2-second slot. This~is
modelled by inserting, between any two states of the model of
Fig.~\ref{fig-setaconsum}, a~copy of the ETP depicted on
Fig.~\ref{fig-setapump}. In~that ETP, the state with rate~$p-m$
models the situation when the pump is~on. Keeping the pump off for the
whole slot can be achieved by spending delay~zero in that state.
%
%% By relaxing the latency requirement between two consecutive switches
%% of the pump we can model the HYDAC oil pump system as a flat SETA with
%% a single clock, namely $x$. We will present two versions of the HYDAC
%% oil pump system.
%
We~name~$\calH_1=\tuple{M,T,P_1}$ the SETA made of a single
macro-state equipped with a self-loop labelled with the ETP above.

In~order to represent the timing constraints of the pump
switches, we~also consider a second SETA
model~$\calH_2=\tuple{M,T,P_2}$ where the pump can be operated only
during every other time slot. This amounts to inserting the ETP of
Fig.~\ref{fig-setapump} only after the first, third, fifth, seventh
and ninth states of the ETP of Fig.~\ref{fig-setaconsum}.

%% The second version of the HYDAC system, the SETA $\mathcal{H}_2 = (M,T,P_2)$, has the same underlying graph of $\mathcal{H}_1$, and $P_2(\off_0, \off_{0}) = \mathcal{Q}_0 \cdots \mathcal{Q}_8$ where $\mathcal{Q}_i = \calP_i$ if $i \in 2\N$, and $\overline{\calP}_i$ otherwise, where $\overline{\calP}_i$ is defined
%% as follows.
%% \begin{center}
%% \begin{tikzpicture}
%%     \draw (0,0) node[rond6,fill=red!40!white,inner sep=1.1mm,font=\small] (a) {$\off_i$}
%%       node[below=4mm] {$\stack{x \leq D_i}{\rate: - m_r^i}$} 
%%       node[left=5mm] {$\overline{\calP}_i$:};
%%     \draw (5,0) node[rond6,jaune,font=\small] (d) {$\off_{s(i)}$};
%%     %
%%     \draw (a) edge[-latex'] node[near start,above] {$x=D_i$} node[below] {$x:=0$}
%%       node[near end, above] {$\update: 0$}  (d);
%%   \end{tikzpicture}
%% \end{center} 
%% By having each path of type $\calP_i$ followed by a path of type $\overline{\calP}_{s(i)}$, the SETA $\mathcal{H}_2$ models a pump that can be switched \on\ at most once every two machine periods. Since the duration of each machine period is $\geq 2$, this implicitly constrains the pump to stay \off\ for at least $2$ seconds; however, the time where the pump stays \on\ can't exceed the duration $D_i$ of the current machine period.

We~also consider extensions of both models with uncertainty
$\epsilon=0.1\;\si{\litre/\second}$ (changing any negative rate~$-m$
into rate interval $[-m-\epsilon;-m+\epsilon]$, but changing rate~$0$
into $[-\epsilon;0]$). We~write~$\calH_1(\epsilon)$
and~$\calH_2(\epsilon)$ for the corresponding models.

%% In order to take into consideration the uncertainty of the machine rate
%% consumptions, the above two SETAs  can be straightforwardly
%% extended to model uncertainties in the consumptions rates\footnote{By
%%   associating imprecision $\epsilon = 0.1\;\si{\litre/\second}$ with
%%   each state of the self-cycle.}.

\subsubsection{Synthesizing controllers.}

For each model, we synthesize minimal upper bounds~$U$ (within the
interval~$[V_{\min};V_{\max}]$) that admit a solution to the
energy-constrained infinite-run problem for the energy constraint $E =
[V_{\min};U]$. Then, we~compute the greatest stable interval~$[a;b] \subseteq
[L;U]$ of the cycle witnessing the existence of an $E$-constrained
infinite-run.  This~is done by following the methods described
in Sections~\ref{sec:ETA} and~\ref{sec:ETAu} where quantifier 
elimination is performed using Mjollnir~\cite{Monniaux10}.

Finally for each model we synthesise \emph{optimal} strategies that,
given an initial volume $w_0 \in [a,b]$ of the accumulator, return a
sequence of pump activation times~$t_i^{\text{on}}$ and~$t_i^{\text{off}}$ to be 
performed during the cycle.  
This~is performed in two steps: first we encode the set of safe
\emph{permissive strategies} as a quantifier-free first-order linear formula 
having as free variables $w_0$, and the times $t_i^{\text{on}}$ and $t_i^{\text{off}}$.
The formula is obtained by relating $w_0$, and the times $t_i^{\text{on}}$ and $t_i^{\text{off}}$
with the intervals~$[L;U]$ and $[a;b]$ and delays~$d_i$ as prescribed by the
energy relations presented in Sections~\ref{sec:ETA} and~\ref{sec:ETAu}. We use 
Mjollnir~\cite{Monniaux10} to eliminate the existential quantifiers on the delays~$d_i$.
Then, given an energy value~$w_0$ we determine an optimal safe strategy for~it
(i.e.,~some timing values when the pump is turned on and~off) as the solution of the 
optimization problem that minimizes the average oil volume in the tank during one consumption cycle
subject to the \emph{permissive strategies} constraints. To~this~end, we~use the function 
\texttt{FindMinimum} of Mathematica~\cite{Mathematica} to minimize the non-linear cost function 
expressing the average oil volume subject to the linear constraints obtained above.
%This~is performed in two steps: first, using Mjollnir~\cite{Monniaux10}, we get a safe
%\emph{permissive strategy} as a linear constraint relating $w_0$,
%the intervals~$[L;U]$ and $[a;b]$, and the times $t_i^{\text{on}}$
%and $t_i^{\text{off}}$. We~then pick those safe delays that minimize
%the average oil volume in the tank during one consumption cycle
%(we~use the function \texttt{FindMinimum} of Mathematica~\cite{Mathematica}
%to minimize this non-linear cost function subject to the constraints obtained above).
Fig.~\ref{fig:strategies} shows the resulting strategies:
there, each horizontal line at a given initial oil level indicates the
delays (green intervals) where the pump will be running.

%The resulting strategies are displayed on Fig.~\ref{fig:strategies}:
%there, each horizontal line at a given initial oil level indicates the
%delays (green intervals) where the pump will be running.

%Each strategy measures the volume
%just once at the beginning of each cycle and play the corresponding
%``local strategy'' until the beginning of the next machine cycle.

\begin{figure}[t]
\centering
%\begin{subfigure}[b]{0.45\textwidth}
% \includegraphics[width=1 \textwidth]{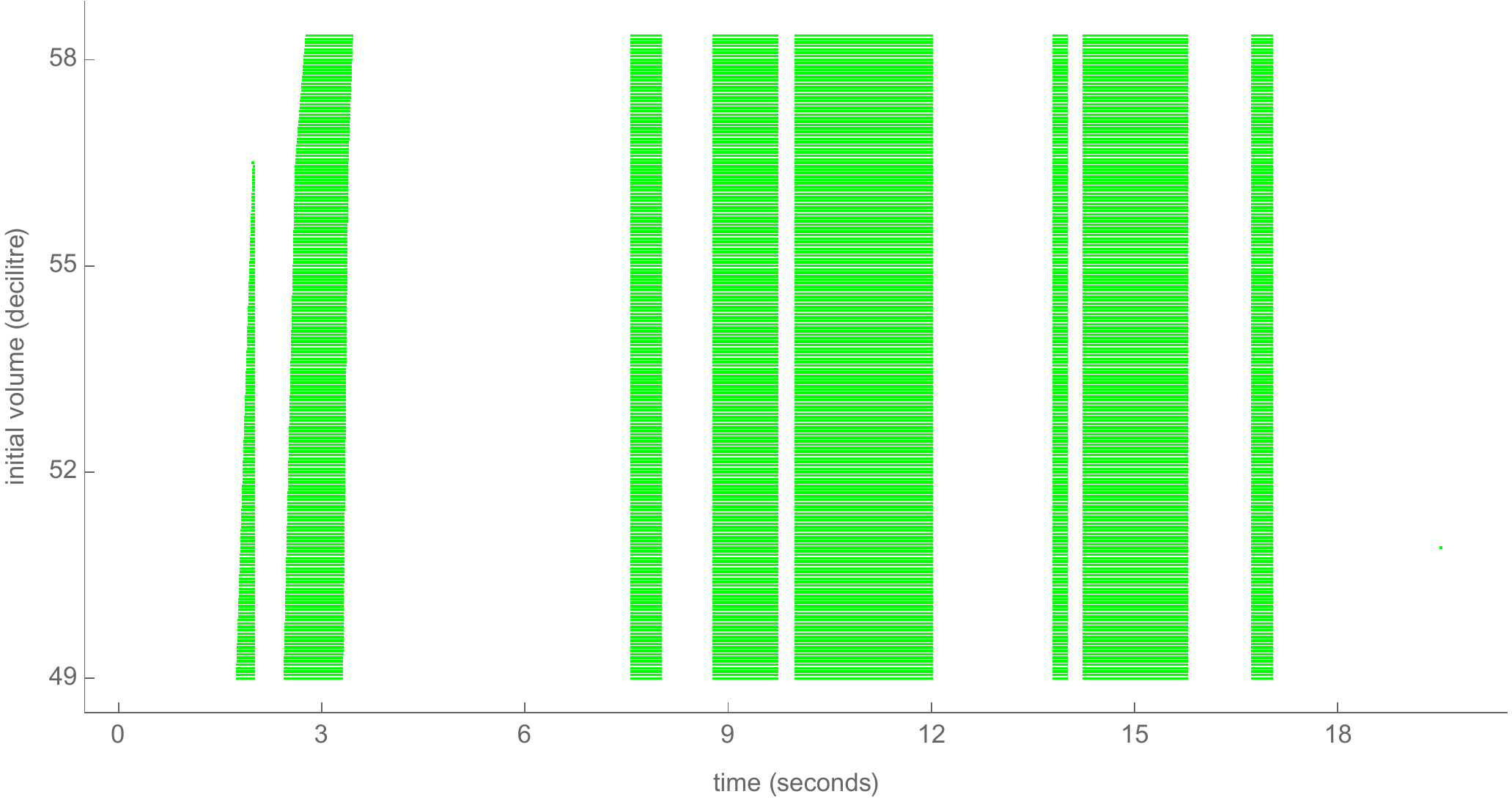}
%\caption{Strategy for $\mathcal{H}_1$} \label{fig:h1strategy}
%\end{subfigure}
%\quad
%\begin{subfigure}[b]{0.45\textwidth}
% \includegraphics[width=1 \textwidth]{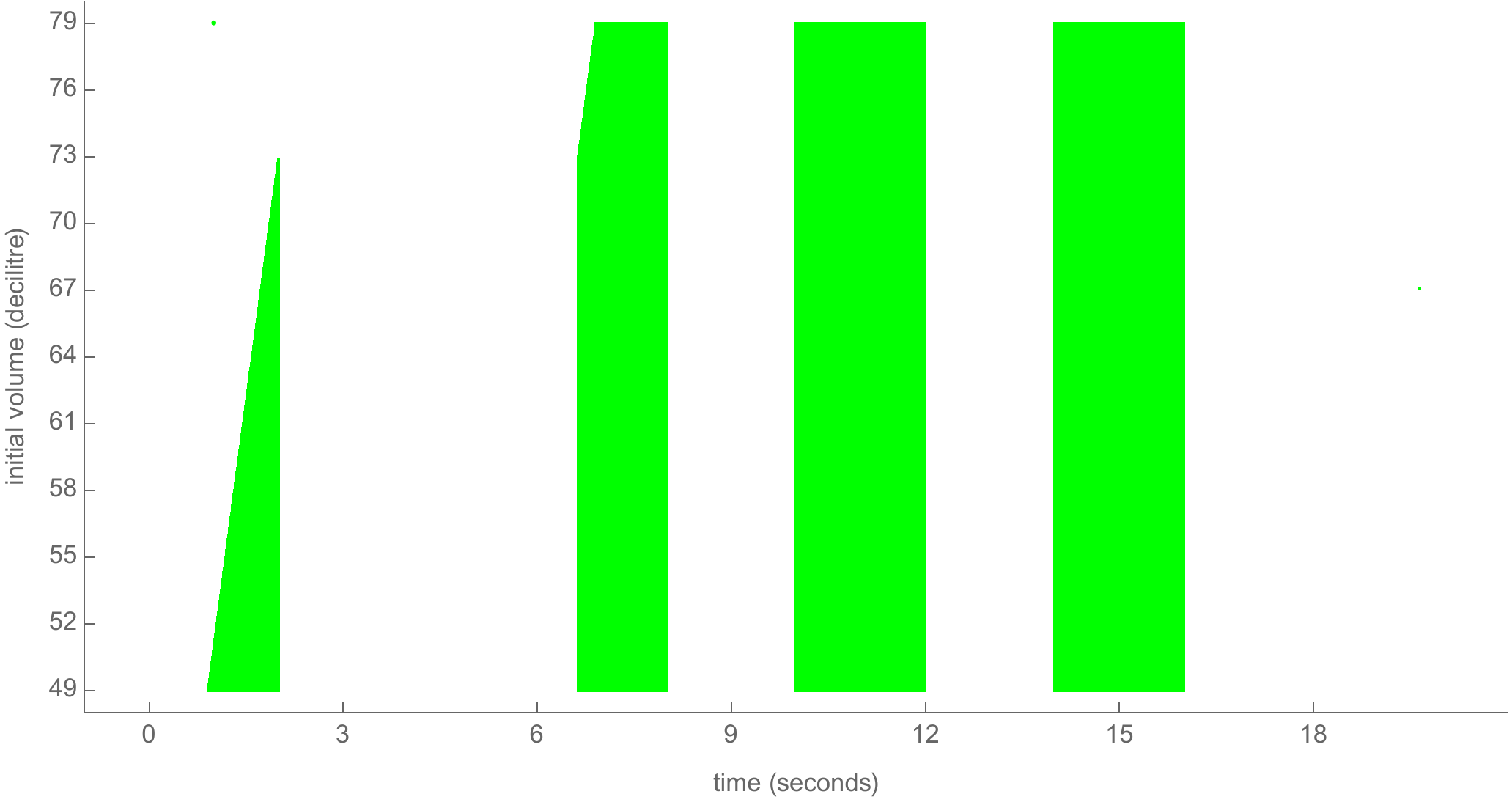}
%\caption{Strategy for $\mathcal{H}_2$} \label{fig:h2strategy}
%\end{subfigure}
%\\[1ex]
\begin{subfigure}[b]{0.48\textwidth}
 \includegraphics[width=1 \textwidth]{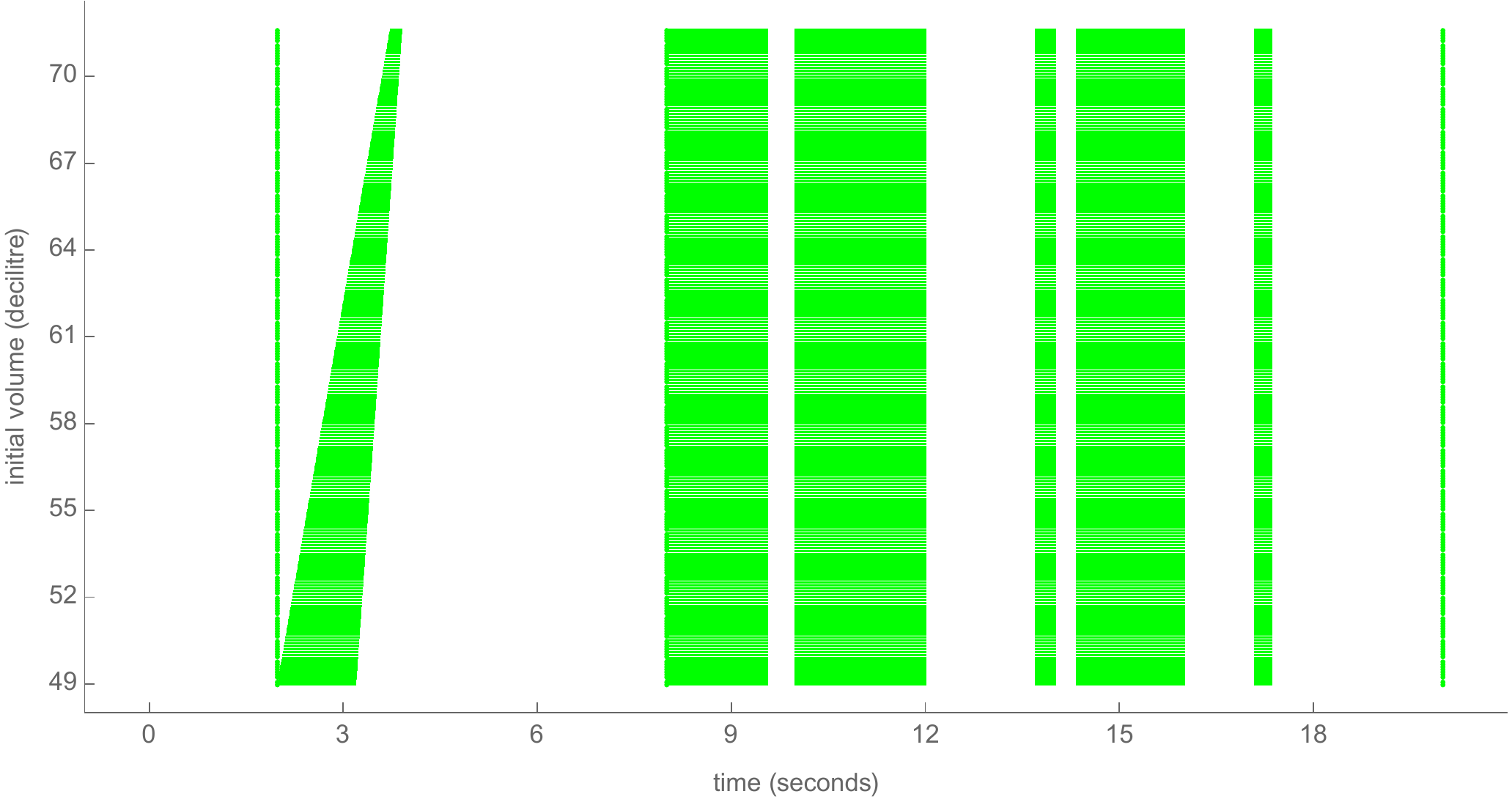}
%\caption{Strategy for $\mathcal{H}_1(\epsilon)$} \label{fig:h1Ustrategy}
\end{subfigure}
\quad
\begin{subfigure}[b]{0.48\textwidth}
 \includegraphics[width=1 \textwidth]{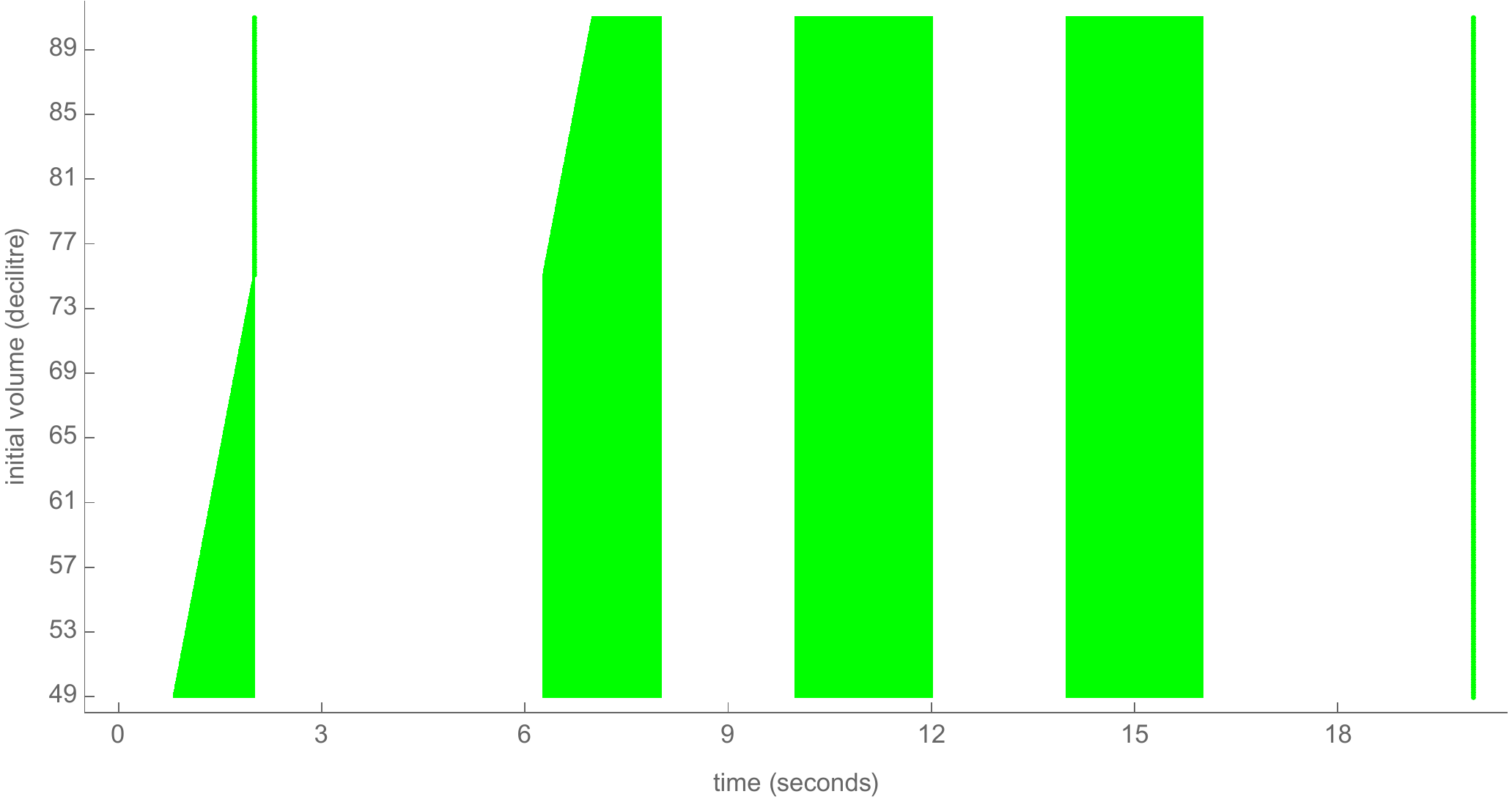}
%\caption{Strategy for $\mathcal{H}_2(\epsilon)$} \label{fig:h2Ustrategy}
\end{subfigure}
 \caption{Local strategies for~$\calH_1(\epsilon)$ (left) and $\calH_2(\epsilon)$ (right) for a single cycle of the HYDAC system.}
 %(top-left) $\mathcal{H}_1$; (top-right) $\mathcal{H}_2$; (bottom-left) $\mathcal{H}_1(\epsilon)$; (bottom-right) $\mathcal{H}_2(\epsilon)$
% ($\epsilon = 0.1\;\si{\litre/\second}$).}
\label{fig:strategies}
\end{figure}

%% Our strategies are optimal in the sense that among all $E$-safe and $[a,b]$-stable strategies, ours minimise the mean value of the accumulated volume during the execution of the cycle. 
%% This was done by characterising the set of $E$-safe and $[a,b]$-stable strategies as a set of linear constraints and solving the optimisation problem with respect to a non-linear objective function modelling the mean accumulated volume\footnote{This was performed using the function \texttt{FindMinimum} provided by Mathematica.}. 

\begin{table}[t]
\bgroup\centering
%\footnotesize
\setlength{\tabcolsep}{1ex}
\def\arraystretch{1.2}
\begin{tabular}[t]{|c|c|c|c|}
\hline
Controller & $[L;U]$ & $[a;b]$ & Mean vol. (\si{\litre}) \\ \hline\hline
$\mathcal{H}_1$ & $[4.9;5.84]$ & $[4.9;5.84]$ & 5.43 \\ \hline
$\mathcal{H}_1(\epsilon)$ & $[4.9; 7.16]$ & $[5.1; 7.16]$ & 6.15 \\ \hline
$\mathcal{H}_2$ & $[4.9;7.9]$ & $[4.9;7.9]$ & 6.12 \\ \hline
$\mathcal{H}_2(\epsilon)$ & $[4.9;9.1]$ & $[5.1;9.1]$ & 7.24 \\ \hline 
%$\mathcal{H}_3(\epsilon)$ & $[4.9;11.5]$ & $[5.1;8.9]$ & 9.38 \\ \hline 
\hline
%Controller & $[L;U]$ & $[a;b]$ & Mean vol. (\si{\litre}) \\ \hline\hline
G1M1~\cite{CJLRR09} & $[4.9;25.1]^{(*)}$ & $[5.1;9.4]$ & 8.2 \\ \hline
G2M1~\cite{CJLRR09} & $[4.9;25.1]^{(*)}$ & $[5.1;8.3]$ & 7.95 \\ \hline
\hline
\cite{DBLP:conf/fm/ZhaoZKL12} & $[4.9;25.1]^{(*)}$ & $[5.2;8.1]$ & 7.35 \\ \hline
\end{tabular}
\par\egroup
$^{(*)}$ \footnotesize Safety interval given by the HYDAC company.
%\NMshort{Shouldn't $a$ (lower bound of~$[a;b]$) in~$\calH_1(\epsilon)$ and $\calH_2(\epsilon)$ be $5.1$?}
\medskip
\caption{Characteristics of the synthesized strategies, compared with the 
strategies proposed in~\cite{CJLRR09,DBLP:conf/fm/ZhaoZKL12}.}
\label{tab:schedulers}
\end{table}
Table~\ref{tab:schedulers} summarizes the results
obtained for our models. It~gives the optimal volume constraints, the
greatest stable intervals, and the values of the worst-case (over all initial oil levels in~$[a;b]$)
mean volume.
%\footnote{$\mathit{MeanVol}(\mathcal{H}) =\max \{
%  \mathit{AccVol}(v_0, \mathit{strategy}(v_0,\mathcal{H})) / 20 \mid
%  v_0 \in [a;b] \}$.}.
It~is worth noting that the models without
uncertainty outperform the respective version with uncertainty.
Moreover, the~worst-case mean volume obtained both for
$\mathcal{H}_1(\epsilon)$ and $\mathcal{H}_2(\epsilon)$ are significantly
better than the optimal strategies synthesized both in~\cite{CJLRR09}
and~\cite{DBLP:conf/fm/ZhaoZKL12}.

The reason for this may be that
\begin{enumerate*}[label=(\roman*)]
\item our models relax the latency requirement for the pump, 
\item the strategies of~\cite{CJLRR09} are obtained using a
  discretization of the dynamics within the system, and
\item the strategies of~\cite{CJLRR09}
  and~\cite{DBLP:conf/fm/ZhaoZKL12} were allowed to activate the pump
  respectively two and three times during each cycle.
\end{enumerate*}

%{\color{red}Note also that we have been able to synthesise better
%stable intervals where $a$ coincides with $L$, whereas the stable
%intervals synthesised in \cite{CJLRR09} had to add a ``safe'' margin
%to their stable intervals in order to cope with the stability issues
%that might arise by using a discrete dynamics.}\todo{cut?}

\begin{figure}[t]
\centering
%% \begin{subfigure}[b]{0.45\textwidth}
%%  \includegraphics[width=1 \textwidth]{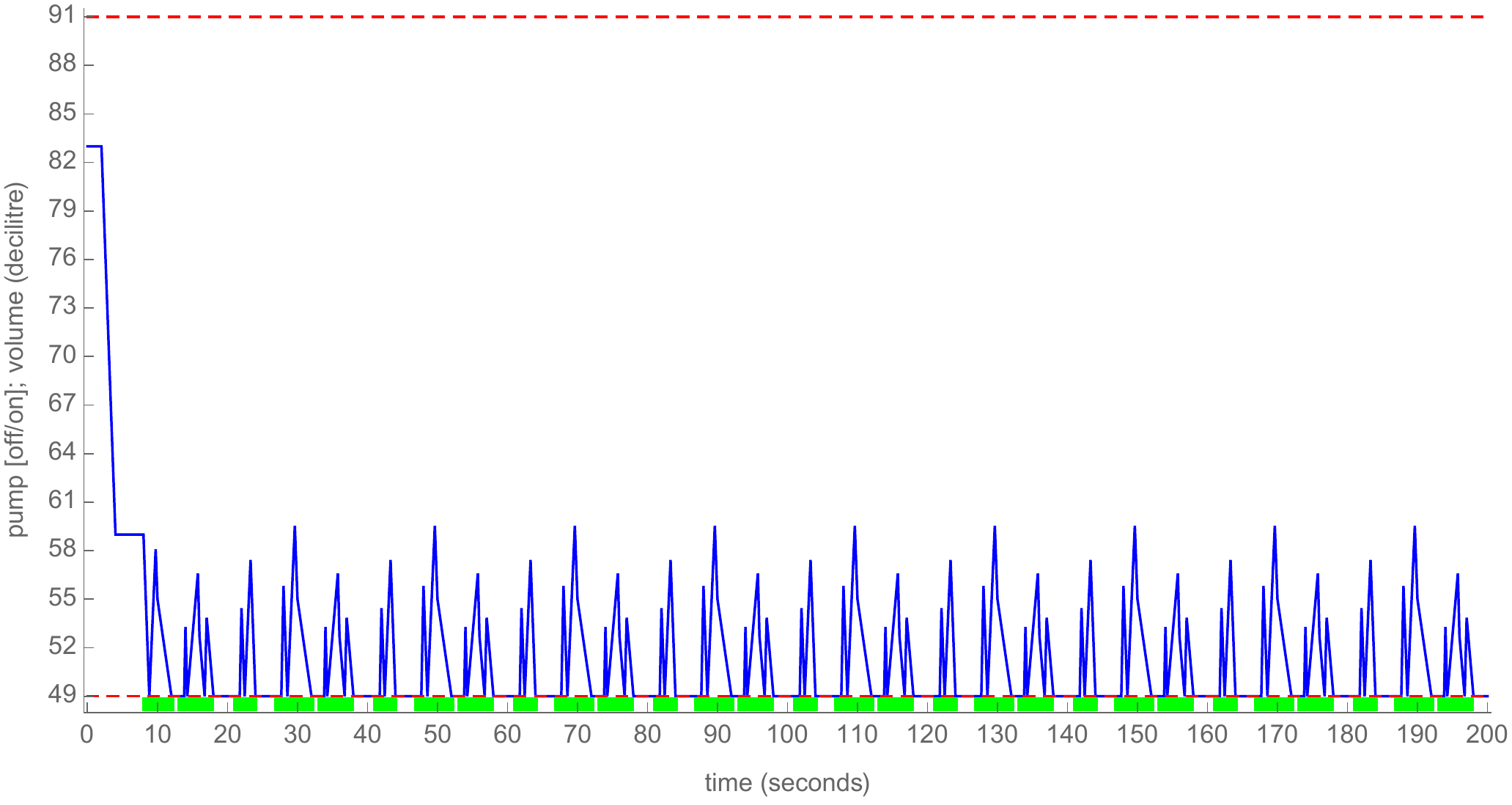}
%% %\caption{Strategy for $\mathcal{H}_1$} \label{fig:h1strategy}
%% \end{subfigure}
%% \quad
%% \begin{subfigure}[b]{0.45\textwidth}
%%  \includegraphics[width=1 \textwidth]{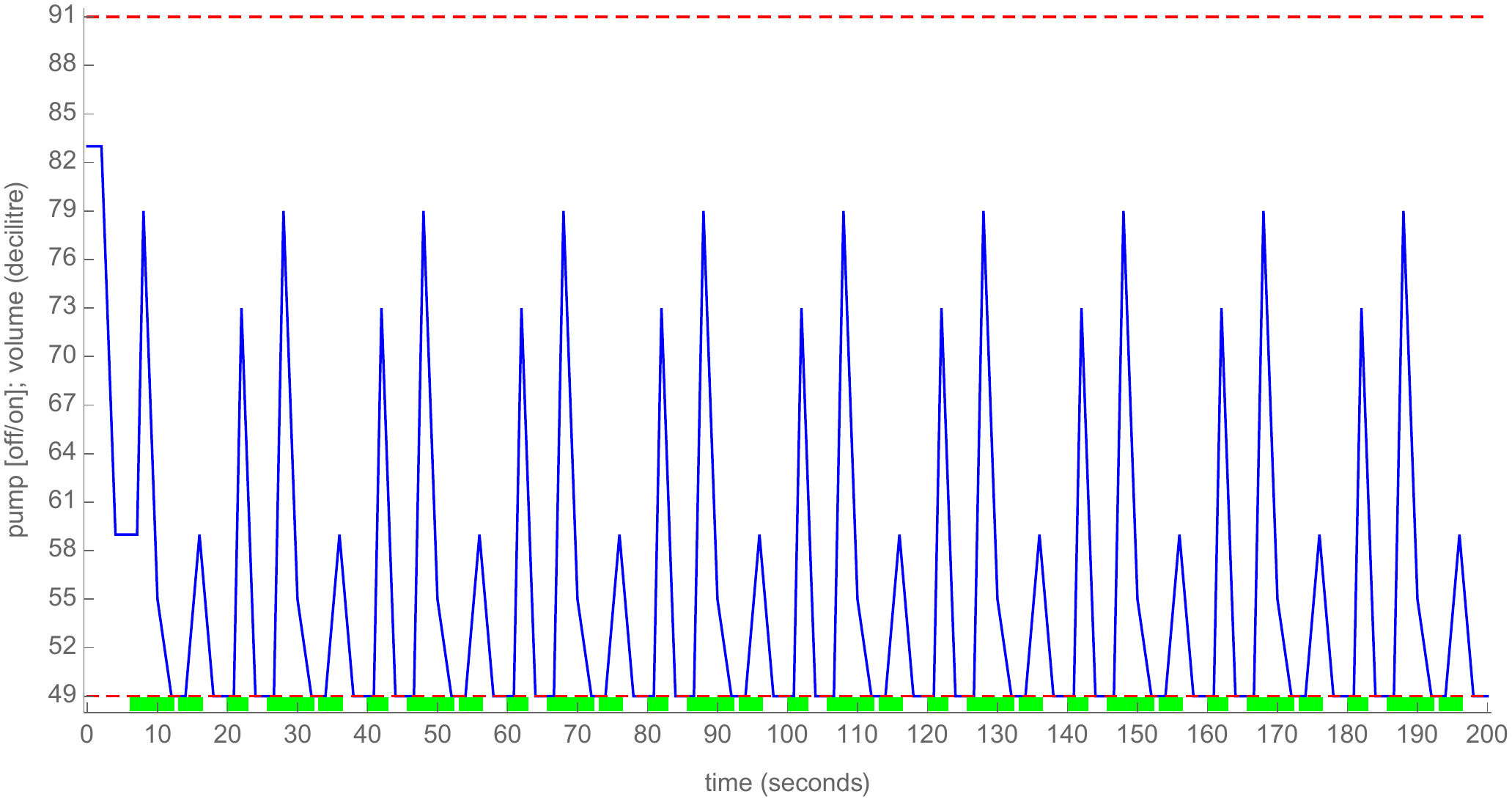}
%% %\caption{Strategy for $\mathcal{H}_2$} \label{fig:h2strategy}
%% \end{subfigure}
%% \\[1ex]
\begin{subfigure}[b]{0.48\textwidth}
  \includegraphics[width=\textwidth,height=4.2cm]{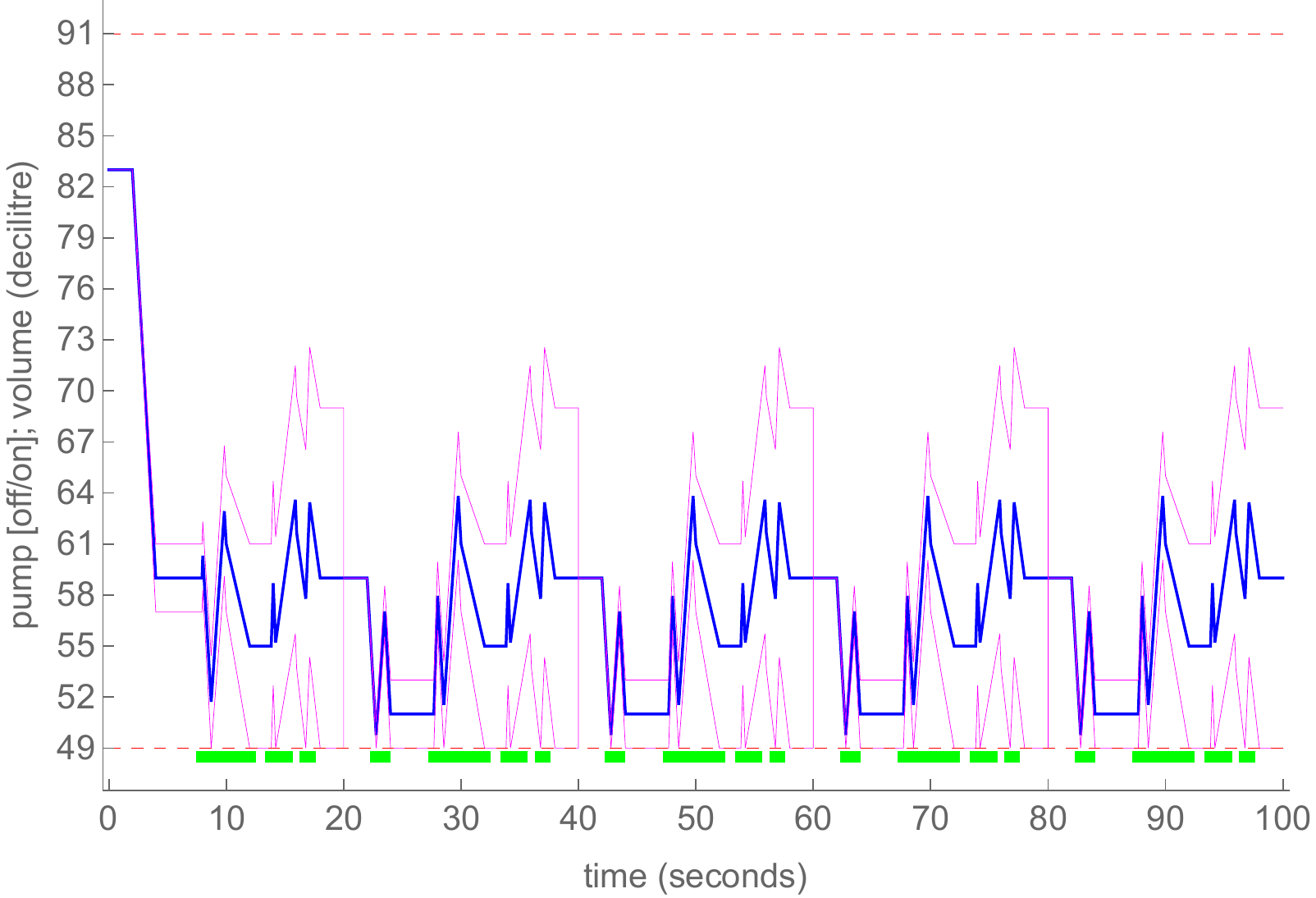}
%% %\caption{Strategy for $\mathcal{H}_1(\epsilon)$} \label{fig:h1Ustrategy}
\end{subfigure}
\quad
\begin{subfigure}[b]{0.48\textwidth}
 \includegraphics[width=\textwidth,height=4.2cm]{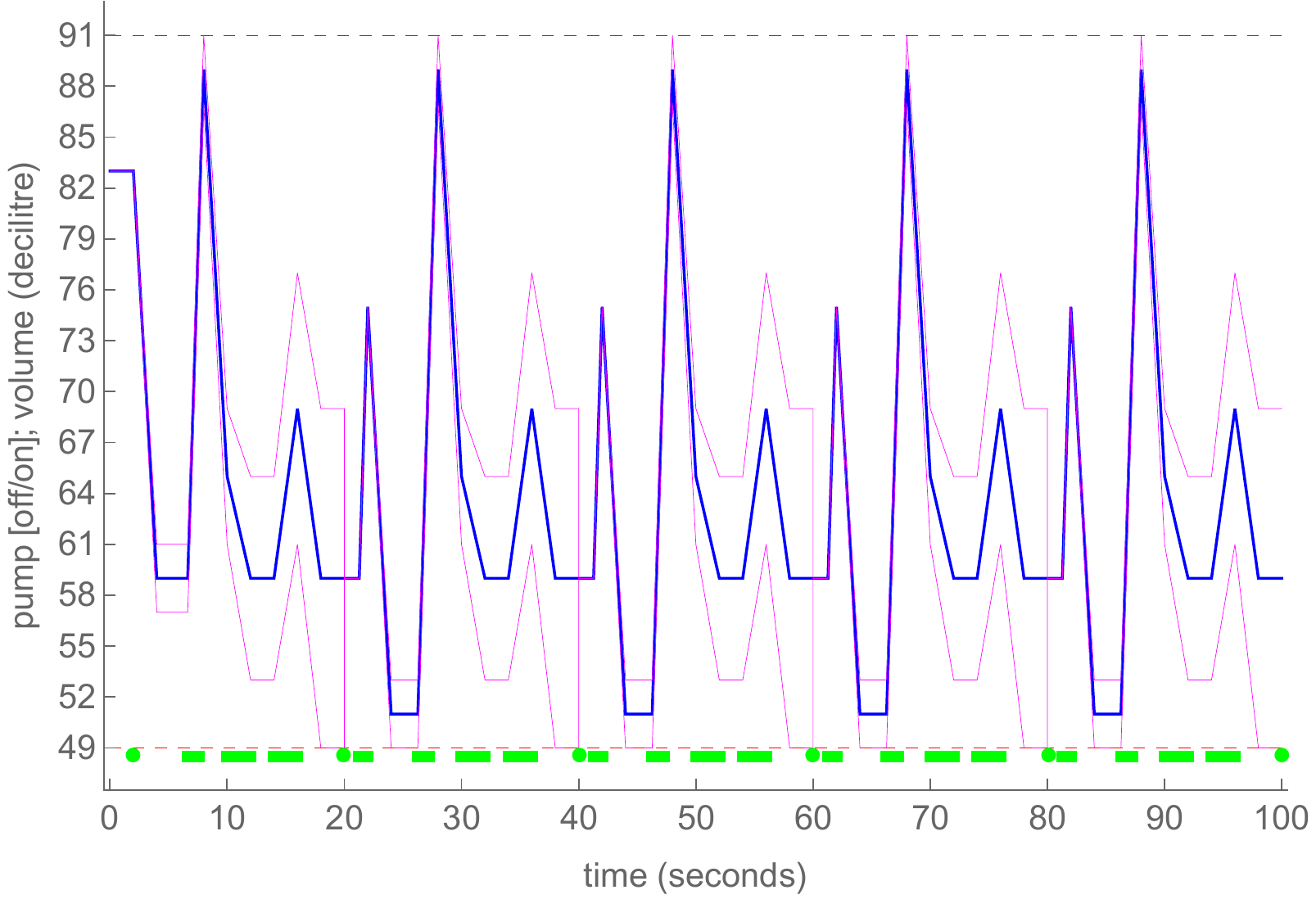}
%\caption{Strategy for $\mathcal{H}_2(\epsilon)$} \label{fig:h2Ustrategy}
\end{subfigure}
 \caption{Simulations of 5 consecutive machine cycles
   for~$\calH_1(\epsilon)$ and~$\calH_2(\epsilon)$.}
 %performed resp.\ with the strategies for (top-left) $\mathcal{H}_1$; (top-right) $\mathcal{H}_2$; (bottom-left) $\mathcal{H}_1(\epsilon)$; and (bottom-right) $\mathcal{H}_2(\epsilon)$.}
 \label{fig:simulations}
\end{figure}

We proceed by comparing the performances of our strategies in terms of
accumulated oil volume.  Fig.~\ref{fig:simulations} shows the
result of simulating our strategies for a duration of~$100\;\si{\second}$.
%i.e., $10$~machine cycles. 
The plots illustrate in blue (resp.~red) the dynamics of the mean (resp.~min/max) oil level in the accumulator 
as well as the state of the pump. 
%The~plots illustrate the dynamics of the oil level in the accumulator as well as the state of the pump. 
The~initial volume used for the simulations is
$8.3\litr$, as done in~\cite{CJLRR09} for evaluating respectively the
Bang-Bang controller, the Smart Controller developed by HYDAC, and the
controllers G1M1 and G2M1 synthesized with
\textsc{uppaal-tiga}. %~\cite{CJLRR09}.
%\footnote{We refer the reader to~\cite{CJLRR09}
%  for a more detailed description of the controllers.}.

\begin{table}[t]
%\begin{center}
\centering
\setlength{\tabcolsep}{1ex}
\begin{tabular}[t]{|c|c|c|}
\hline
Controller & Acc.\ vol. (\si{\litre}) & Mean vol. (\si{\litre}) \\ \hline
\hline
$\mathcal{H}_1$ & 1081.77 & 5.41 \\ \hline
$\mathcal{H}_2$ & 1158.90 & 5.79 \\ \hline
$\mathcal{H}_1(\epsilon)$ & 1200.21 & 6.00 \\ \hline
$\mathcal{H}_2(\epsilon)$ & 1323.42 & 6.62 \\ \hline 
\end{tabular}
\begin{tabular}[t]{|c|c|c|}
\hline
Controller & Acc.\ vol. (\si{\litre}) & Mean vol. (\si{\litre}) \\ \hline
\hline
Bang-Bang & 2689 & 13.45 \\ \hline
\textsc{hydac} & 2232 & 11.60 \\ \hline
G1M1 & 1518 & 7.59 \\ \hline
G2M1 & 1489 & 7.44 \\ \hline
\end{tabular}
%\end{center}
\medskip
\caption{Performance based on simulations of $200\;\si{\second}$ starting with $8.3\litr$.}
\label{tab:simulations}
\end{table}

Table~\ref{tab:simulations} presents, for each of the strategies, the
resulting accumulated volume of oil, and the corresponding mean
volume.  There is a clear evidence that the strategies for
  $\mathcal{H}_1$ and~$\mathcal{H}_2$ outperform all the other
  strategies. Clearly, this is due to the fact that they assume full
  precision in the rates, and allow for more switches of the pump.
  However, these results shall be read as what
  one could achieve by investing in more precise
  equipment.
  %}\todo{cut?}
%
The~results also confirm that both our strategies outperform those
presented in~\cite{CJLRR09}. In particular the strategy for
$\mathcal{H}_1(\epsilon)$ provides an improvement of $55\%$, $46\%$,
$20\%$, and $19\%$ respectively for the Bang-Bang controller, the
Smart Controller of HYDAC, and the two strategies synthesized with
\textsc{uppaal-tiga}.

%% \paragraph{Tool Chain.}
\subsubsection{Tool chain\protect\footnote{More details on our scripts are available at \url{http://people.cs.aau.dk/~giovbacci/tools.html}, together with the models we used for our examples and case study.}.}
Our results have been obtained using Mathematica and
Mjollnir. Specifically, Mathematica was used to construct
the formulas modelling the post-fixpoint of the energy functions, calling Mjollnir
for performing quantifier elimination on~them.  The~combination
of both tools allowed us to solve one of our formulas with 27 variables
in a compositional manner in ca.~$20\;\si{\milli\second}$, while
Mjollnir alone would take more than 20 minutes. 
Mjollnir was preferred to Mathematica's built-in support for quantifier elimination 
because the latter does not scale.

%% The computation of the optimal upper bounds, and
%% greatest stable intervals were then handled with Mathematica, as well
%% as the computation of the optimal schedules and the respective
%% simulations.  It~is worth mentioning that Mathematica provides the
%% built-in function \texttt{Resolve} for preforming quantifier
%% elimination, but Mjollnir was preferred to it both for its
%% performances and its concise output.  By~calling Mjollnir from
%% Mathematica \emph{while constructing our predicates}, we~were able to
%% simplify formulas with more than $27$ quantifiers in approximately
%% $0.023$~sec. In contrast, resolving the same formula directly in
%% Mjollnir took us more that 20 minutes!

%\input{hydac2}

\section{Conclusion}
We developed a novel framework allowing  for the synthesis of safe and
optimal  controllers, based  on  energy timed  automata. Our  approach
consists in a translation to first-order linear arithmetic expressions
representing our  control problem, and solving  these using quantifier
elimination and simplification.  We~demonstrated the applicability and
performance of  our approach  by revisiting the  HYDAC case  study and
improving its best-known solutions.

Future   works  include   extending   our  results   to  non-flat   and
non-segmented  energy  timed   automata.   However,  existing  results
\cite{Mar11}  indicate   that  we  are   close  to  the   boundary  of
decidability.  Another interesting continuation  of this work would be
to            add           {\sc            Uppaal           Stratego}
\cite{DBLP:conf/atva/DavidJLLLST14,DBLP:conf/tacas/DavidJLMT15} to our
tool chain.   This would allow  to optimize the  permissive strategies
that  we  compute  with  quantifier  elimination  in  the  setting  of
probabilistic uncertainty, thus obtaining controllers that are optimal
with respect to expected accumulated oil volume.

%% \begin{itemize}
%% \item encouraging results, improvement on controllers for Hydac
%% \item quantifier elimination quite efficient in practice
%% \item compute most permissive strategies
%% \item optimize with stratego
%% \end{itemize}

%\clearpage
% \bibliographystyle{nmalpha}
\bibliographystyle{abbrv}
\bibliography{bibfm18}

\clearpage
\appendix
\section{Proof of Theorem~\ref{thm-ecir}}\label{app-ecir}

\subsubsection{Binary energy relations.}
%\label{subsec:binary_energy_relations}

Let $\calP=\tuple{\{s_i \mid 0 \le i \le n\},\{s_0\},X,I,r,T}$ be an
ETP from $s_0$ to~$s_n$.
Let~$E\subseteq \calI(\bbQ)$ be an energy constraint. The~\emph{binary
  energy relation}~$\calR_\calP^{E} \subseteq E \times E$ for~$\calP$
under energy constraint~$E$ relates all pairs~$(w_0,w_1)$ for which
there is a finite run of~$\calP$ from~$(s_0,\mathbf{0},w_0)$
to~$(s_n,\mathbf{0},w_1)$ satisfying energy constraint~$E$.
This~relation is characterized by the following first-order
formula:
\begin{xalignat*}1
  \calR_\calP^E(w_0,w_1) \iff \exists (d_i)_{0\leq i<n} .\
  & \Phi_{\text{timing}} \wedge \Phi_{\text{energy}} \wedge
       %{} \\
    %&
    %w_0\in E \wedge w_1\in E \wedge
    w_1=w_0+\sum_{k=0}^{n-1}(d_k\cdot r(s_k) + u_k) 
    %% &
    %% \bigwedge_{0\leq i<n} \left[d_i \in d(s_i) \wedge
    %% \sum_{k=0}^{i} d_k \in g(s_i)\right] \wedge {}\\
    %% &\bigwedge_{0\leq i<n} w_0+\sum_{k=0}^{i}(d_k\cdot r(s_k))
    %% + \sum_{k=0}^{i-1}u_k \in E \wedge{} \\
    %% &\bigwedge_{0\leq i<n} w_0+\sum_{k=0}^{i}(d_k\cdot r(s_k))
    %%     + \sum_{k=0}^{i}u_k \in E.
\end{xalignat*}
where $\Phi_{\text{timing}}$ encodes all the timing constraints that
the sequence~$(d_i)_{0\leq i<n}$ has to fulfill, while
$\Phi_{\text{energy}}$ encodes the energy constraints. More precisely:
\begin{itemize}
\item timing constraints are obtained by computing the clock
  valuations in each state of the execution, and expressing that those
  values must satisfy the corresponding invariants and guards. The
  value of a clock in a state is the sum of the delays~$d_j$ since the
  last reset of that clock along the ETP.
\item energy constraints are obtained by expressing the value of the
  energy level in each state as the sum of the initial energy level,
  the energy $r(s_i)\cdot d_i$ gained or consumed in each intermediary
  state, and the updates~$u_i$ of the transitions that have been
  traversed. All those values are constrained to lie in~$E$.
\end{itemize}

It is easily shown that $\calR_P^E$ is a closed, convex subset of
$E\times E$ (remember that we consider closed clock constraints); thus
it~can be described as a conjunction of a finite set of linear
constraints over $w_0$ and~$w_1$ (with non-strict inequalities), using
quantifier elimination of variables~$(d_i)_{0\leq i<n}$.
%An application of this can be found in Fig.~\ref{fig2}.

%% \begin{example}\label{ex-enerel}
%%   We illustrate this computation on the ETP of
%%   Fig.~\ref{fig-extp}. For energy constraint~$[0;5]$, the energy
%%   relation can be written (after removing redundant constraints) as
%% \begin{xalignat*}1
%%   \calR_\calP^E(w_0,w_1) \iff \exists d_0 ,d_1.\ &
%%   d_0\in[0.25;1] \wedge d_1\in [0;1] \wedge d_0+d_1=1 \wedge {} \\
%%   & w_0\in[0;5] \wedge 
%%     w_0+2d_0\in [0;5] \wedge w_0+2d_0-3\in[0;5] \wedge{}\\
%%   &  w_1=w_0+2d_0+4d_1-3 \wedge w_1\in[0;5].
%% \end{xalignat*}
%% Applying quantifier elimination, we~end up with
%% \[
%% \calR_\calP^E(w_0,w_1) \iff
%% %w_0\geq 0 \wedge w_1\geq 0 \wedge
%% (w_1+2 \leq 2w_0\leq w_1+4) \wedge
%% (w_1-0.5\leq w_0\leq w_1+1).
%% \]
%% The corresponding polyhedron is depicted on Fig.~\ref{fig-erel}.
%% \end{example}

\subsubsection{Energy functions.}

We~now focus on properties of energy relations. First notice that
for any interval~$E\in\calI(\bbQ)$, the partially-ordered set
$(\calI(E),\supseteq)$ is $\omega$-complete, meaning that for any chain
$(I_j)_{j\in\bbN}$, with $I_j\supseteq I_{j+1}$ for all~$j$, the limit
$\bigcap_{j\in\bbN} I_j$ also belongs to~$\calI(E)$.
%\NMshort{I don't think this holds for $\subseteq$ and $\bigcup_j I_j$...}
By~Cantor's Intersection Theorem, if additionally each interval~$I_j$
is non-empty, then so is the limit $\bigcap_{j\in\bbN}
I_j$. 

With an energy relation~$\calR_\calP^E$, we~associate an
\emph{energy function} (also denoted with~$\calR_\calP^E$, or
simply~$\calR$, as long as no ambiguity may arise), defined for any
closed subinterval~$I\in \calI(E)$ as
\[
\calR(I)=\{w_1 \in E\mid \exists w_0\in I.\ \calR(w_0,w_1)\}.
\]
Symmetrically, we~let
\[
\calR^{-1}(I)=\{w_0\in E \mid \exists w_1\in I.\ \calR(w_0,w_1)\}.
\]
Observe that $\calR(I)$ and $\calR^{-1}(I)$ also belong
to~$\calI(E)$ (because relation~$\calR$ is closed and convex).
Moreover, $\calR$ and~$\calR^{-1}$ are monotonic: 
%\begin{lemma}
%  Let $I\subseteq J$ be two intervals in~$\calI(E)$.
%  Then
  for any two intervals $I$ and~$J$ in~$\calI(E)$ such that $I\subseteq J$, it~holds
  $\calR(I)\subseteq \calR(J)$ and
  $\calR^{-1}(I)\subseteq \calR^{-1}(J)$.

Energy functions~$\calR$ and~$\calR^{-1}$ also satisfy the following
continuity properties:%
%% \begin{lemma}\label{lemma-continuity}
%%   Let $(I_j)_{j\in\bbN}$ be a chain of intervals of~$\calI(E)$, such
%%   that $I_j\supseteq I_{j+1}$ for all~$j\in\bbN$. Then
%%   $\calR^{-1}(\bigcap_{j\in\bbN} I_j) = \bigcap_{j\in\bbN} \calR^{-1}(I_j)$.
%% \end{lemma}
\lemcontinuity*

\begin{proof}
  For any~$i\in\bbN$, we~have $I_i \supseteq \bigcap_{j\in\bbN} I_j$.
  By~monotonicity of~$\calR^{-1}$, we~get $\calR^{-1}(I_i)\supseteq
  \calR^{-1}(\bigcap_{j\in\bbN}I_j)$. It~follows $\bigcap_{i\in\bbN}
  \calR^{-1}(I_i) \supseteq \calR^{-1}(\bigcap_{j\in\bbN}I_j)$.

  Now, let $w_0\in\bigcap_{j\in\bbN} \calR^{-1}(I_j)$. Then for
  all~$i\in\bbN$, there exists $w_1^i$ such that
  $\calR(w_0,w_1^i)$. It~follows that for any~$i\in\bbN$,
  $\calR(\{w_0\}) \cap I_i$ is a non-empty interval of~$\calI(E)$.
  Applying Cantor's Intersection Theorem, we~get that
  $\bigcap_{i\in\bbN} \calR(\{w_0\}) \cap I_i$ is a non-empty interval
  of~$\calI(E)$. This~intersection can be rewritten as $\calR(\{w_0\})
  \cap \bigcap_{i\in\bbN} I_i$; hence there exists $w_1\in
  \bigcap_{i\in\bbN} I_i$ such that $\calR(w_0,w_1)$, which proves
  that $w_0\in \calR^{-1}(\bigcap_{i\in\bbN} I_i)$. \qed
\end{proof}

\subsubsection{Composition and fixpoints of energy functions.}
%\label{subsec:composition_fixpoints}

%% \pat{I think this should be stated somewhere that $\nu \calR^{-1} \ne
%%   \emptyset$ implies that one can iterate the cycle from every
%%   relevant energy level while satisfying the energy constraint}

%% \NMshort{Define concatnation of timed paths?}

Consider a finite sequence of paths~$(\calP_i)_{1\leq i\leq k}$.
Clearly, the energy relation for this sequence can be obtained as the
composition of the individual energy relations $\calR_{\calP_k}^E
\circ \cdots \circ \calR_{\calP_1}^E$; the~resulting energy relation
still is a closed convex subset of~$E\times E$ that can be described
as the conjunction of finitely many linear constraints over~$w_0$
and~$w_1$.  As~a special case, we~write $(\calR_\calP^E)^k$ for the
composition of $k$ copies of the same relations~$\calR_\calP^E$.

Now, using Lemma~\ref{lemma-continuity}, we~easily prove that the
greatest fixpoint $\nu \calR^{-1}$ of~$\calR^{-1}$ in the complete
lattice $(\calI(E), \supseteq)$ exists and equals:
\[
\nu \calR^{-1} = \bigcap_{i\in\bbN} (\calR^{-1})^i(E).
\]
Moreover $\nu \calR^{-1}$ is a closed (possibly empty) interval.  Note
that $\nu \calR^{-1}$ is the maximum subset $S_\calR$ of~$E$ such
that, starting with any $w_0\in S_\calR$, it~is possible to
iterate~$\calR$ infinitely many times (that~is, for~any~$w_0\in
S_\calR$, there exists $w_1\in S_\calR$ such that
$\calR(w_0,w_1)$---any~such set $S$ is a post-fixpoint of $\calR^{-1}$ in
the sense that $S\subseteq \calR^{-1}(S)$).

In~the~end, if $\calR$ is the energy relation of a cycle~$\calC$ in
the SETA, then $\nu \calR^{-1}$ precisely describes the set of initial
energy levels allowing infinite runs through~$\calC$ satisfying the
energy constraint~$E$.

Now if $\calR$ is the energy relation for a cycle~$\calC$, described as
the conjunction~$\phi_\calC$ of a finite set of linear constraints,
we~can characterize those intervals $[a,b]\subseteq E$ that constitute
a post-fixpoint for~$\calR^{-1}$ by the following first-order formula:
\[
a \leq b \wedge a\in E \wedge b\in E \wedge \forall w_0\in[a;b].\
\exists w_1\in[a;b].\ \phi_\calC(w_0,w_1).
\]

Applying quantifier eliminination (to~$w_0$ and~$w_1$), the~above
formula may be transformed into a direct constraint on~$a$ and~$b$,
characterizing all post-fixpoints of~$\calR^{-1}$.  We~get a
characterization of~$\nu \calR^{-1}$ by computing the values of $a$ and~$b$
that satisfy these constraint and maximize~$b-a$.

\subsubsection{Algorithm for flat segmented energy timed automata.}

Following Example~\ref{ex-iel-seta}, we~now prove that we can solve
the energy-constrained infinite-run problem for any flat SETA.  The
next theorem is crucial for our algorithm:

%% \begin{theorem} \label{th:termination}
%%   Let $\calR$ be the energy relation of an ETP $\calP$ with energy
%%   constraint~$E$, and let $I$ be a closed sub-interval of $E$.  Then
%%   either $I\cap\nu \calR^{-1} \neq \emptyset$ or $\calR^n(I)=\emptyset$
%%   for some $n$.
%% \end{theorem}

\thmterm*

%{\color{green!50!blue}
\begin{proof}
%We claim that $\calR^i(I)\not=\emptyset$ for all $i \in \bbN$ implies
%$I\cap\nu \calR^{-1} \neq \emptyset$.  We~prove this by showing that
%assuming
Assume that $I\cap\nu \calR^{-1} = \emptyset$. Then:
%leads to a contradiction.
\begin{align*}
\emptyset &= I\cap\nu \calR^{-1} \\
  &= \textstyle I\cap \bigcap_{n \in \bbN} \bigl(\calR^{-1}\bigr)^{n}(E) \\
  &= \textstyle I\cap \bigcap_{n \in \bbN} \bigl(\calR^{n}\bigr)^{-1}(E) 
     \tag{by $\calR^{-1} \circ \calR^{-1} = (\calR \circ \calR)^{-1}$} \\
  &= \textstyle \bigcap_{n \in \bbN} \bigl( I \cap \bigl(\calR^{n}\bigr)^{-1}(E)
     \bigr)
\end{align*}
Note that $\big(I \cap (\calR^{n})^{-1}(E)\big)_{n \in \bbN}$ is a
decreasing sequence because
$\big((\calR^{-1})^{n}(E)\big)_{n \in \bbN}$~is. Therefore, by
Cantor's intersection theorem $I \cap \big(\calR^{n}\big)^{-1}(E)
= \emptyset$ for some $n \in \bbN$. But only elements
$w_0 \in \big(\calR^{n}\big)^{-1}(E)$ admit some $w_1 \in E$ such that
$\calR^{n}(w_0,w_1)$. Therefore $\calR^n(I) = \emptyset$.
%which is a contradiction.
\qed
\end{proof}

We will show that the energy-constrained infinite run problem is
decidable for flat SETAs.  The decision procedure traverses
the underlying graph of~$\calA$, forward propagating an initial energy
interval $I_0 \subseteq E$ looking for a simple cycle~$C$ such that
$\nu \calR_C^{-1} \cap I \neq \emptyset$, where $I \subseteq E$ is
the energy interval forward-propagated until reaching the cycle.

%Given a flat 1-clock energy timed automaton $\calA$ we may forward propagate an initial energy interval $I_0$ through the paths of $\mathcal{A}$ according to Algorithm~\ref{alg:infiniterun}.
%\begin{enumerate}
%  \item Forward the current energy interval $I$ through one of the
%    neighbouring paths $P$ applying the corresponding energy relation
%    $\calR(I)$;
%  \item  Whenever a  cycle $C$  is  met return $\mathsf{tt}$  if
%    $I\cap \nu \calR_C\not=\emptyset$ (a lower-upper-bound respecting
%    infinite run has been found);
%  \item Otherwise continue exploring with the finite non-empty sets of\\
%    $I, \calR(I), \calR^2(I), \ldots, \calR^n(I), \ldots$.
%\end{enumerate}
\begin{algorithm}[tb]
    \algsetup{linenodelimiter=.}
    \caption{Infinite Run}
    \begin{algorithmic}[1]  
    \REQUIRE  A \emph{flat SETA} $\calA=\tuple{S,T,P}$; initial state $m_0 \in S$;
    energy interval $I_0$
    \STATE $W \gets \{(m_0, I_0, c)\}$ \COMMENT{initialise the waiting list}\label{line:1}
    \WHILE{$W \neq \emptyset$} 
    	\STATE pick $\tuple{m,I,\flag} \in W$ 	\COMMENT{pick an element from the waiting list}\label{line:3}
	\STATE $W \gets W \setminus \tuple{m,I,\flag}$ \COMMENT{remove the element from the waiting list}
	\IF[the node $m$ shall be explored without following a cycle]{$\flag = \bar{c}$}
		\FOR{\textbf{each} $(m,m') \in T$ that is not part of a simple cycle of $(S,T)$}\label{line:6}
			\STATE $W \gets W \cup \{ \tuple{m', \calR_{P(m,m')}^E(I), c}\}$ \COMMENT{add this new task to the waiting list}\label{line:7}
		\ENDFOR
	\ELSE[the node $m$ shall be explored by following a cycle]
		\IF{$m$ belongs to a cycle of $(S,T)$}
			\STATE let $\calC = (m_1,m_2)\cdots (m_k,m_{k+1})$ be the simple cycle s.t.\ $m = m_1 = m_{k+1}$
			\STATE let $\calR_\calC = \calR_{P(m_k,m_{k+1})} \circ \cdots \circ \calR_{P(m_1,m_{2})}$
			\COMMENT{energy relation of the cycle}
        			\IF[check if there is an infinite run via the cycle $C$]{$I \cap \nu \calR_\calC^{-1} \neq \emptyset$}
        				\RETURN $\mathtt{tt}$
        			\ELSE[the cycle can be executed only finitely many times]
				\STATE $W \gets W \cup \{(m, I, \bar{c})\}$ \COMMENT{add a new task to the waiting list}\label{line:16}
        				\STATE $i \gets 0$ \COMMENT{initialise the number of cycle executions}
        				\WHILE[while $i$-th energy relation is satified]{$\calR_\calC^i(I) \neq \emptyset$}\label{line:18}
        					\FOR{$1 \leq j < k$} %\FOR{\!\textbf{each} $(m_j,m_{j+1}) \in C$}
						\STATE let $\calR_{\calP_j} = \calR_{P(m_j,m_{j+1})} \circ \cdots \circ \calR_{P(m_1,m_{2})}$ \COMMENT{unfold $C$ up to $m_{j+1}$}\label{line:20}
        						\STATE $W \gets W \cup \{ \tuple{m_{j+1}, \calR_{\calP_j}(\calR_C^i(I)), \bar{c}}\}$\label{line:21}
						\COMMENT{add a task to the waiting list}
        					\ENDFOR
        					\STATE $i \gets i + 1$ \COMMENT{increment the number of cycle executions}
        				\ENDWHILE\label{line:24}
        			\ENDIF
		\ELSE[$m$ doesn't belong to a cycle]
			\STATE $W \gets W \cup \{(m, I, \bar{c})\}$ \COMMENT{add a new task to the waiting list}\label{line:27}
		\ENDIF
	\ENDIF
     \ENDWHILE
     \RETURN $\mathtt{ff}$ \COMMENT{no infinite run could be found}
    \end{algorithmic}
    \label{algapp:infiniterun}\label{algapp:infiniteruns}
\end{algorithm}
Algorithm~\ref{algapp:infiniterun} gives a detailed description of the
decision procedure. The~procedure traverses the underlying graph of
the flat SETA~$\calA$, namely $G = (S,T)$, using a waiting list~$W$ to
keep track of the \node s that need to be further
explored. The~list~$W$ contains tasks of the form $\tuple{m,I,\flag}$
where $m \in S$ is the current \node, ${I \in\calI(E)}$~is the current
energy interval, and $\flag \in \{c, \bar{c}\}$ is a flag
indicating if $m$~shall be explored by following a cycle it belongs to
($\flag = c$), or by skipping that cycle ($\flag
= \bar{c}$). The~algorithm first initialises the waiting list
with the initial task (cf.~line~\ref{line:1}).

The~main \textbf{while} loop
processes each task in the waiting list, as long as the list is not empty.
%In line~\ref{line:3} the current task $\tuple{m,I,\mathit{flag}}$ is picked.
It~picks a task $\tuple{m,I,\flag}$ from~$W$ (line~\ref{line:3}).
If~$\flag = \bar{c}$, the~exploration will continue from \node s~$m'$
adjacent to~$m$ by forward propagating the current energy interval~$I$
following the timed path~$P(m,m')$
(cf.~lines~\ref{line:6}-\ref{line:7}).  Note that the choice of the
arcs~$(m,m')$ ensures that $m'$~does not belong to the same cycle
as~$m$, thus skipping the cycle with~$m$.
%\NMshort{should we not unfold the cycle once for this to be correct?
%It seems to me that we do not explore any transition of any cycle
%here, while they could be useful}
%\NM: unfolding is done at lines line:20-line:21

Otherwise, if $\flag = c$, the exploration tries to follow the simple
cycle that contains~$m$. If~$m$~does not belong to any cycle the
current task will be simply put back in the waiting list with the
opposite flag (cf.~line~\ref{line:27}). In~case $m$~belongs to the
simple cycle $\calC = (m_1,m_2)\cdots (m_k,m_{k+1})$, the~energy
relation~$\calR_\calC^E$ is used to check if for the current energy interval
there exists an infinite run along the cycle~$\calC$. If~such is not the
case, the~cycle will be iterated only finitely many times (cf.~lines~\ref{line:16}-\ref{line:24}).
This is done by adding in~$W$ the current task with the flag set
to~$\bar{c}$---corresponding to zero executions of the cycle---then
for each execution~$i$ of~$\calC$, the~cycle is unfolded up its $j$-th
transition and the task
$\tuple{m_{j+1}, \calR_{\calP_j}^E((\calR_C^E)^i(I)), \bar{c}}$ is
added to the waiting list---corresponding to $i$~executions of~$\calC$
followed by a tail $(m_1,m_2) \cdots (m_j,m_{j+1})$. 
Termination of the \textbf{while} loop in lines~\ref{line:18}-\ref{line:24} is
ensured by Theorem~\ref{th:termination}, and by the flatness
assumption on~$\calA$, which ensures that each node $m \in S$ belongs to at most
one (simple) cycle, so~that once the execution has left the cycle
where $m$ belongs~to, the exploration won't visit~$m$~again.

%\begin{theorem}
%The energy-constrained infinite run problem for flat SETA is decidable.
%\end{theorem}

\thmecir*

\begin{figure}[t]
\definecolor{invariant}{rgb}{0.0, 0.5, 0.0}%
\definecolor{guard}{rgb}{0.36, 0.54, 0.66}%
\definecolor{reset}{rgb}{0.57, 0.36, 0.51}%
\centering
\begin{tabular}{c}
%undelying graph
\begin{tikzpicture}[location/.style={circle, draw=gray!90, thick}]
\draw 
  (0,0) node[location] (l0) {$s_0$}
  	node[font=\small,left=5mm] {$(S,T)\colon{}$}
  ($(l0)+(1,-0.5)$) node[location] (l3) {$s_1$}
  ($(l0)+(right:2.5)$) node[location] (l6) {$s_2$}
  ;
\path[-latex, font=\small]
	(l0) edge[bend left] (l6)
	(l6) edge[loop right] (l6)
	(l0) edge (l3)
	(l3) edge[loop right] (l3)
	(l3) edge (l6)
        (l0)+(-.5,0.5) edge (l0);
\end{tikzpicture}
\\[4ex]
  \begin{tikzpicture} % P(s_1,s_1)
    \draw (0,0) node[rond6,fill=red!40!white,font=\small] (a) {$s_1$}
      node[font=\scriptsize,below=2.5mm] {$\rate\colon \!\!-1$} 
      %node[font=\scriptsize,above=2.5mm] {$x \leq 1$}
      node[font=\small,left=3mm] {$P_{1,1}$:};
    \draw ($(a)+(right:1.5)$) node[rond6,vert,font=\small] (b) {$s_4$}
      node[font=\scriptsize,below=2.5mm] {$\rate\colon 3$};
      %node[font=\scriptsize,above=2.5mm] {$x \leq 1$}
    \draw ($(b)+(right:2.3)$) node[rond6,fill=red!40!white,font=\small] (c) {$s_1$};
    \draw[font=\scriptsize] 
      (a) edge[-latex'] node[pos=0.65,above] {$\update:\!\!-3$} (b)
      (b) edge[-latex'] node[near end,above] {$\update\colon  \!\!-1$} 
            node[near start,above] {$x = 1$}
            node[below] {$x:=0$} (c);
  \end{tikzpicture}
\end{tabular}
\quad
\begin{tabular}{l}
  \begin{tikzpicture} % P(s_0,s_1)
    \draw (0,0) node[rond6,fill=red!40!white,font=\small] (a) {$s_0$}
      node[font=\scriptsize,below=2.5mm] {$\rate\colon 0$} 
      %node[font=\scriptsize,above=2.5mm] {$x \leq 1$}
      node[font=\small,left=3mm] {$P_{0,1}$:};
    \draw ($(a)+(right:2.3)$) node[rond6,fill=red!40!white,font=\small] (b) {$s_1$};
    \draw[font=\scriptsize] (a) edge[-latex'] 
      node[near end,above] {$\update\colon 4$}  
      node[near start,above] {$x = 1$}
      node[below] {$x:=0$} (b);
  \end{tikzpicture}
  \\[-1ex]
    \begin{tikzpicture} % P(s_1,s_2)
    \draw (0,0) node[rond6,fill=red!40!white,font=\small] (a) {$s_1$}
      node[font=\scriptsize,below=2.5mm] {$\rate\colon \!\!-1$} 
      %node[font=\scriptsize,above=2.5mm] {$x \leq 1$}
      node[font=\small,left=3mm] {$P_{1,2}$:};
    \draw ($(a)+(right:2.3)$) node[rond6,fill=red!40!white,font=\small] (b) {$s_2$};
    \draw[font=\scriptsize] (a) edge[-latex'] 
      node[near end,above] {$\update\colon 2$}  
      node[near start,above] {$x = 1$}
      node[below] {$x:=0$} (b);
  \end{tikzpicture}
  \\[-1ex]
  \begin{tikzpicture} % P(s_0,s_2)
    \draw (0,0) node[rond6,fill=red!40!white,font=\small] (a) {$s_0$}
      node[font=\scriptsize,below=2.5mm] {$\rate\colon 0$} 
      %node[font=\scriptsize,above=2.5mm] {$x \leq 1$}
      node[font=\small,left=3mm] {$P_{0,2}$:};
    \draw ($(a)+(right:1.5)$) node[rond6,vert,font=\small] (b) {$s_5$}
      %node[font=\scriptsize,above=2.5mm] {$x \leq 1$}
      node[font=\scriptsize,below=2.5mm] {$\rate\colon 5$} ;
    \draw ($(b)+(right:2.3)$) node[rond6,fill=red!40!white,font=\small] (c) {$s_2$};
    \draw[font=\scriptsize] 
      (a) edge[-latex'] node[near end,above] {$\update\colon 4$} (b)
      (b) edge[-latex'] node[near end,above] {$\update\colon \!\!-5$} 
            node[near start,above] {$x = 1$}
            node[below] {$x:=0$} (c);
  \end{tikzpicture}
\end{tabular}
\\
  \begin{tikzpicture} % P(s_2,s_2)
    \draw (0,0) node[rond6,fill=red!40!white,font=\small] (a) {$s_2$}
      node[font=\scriptsize,below=2.5mm] {$\rate\colon 2$} 
      %node[font=\scriptsize,above=2.5mm] {$x \leq 1$}
      node[font=\small,left=3mm] {$P_{2,2}$:};
    \draw ($(a)+(right:1.5)$) node[rond6,vert,font=\small] (b) {$s_6$}
      node[font=\scriptsize,below=2.5mm] {$\rate\colon 5$};
      %node[font=\scriptsize,above=2.5mm] {$x \leq 1$}
    \draw ($(b)+(right:1.5)$) node[rond6,vert,font=\small] (c) {$s_7$}
      node[font=\scriptsize,below=2.5mm] {$\rate\colon 2$};
      %node[font=\scriptsize,above=2.5mm] {$x \leq 1$};
    \draw ($(c)+(right:2.3)$) node[rond6,fill=red!40!white,font=\small] (d) {$s_2$};
    \draw[font=\scriptsize] 
      (a) edge[-latex'] node[pos=0.65,above] {$\update:\!\!-3$} (b)
      (b) edge[-latex'] node[near end,above] {$\update\colon 0$} (c)
      (c) edge[-latex'] node[near end,above] {$\update\colon 0$} 
            node[near start,above] {$x = 1$}
            node[below] {$x:=0$} (d);
  \end{tikzpicture}
  %\quad
\caption{An example of SETA $\calA = \protect\tuple{S,T,P}$ with implicit global variant $x \leq 1$. 
The~map~$P$ associates with each $(s_i,s_j) \in T$ the ETP $P_{i,j}$.} \label{fig:etaexample}
\end{figure}
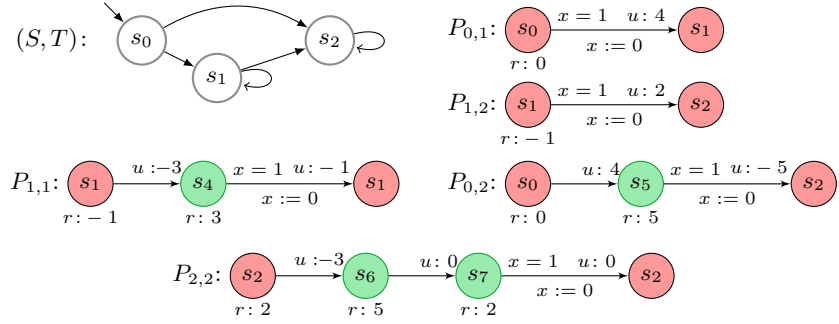

\begin{example}
Consider the SETA $\calA = \tuple{S,T,P}$ depicted in
Fig.~\ref{fig:etaexample} and the energy constraint $E =
[0;6]$. We~describe a step-by-step execution of
Algorithm~\ref{algapp:infiniterun} starting with $s_0 \in S$ and initial
energy interval $I_0 = [0;0]$.

The waiting list is initialised as $W_0 = \{(s_0,I_0,c)\}$. After the
first execution of the main \textbf{while} loop, $W_1
= \{(s_0,I_0,\bar{c})\}$ because $s_0$ does not belong to any simple
cycle of $(S,T)$.
In the second iteration, we~pick the task $(s_0,I_0,\bar{c})$ and we
update the waiting list as $W_2 = \{(s_1,[0;0],c), (s_1,[4;4],c),
(s_2,[0;1],c)\}$. In~the third iteration, we~pick the task
$(s_2,[0;1],c)$ from~$W_2$. Since~$s_2$~belongs to the self-cycle
$\calC = (s_2,s_2)$, we~compute $[0;1] \cap \nu \calR^{-1}_{\calC} =
[0;1] \cap [\frac{5}{3};6] = \emptyset$. Thus, we proceed by computing
$\calR^0([0;1]) = [0;1]$, $\calR^1([0;1]) = [0;0]$ and $\calR^2([0;1])
= \emptyset$, and update the waiting list as $W_3 = \big(W_2 \setminus
(s_2,[0;1],c)\big) \cup \{(s_2,[0;1],\bar{c}),
(s_2,[0;0],\bar{c})\}$. In~the~fourth and fifth iterations, we~pick
the tasks $(s_2,[0;1],\bar{c})$ and $(s_2,[0;0],\bar{c})$,
respectively. Since~$s_2$~cannot escape from the the self-cycle, we
will not insert any tasks in the waiting list, thus having $W_5
= \{(s_1,[0;0],c), (s_1,[4;4],c)\}$. During the sixth iteration, we
pick the task $(s_1,[4;4],c)$. Since~$s_1$ belongs to the self-cycle
$\calC' = (s_1,s_1)$, we compute $[4;4] \cap \nu\calR^{-1}_{\calC'} =
[4;4] \cap \emptyset = \emptyset$. Thus we proceed by computing
$\calR^0([4;4]) = [4;4]$, $\calR^1([4;4]) = [0;3]$, $\calR^2([4;4]) =
[2;2]$, and $\calR^3([4;4]) = \emptyset$ and obtaining $W_6
= \bigl(W_5 \setminus (s_1,[4;4],c)\bigr) \cup \{(s_1,[4;4],\bar{c}),
(s_1,[4;4],\bar{c}), (s_1,[2;2],\bar{c})\}$. In~the seventh iteration,
we~pick the task $(s_1,[4;4],\bar{c})$. The~only transition that
escapes from the self-cycle of~$s_1$ is~$(s_1,s_2)$, thus we get $W_7
= \bigl(W_6 \setminus (s_1,[4;4],\bar{c})\bigr)\cap \{ (s_2,
[5;5],c) \}$. Finally, we~pick the task $(s_2, [5;5],c)$ and since
$[5;5] \cap \nu \calR^{-1}_{\calC''} = [5;5] \cap
[\frac{5}{3};6] \neq \emptyset$ where $\calC'' = (s_2,s_2)$, we~stop the
computation and return $\mathtt{tt}$.
%\qed
\end{example}

\section{Proof of Theorem~\ref{thm-optU}}\label{app-optU}

\thmoptU*

%% In this section, we are interested in computing, given a fixed lower
%% bound, the least upper bound~$U$ such that there is a solution to the
%% energy-constrained infinite-run problem for energy constraint $E =
%% [L;U]$.
%% %\pat{Maybe this is enough to define the problem here, and we
%% %  should remove the definitions of the problems in the previous
%% %  sections}
%% %\NMshort{The problem is ``defined'' in the statement of the thm, I think we don't need to define it further.}

%% We will prove the following result:

%% \begin{theorem}
%%   Given a fixed lower bound $L$, the existence of an upper bound $U$,
%%   such that there is a solution to the energy-constrained infinite-run
%%   problem for energy constraint $E = [L;U]$, is decidable for flat
%%   SETA.

%%   For depth-1 flat SETA, one can furthermore compute the least upper
%%   bound, if there is one.
%% \end{theorem}

%% The rest of this section is devoted to the proof of this result.

\begin{proof}
  Let $\calA$ be a flat SETA and $L \in \bbQ$ be the fixed lower
  bound.

  Let $\calC$ be a simple cycle of $\calA$ (which may formally be the
  concatenation of several energy timed paths but w.l.o.g. we can
  assume it is a single energy timed path).  We~analyze when this
  cycle can be iterated, and for which upper bound~$U$.  Adding $U$ as
  a parameter, we can refine the approach
  of Section~\ref{sec-eta}, and safely define the
  ternary energy relation $\calR_\calC(w_0,w_1,U)$ as
  $\calR_\calC^{[L;U]}(w_0,w_1)$. It~is a convex subset of~$\bbR^3$,
  described as a conjunction of a finite set of linear constraints
  over~$w_0$, $w_1$ and~$U$ (with non-strict inequalities and rational
  coefficients).  We~can then define the predicate
  $\calR_\calC^\infty(a,b,U)$ as:
  \[
  \calR_\calC^\infty(a,b,U) \iff L \le a \le b \le U \wedge \forall
  w_0 \in [a;b],\ \exists w_1 \in [a;b].\ \calR_\calC(w_0,w_1,U)
  \]
  characterizing the intervals~$[a;b]$ and upper-bounds~$U$ such
  that~$\calC$ can be iterated infinitely many times from any initial
  value in~$[a;b]$ with energy constraint~$[L;U]$.  This~relation is
  again a convex subset of~$\bbR^3$, described as a conjunction of a
  finite set of linear constraints over~$a$, $b$ and~$U$ (with
  non-strict inequalities and rational coefficients).

  For a fixed $U \in \bbQ$, this predicate coincides with the greatest
  fixpoint $\nu (\calR_{\calC}^{[L;U]})^{-1}$ that was
  discussed on page~\pageref{page-subsec:composition_fixpoints}. Hence
  % \smallpat{there should be somewhere a proof that fixpoints compute
  %   what we want...} 
  $\calR_\calC^\infty(a,b,U)$ holds if and only if for every $w_0 \in
  [a;b]$, there is an infinite run starting at $(s_0,\mathbf{0},w_0)$
  (where $s_0$ is the first state of $\calC$) satisfying the energy
  constraint $[L;U]$. Furthermore,
  %% we obviously have that
  %% all of $a$, $b$, and $U$ are larger than or equal to $L$. Hence
  since $a\geq L$ and $L$ is fixed, and because we only have
  non-strict constraints, 
  there is a least value $a^\calC_{\min} \in \bbQ$ such that the set
  $\{(b,U) \mid \calR_\calC^\infty(a^\calC_{\min},b,U)\}$ is non-empty.
  %(all inequalities defining the predicate are non-strict).
  In~particular:%
  \begin{lemma}
    \begin{itemize}
    \item For any energy level $w<a^\calC_{\min}$, and for any~$U$,
      there are no infinite run from $(s_0,\mathbf{0},w)$ cycling
      around~$\calC$ and satisfying energy constraint~$[L;U]$;
      %, for some $U$.
    \item For every $w \ge a^\calC_{\min}$, there exist~$U$ and an
      infinite run from $(s_0,\mathbf{0},w)$ cycling around~$\calC$
      and satisfying energy constraint $[L;U]$.
      % for some $U$.
    \end{itemize}
  \end{lemma}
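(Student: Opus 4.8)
The plan is to exploit the fact that the cycle's energy relation $\calR_\calC(\cdot,\cdot,\cdot)$ is invariant under a simultaneous upward shift of the initial level, the final level, and the upper bound. First I would record the following \emph{shift property}: for every $t\ge 0$, $\calR_\calC(w_0,w_1,U)$ implies $\calR_\calC(w_0+t,w_1+t,U+t)$. This is read off the defining formula: $\Phi_{\text{timing}}$ mentions neither the energy level nor $U$, while the energy part asserts that every intermediate energy level --- an affine expression in the delays of the form $w_0+p(d)$ --- lies in $[L;U]$; replacing $w_0$ by $w_0+t$ and $U$ by $U+t$ turns $L\le w_0+p(d)\le U$ into the weaker $L-t\le w_0+p(d)\le U$, so the same delay vector still witnesses the shifted relation. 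Pushing this through the $\forall w_0\exists w_1$ alternation in the definition of $\calR_\calC^\infty$ (given $w_0'\in[a+t;b+t]$, apply the inner witness at $w_0'-t\in[a;b]$ and shift its output up by $t$) yields: $\calR_\calC^\infty(a,b,U)$ implies $\calR_\calC^\infty(a+t,b+t,U+t)$ for all $t\ge 0$.

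For the first item, suppose for contradiction that some energy level $w<a^\calC_{\min}$ admits an infinite run cycling around~$\calC$ and satisfying $[L;U]$ for some $U$. By the fixpoint characterisation recalled on page~\pageref{page-subsec:composition_fixpoints}, $w$ then lies in $\nu(\calR_\calC^{[L;U]})^{-1}$, which is a non-empty closed interval $[a;b]\subseteq[L;U]$; being a fixpoint of $(\calR_\calC^{[L;U]})^{-1}$, the interval $[a;b]$ satisfies the defining formula of $\calR_\calC^\infty$, i.e.\ $\calR_\calC^\infty(a,b,U)$ holds. But then $a\le w<a^\calC_{\min}$, contradicting the minimality of $a^\calC_{\min}$ as the least first coordinate of a triple satisfying $\calR_\calC^\infty$.

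For the second item, fix (by definition of $a^\calC_{\min}$) values $b_0$ and $U_0$ with $\calR_\calC^\infty(a^\calC_{\min},b_0,U_0)$, so in particular $a^\calC_{\min}\le b_0\le U_0$. Given $w\ge a^\calC_{\min}$, put $t:=\max(0,\,w-b_0)\ge 0$. The shift property gives $\calR_\calC^\infty(a^\calC_{\min}+t,\,b_0+t,\,U_0+t)$, and a short case check gives $a^\calC_{\min}+t\le w\le b_0+t$. Since $\calR_\calC^\infty(a,b,U)$ exactly expresses that $[a;b]$ is a post-fixpoint of $(\calR_\calC^{[L;U]})^{-1}$, hence contained in $\nu(\calR_\calC^{[L;U]})^{-1}$, the level $w$ belongs to $\nu(\calR_\calC^{[L;U_0+t]})^{-1}$; by the characterisation, this produces an infinite run from $(s_0,\mathbf{0},w)$ cycling around~$\calC$ and satisfying $[L;U_0+t]$. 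Taking $U:=U_0+t$ finishes the argument.

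The step I expect to be the crux is the shift property and, more precisely, verifying that it survives the universal-then-existential quantifier alternation in $\calR_\calC^\infty$; once that is settled both items reduce to bookkeeping with intervals. A minor point: $\calC$ may a priori be a concatenation of several energy timed paths, but as already observed one may treat it as a single ETP, and nothing in the shift argument is sensitive to this.
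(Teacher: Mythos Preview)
Your proposal is correct and follows essentially the same approach as the paper. For the first item the paper simply invokes the fixpoint analysis (which you spell out); for the second item the paper fixes an infinite run from $a^\calC_{\min}$ satisfying $[L;U]$ and argues that ``mimicking the same delays'' from any $w\ge a^\calC_{\min}$ yields an infinite run satisfying $[L;U+w-a^\calC_{\min}]$ --- this is precisely your shift property, the only cosmetic difference being that you route it through $\calR_\calC^\infty$ (shifting the post-fixpoint interval) while the paper shifts a concrete run, and that you shift by $\max(0,w-b_0)$ rather than $w-a^\calC_{\min}$.
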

  \begin{proof}
    The first part of the lemma is a direct consequence of the
    analysis of the fixpoint $\nu
    \Big(\calR_{\calC}^{[L;U]}\Big)^{-1}$ made in
    Sec.~\ref{sec-eta}. 
    % \pat{+ the fixpoint characterizes the set of energy levels from
    % which one can
    % iterate. Should be done in
    % Section~\ref{subsec:composition_fixpoints}.}

    For the second property, we first realize that there is $(b,U) \in
    \bbQ^2$ such that $\calR^\infty_\calC(a_{\min}^\calC,b,U)$, which
    means in particular that there is an infinite run from
    $(s_0,\mathbf{0},a_{\min}^\calC)$ cycling around $\calC$ and
    satisfying the energy constraint $[L;U]$. Now, by mimicking the
    same delays, it is easy to get that for every $w \ge
    a_{\min}^\calC$, there is an infinite run from
    $(s_0,\mathbf{0},w)$ satisfying the energy constraint
    $[L;U+w-a_{\min}^\calC]$. \qed
  \end{proof}

  Coming back to our automaton $\calA$: if there is a solution to the
  energy-constrained infinite-run problem in $\calA$ for some upper
  bound $U$, the witness infinite run must end up cycling in one of
  the cycles of $\calA$. Let $\calC$ be a cycle. We know from the
  lemma above that, to be able to generate a witness infinite run
  cycling around $\calC$, one needs to be able to reach the start of
  that cycle with at least energy level $a_{\min}^\calC$.  Note that
  if we find a finite run reaching the start of cycle~$\calC$ with
  energy level $w \ge a_{\min}^{\calC}$ and satisfying the energy
  constraint $[L;+\infty)$ (only a lower bound constraint) along the
  way, then for some~$U'$ this finite path satisfies the energy
  constaint~$[L;U']$; the concatenation of that finite run with a
  witness infinite run cycling along $\calC$ while satisfying some
  $[L;U]$-energy constraint gives a witness infinite run for the
  existence of an upper bound (with upper bound $\max(U;U')$).
  %% Indeed, any finite
  %% run satisfying the energy constraint $[L,+\infty)$ also satisfies
  %% the energy constraint $[L,U]$ for some $U$.

  We therefore study finite runs leading to the start of
  cycle~$\calC$, with only the lower bound~$L$ on the energy
  level. Recall that this problem is in general not easy to
  solve~\cite{BLM14}, and only single-clock automata can be handled in
  general~\cite{BFLM10}. However in the special setting of flat SETA,
  we are able to decide the existence of a well-adapted finite run
  reaching the start of cycle~$\calC$.  Let~$\calP$ be an energy timed
  path. Following a similar approach to the approach developed on
  page~\pageref{page-subsec:binary_energy_relations}, one~can easily define
  a predicate~$\calS_\calP(w_0,w_1)$ that is true whenever there is a
  run satisfying the energy constraint~$[L;+\infty)$, starting with
  energy level~$w_0$ and ending with energy level~$w_1$. From that
  predicate, one can derive the predicates
  $\calS_\calP^{\uparrow}(w_0)$ (resp. $\calS_\calP^{=}(w_0)$,
  $\calS_\calP^{\times}(w_0)$) such that:
  \begin{itemize}
  \item $\calS_\calP^{\uparrow}(w_0) \iff \exists w_1>w_0\
    \text{s.t.}\ \calS_\calP(w_0,w_1)$;
  \item $\calS_\calP^{=}(w_0) \iff \calS_\calP(w_0,w_0)\ \text{and}\
    \neg \calS_\calP^{\uparrow}(w_0)$;
  \item $\calS_\calP^{\times}(w_0) \iff \forall w_1 \ge w_0,\ \neg
    \calS_\calP(w_0,w_1)$.
  \end{itemize}
  In the two first cases, and only in these cases, the path can be
  iterated while satisfying the energy constraint $[L;+\infty)$. In
  the first case, by iterating the path, one can increase the energy
  level up to an arbitrarily high value. In~the second case, only
  energy level~$w_0$ can be reached. These properties are very easy to
  prove (since there is no upper bound), and are therefore omitted.

  Let $\calA$ be a SETA with
    %% underlying graph~$(S,T)$ and 
  initial energy level~$w_0$. We~perform the following (partial)
  labelling~$\lambda$ of the graph in a forward manner:
  \begin{itemize}
  \item we label the initial macro-state~$m_0$ with $\lambda(m_0) =
    \top$ if there is a path $\calP$ from~$m_0$ to itself, where
    $\calS_\calP^\uparrow(e_0)$ holds; Otherwise we set $\lambda(m_0)
    = w_0$.
  \item let $m$ be a macro-state which does not belong to a cycle, and
    such that all its predecessors have been already labelled
    with~$\lambda$. Write~$(m_i)_{1 \le i \le p}$ for a non-empty list
    of its predecessors, with redundancies if there are multiple
    transitions between macro-states. For~each~$1\le i \le p$,
    write~$\calP_i$ for the ETP labelling the edge~$(m_i,m)$. If~there
    is some~$i$ such that $\lambda(m_i) =\top$, then set $\lambda(m) =
    \top$. Otherwise, define~$w'_i$ for the largest energy level such
    that $\calS_{\calP_i}(w_i,w'_i)$ holds ($w'_i$~can be equal
    to~$+\infty$ whenever $w'_i$ can be made arbitrarily
    large). If~there is a cycle~$\calC$ starting at~$m_i$ such that
    $\calS_\calC^\uparrow(w'_i)$, then set $\lambda(m) =
    \top$. If~$w'_i = +\infty$ for some~$i$, then set $\lambda(m) =
    \top$, otherwise set $\lambda(m) = \max_{1 \le i \le p} w'_i$.
  \end{itemize}

  The following lemma concludes the decidability proof for the
  existence of an upper bound.
  \begin{lemma}
    There is a solution to the upper-bound existence problem if,
    and only~if, there is a cycle $\calC$ starting at some macro-state
    $m$ in $\calA$ such that $a_{\min}^{\calC}$ is well-defined, and
    such that $\lambda(m) = \top$ or $\lambda(m) \ge
    a_{\min}^{\calC}$.
  \end{lemma}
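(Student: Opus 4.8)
The plan is to prove the two implications separately, relying on the lemma above about $a_{\min}^\calC$ and on the intended meaning of the forward labelling~$\lambda$: namely that for every macro-state $m$, either $\lambda(m)=\top$ and the energy levels reachable at $m$ from $(s_0,\mathbf{0},w_0)$ under the one-sided constraint $[L;+\infty)$ are unbounded, or $\lambda(m)\in\bbQ$ is exactly the (attained) supremum of those reachable levels.

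\emph{Sufficiency.} Assume $\calC$ is a simple cycle through a macro-state $m$ with $a_{\min}^\calC$ well-defined and $\lambda(m)=\top$ or $\lambda(m)\ge a_{\min}^\calC$. First I would exhibit a finite run $\rho_0$ from $(s_0,\mathbf{0},w_0)$ to $(m,\mathbf{0},w)$ satisfying $[L;+\infty)$ with $w\ge a_{\min}^\calC$: if $\lambda(m)\in\bbQ$ one realises $w=\lambda(m)$, the attained supremum recalled above; if $\lambda(m)=\top$, the reachable levels at $m$ are unbounded, so some reachable $w\ge a_{\min}^\calC$ works. Being finite, $\rho_0$ also satisfies $[L;U']$ for $U':=\max$ of its energy values. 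By the lemma above, from $w\ge a_{\min}^\calC$ there are an upper bound $U''$ and an infinite run from $(m,\mathbf{0},w)$ cycling around $\calC$ and satisfying $[L;U'']$. Concatenating $\rho_0$ with this infinite run yields an infinite run from $(s_0,\mathbf{0},w_0)$ satisfying $[L;\max(U',U'')]$, hence a solution to the upper-bound existence problem.

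\emph{Necessity.} Conversely, let $\rho$ be an infinite run from $(s_0,\mathbf{0},w_0)$ satisfying $[L;U]$ for some $U\in\bbQ$. Since $\calA$ is flat, $(S,T)$ has finitely many simple cycles and every macro-state lies on at most one of them, so the macro-state sequence of $\rho$ is ultimately a periodic traversal of a single simple cycle $\calC$. Fix a macro-state $m$ on $\calC$ and split $\rho$ at the first visit to $m$ past which $\rho$ never leaves $\calC$; this gives a finite prefix $\rho_0$ reaching $(m,\mathbf{0},w^*)$ and an infinite suffix that cycles around $\calC$ from $(m,\mathbf{0},w^*)$, both still satisfying $[L;U]$. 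The suffix shows that $\calC$ can be iterated infinitely from $w^*$ with upper bound $U$, so $\nu(\calR_{\calC}^{[L;U]})^{-1}$ is a non-empty closed rational interval and in particular $a_{\min}^\calC$ is well-defined; the first item of the lemma above then forces $w^*\ge a_{\min}^\calC$. Finally $\rho_0$ reaches $m$ at energy $w^*$ while satisfying $[L;U]\subseteq[L;+\infty)$, hence by correctness of $\lambda$ we get $\lambda(m)=\top$ or $\lambda(m)\ge w^*\ge a_{\min}^\calC$, as required.

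The step I expect to be the real obstacle is establishing the correctness of $\lambda$ stated in the first paragraph. I would prove it by induction along the order in which macro-states are labelled, which is well-founded precisely because flatness ensures a macro-state is never revisited once its unique cycle has been left. The inductive step must handle a predecessor $m_i$ lying on a cycle $\calC_i$: if $\calC_i$ is of type $\calS_{\calC_i}^{\uparrow}$ at the relevant level, it lets one raise the energy arbitrarily and $\top$ propagates, whereas if it is of type $\calS^{=}_{\calC_i}$ or $\calS^{\times}_{\calC_i}$ it cannot increase the energy beyond what has already been recorded, so the supremum at $m$ is still captured by the $\max$ over the incoming paths. Closedness and convexity of the predicates $\calS_\calP$ (inherited from those of the binary energy relations) guarantee that these suprema are rational and attained, and termination is immediate since a flat SETA has only finitely many macro-states and cycles.
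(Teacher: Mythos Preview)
Your proof is correct and follows essentially the same approach as the paper. The paper's own proof is much terser: it simply asserts the invariant of the labelling algorithm (that $\lambda(m)=\top$ iff arbitrarily high energy levels are reachable at~$m$, and $\lambda(m)=\alpha$ iff $\alpha$ is the maximal reachable level), which is precisely what you call the ``correctness of $\lambda$'', and leaves both directions of the equivalence implicit. You fill in those directions explicitly and also sketch the induction establishing the invariant, so your argument is a more complete version of the same proof.
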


  \begin{proof}
    We can prove the following invariant to the labelling algorithm:
    \begin{itemize}
    \item $\lambda(m) = \top$ if and only if for every $\alpha \in
      \bbR$ there is $w \ge \alpha$ such that energy level $w$ can be
      achieved when reaching $m$;
    \item $\lambda(m) = \alpha$ if, and only~if, $\alpha$ is the maximal
      energy level that can be reached at~$m$.\qed
    \end{itemize}
  \end{proof}

  It remains to discuss the synthesis of the least upper bound for
  which there is a solution to the upper bound synthesis problem. In
  this case, we will restrict to depth-1 flat SETA, that is the graph
  underlying the SETA is a tree, with self-loops at leaves.
  The~general case of flat SETA might be solvable, but we do not have
  a complete proof yet of that general case.  We assume we have found
  a bound~$U$ such that $\calA$ satisfies the infinite path problem
  with energy constraint~$[L;U]$.
  
  Since $\calA$ is depth-1, it can be decomposed as a union of timed
  paths followed by a cycle. Let~$\calP$ be such a path, followed by
  cycle $\calC$. We assume w.l.o.g. that there is an infinite run
  satisfying the energy constraint $[L;U]$ following~$\calP$ and
  cycling along $\calC$. We define the predicate $\calR_{\calP \cdot
    \calC^\omega}(U')$ by
  \[
  U' \le U\ \wedge\ \exists L \le a \le w_1 \le b \le U'\ \text{s.t.}\
  \calR_\calP(w_0,w_1,U') \text{ and } \calR_\calC^\infty(a,b,U')
  \]
  It~is easy to check that $\calR_{\calP \cdot \calC^\omega}(U')$
  holds if and only if $U' \le U$ is a correct upper bound for a
  witness along $\calP \cdot \calC^\omega$. We~can simplify the
  predicate $\calR_{\calP \cdot \calC^\omega}(U')$, and obtain the
  least upper bound as the smallest~$U'$ such that $\calR_{\calP \cdot
    \calC^\omega}(U')$ holds for some~$\calP$ and~$\calC$ in~$\calA$.
  \qed
\end{proof}

\section{Proof of Theorem~\ref{unc-thm}}

The   assumptions   of   perfect   knowledge   of   energy-rates   and
energy-updates  are often  unrealistic, as  is the  case in  the HYDAC
oil-pump  control problem  (see~Section~\ref{sec-hydac}). Rather,  the
knowledge  of energy-rates  and  energy-updates comes  with a  certain
imprecision,  and the  existence of  energy-constrained infinite  runs
must take these into account in  order to be robust.  In~this section,
we~revisit the  energy-constrained infinite-run problem in  the setting
of imprecisions, by~viewing it as a two-player game problem.

\subsubsection{Adding uncertainty to ETA.}

\begin{definition}
  An \emph{energy timed automaton with \textbf{uncertainty}}~(ETAu
  for~short) is a
  tuple~$\calA=\tuple{S,S_0,\Cl,\inv,\rate,T,\epsilon,\Delta}$, where
  $\tuple{S,S_0,\Cl,\inv,\rate,T}$ is an energy timed automaton, with
  $\epsilon\colon S\to \bbQ_{>0}$ assigning imprecisions to rates of
  states and $\Delta\colon T\to\bbQ_{>0}$ assigning imprecisions to
  updates of transitions.
\end{definition}

In the obvious manner, this notion of uncertainty extends to
\emph{energy timed path with uncertainty}~(ETPu) as well~as to
\emph{segmented energy timed automaton with uncertainty}~(SETAu).

Let $\calA=\tuple{S,S_0,\Cl,\inv,\rate,T,\epsilon,\Delta}$ be an ETAu,
and let $\tau = (t_i)_{0 \le i < n}$ be a finite sequence of
transitions, with $t_i = (s_i,g_i,u_i,z_i,s_{i+1})$ for every $i$.
A~finite \emph{run} in $\calA$ on $\tau$ is a sequence of
configurations $\rho=(\ell_j,v_j,w_j)_{0\leq j\leq 2n}$ such that
there exist a sequence of delays~$d=(d_i)_{0\leq i<n}$ for which the
following requirements hold:
\begin{itemize}
\item for all $0\leq j<n$, $\ell_{2j}=\ell_{2j+1}=s_j$, and
  $\ell_{2n}=s_n$;
\item for all $0\leq j<n$, $v_{2j+1}=v_{2j}+d_j$ and
  $v_{2j+2}=v_{2j+1}[z_j\to 0]$;
%\item $v_{2j+2}=v_{2j+1}[z_i\to 0]$ for all $0\leq j<n$;
%\item $v_{2j+1}=v_{2j}+d_j$ for all $0\leq j<n$;
\item for all $0\leq j<n$, $v_{2j}\models \inv(s_j)$ and
  $v_{2j+1}\models \inv(s_j) \wedge g_j$;
\item for all $0\leq j<n$, it holds that $w_{2j+1}=w_{2j} + d_j\cdot
  \alpha_j$ and $w_{2j+2}=w_{2j+1} + \beta_j$, where
  $\alpha_j\in[r(s_j)-\epsilon(s_j), r(s_j)+\epsilon(s_j)]$ and
  $\beta_j\in[u_j-\Delta(t_j), u_j+\Delta(t_j]$.
\end{itemize}
% A    run    in    an    energy    timed    path    with    uncertainty
% $\calP=\tuple{S,S_0,\Cl,\inv,\rate,T,\epsilon,\Delta}$            with
% $S=\{s_i        \mid        0\leq         i\leq        n\}$        and
% $T=\{t_i=(s_i,g_i,u_i,z_i,s_{i+1}) \mid 0\leq i<n\}$  is a sequence of
% configurations  $\rho=(\ell_j,v_j,e_j)_{0\leq  j\leq  2n}$  such  that
% there exists a sequence of  delays~$d=(d_i)_{0\leq i<n}$ for which the
% following requirements hold:
% \begin{itemize}
% \item $\ell_{2j}=\ell_{2j+1}=s_j$ for all $0\leq j<n$, and $\ell_{2n}=s_n$;
% \item $v_{2j+2}=v_{2j+1}[z_i\to 0]$ for all $0\leq j<n$;
% \item $v_{2j+1}=v_{2j}+d_j$ for all $0\leq j<n$;
% \item $v_{2j}\models \inv(s_j)$ and $v_{2j+1}\models \inv(s_j) \wedge g_j$
%   for all $0\leq j<n$;
% \item     for      all     $0\leq     j<n$,     it      holds     that
%   $e_{2j+1}=e_{2j}        +        d_j\cdot       \alpha_j$        and
%   $e_{2j+2}=e_{2j+1}            +           \beta_j$,            where
%   $\alpha_j\in[r(s_j)-\epsilon(s_j),     r(s_j)+\epsilon(s_j)]$    and
%   $\beta_j\in[u_j-\Delta(t_j), u_j+\Delta(t_j]$.
% \end{itemize}
%
We say that $\rho$ is a possible outcome of $d$ along $\tau$, and that
$w_{2n}$ is a possible final energy level for $d$ along $\tau$, given
initial energy level $w_0$.  Note that in the case of uncertainty, a
given delay sequence $d$ may have several possible outcomes (and
corresponding energy levels) along a given transition sequence $\tau$
due to the uncertainty in rates and updates.  In particular, we~say
that $\tau$ together with $d$ with initial energy level~$w_0$ satisfy
an energy constraint $E\in \calI(\bbQ)$ if any possible outcome run
$\rho$ for $t$ and $d$ starting with~$w_0$ satisfies~$E$. All these
notions are formally extended to ETPu.

Given an ETPu $\calP$, and a delay sequence $d$ for $\calP$ satisfying
a given energy constraint $E$ from initial level $w_0$, we denote by
$\calE^E_{\calP,d}(w_0)$ the set of possible final energy levels.
It~may be seen that $\calE^E_{\calP,d}(w_0)$ is a closed subset
of~$E$.

Now let $\calA=\tuple{S,T,P}$ be an SETAu and let $E$ be an energy
constraint.  A (memoryless\footnote{for the infinite-run problem we
  consider it may be shown that memoryless strategies suffice.})
\emph{strategy} $\sigma$ returns for any macro-configuration $(s,w)$
($s\in S$ and $w\in E$) a pair $(t,d)$, where $t=(s,s')$ is a
successor edge in $T$ and $d\in\bbR+^n$ is a delay sequence for the
corresponding energy timed path, i.e. $n=|P(t)|$.  A (finite or
infinite) execution of $(\rho^i)_i$ writing $\rho^i=(\ell^i_j, x^i_j,
w^i_j)_{0\leq j\leq 2n_i}$, is an outcome of $\sigma$ if the following
conditions hold:
\begin{itemize}
\item $s^i_0$ and $s^i_{2n_i}$ are macro-states of~$\calA$, and
  $\rho^i$ is a possible outcome of $P(s^i_0,s^i_{2n_i})$ for $d$
  where $\sigma(s^i_0,w^i_0)=\big((s^i_0,s^i_{2n_i}),d\big)$;
\item $s^{i+1}_0 = s^i_{2n_i}$ and $w^{i+1}_0=w^i_{2n_i}$.
\end{itemize}
%
% A  strategy  $\sigma$ is  \emph{winning}  with  respect to  an  energy
% constraint $E$ if  all runs $(\rho^i)_i$ that are  outcome of $\sigma$
% from  the  initial  configuration   $(s_0,0)$  satisfy  $E$.  Now  the
% \emph{LU-problem} in the setting of uncertainty is as follows: given a
% a segmented  energy timed automata with uncertainty and an energy
% constraint $E$,  does there  exist a wining  strategy with  respect to
% $E$.
Now  we may  formulate  the  infinite-run  problem in  the
setting of uncertainty:

\begin{definition}
  Let $\calA$ be a SETAu, $E\in\calI(\bbQ)$ be an energy constraint,
  and $(s_0,w_0)$ an initial macro-configuration ($s_0$ macro-state of
  $\calA$ and $w_0 \in E$ energy level).  The \emph{energy-constrained
    infinite-run problem} is as follows: does there exist a strategy
  $\sigma$ for $\calA$ such that all runs $(\rho^i)_i$ that are
  outcome of $\sigma$ starting from configuration $(s_0,w_0)$ satisfy
  $E$?
\end{definition}

% A  strategy  $\sigma$ is  \emph{winning}  with  respect to  an  energy
% constraint $E$ if  all runs $(\rho^i)_i$ that are  outcome of $\sigma$
% from  the  initial  configuration   $(s_0,0)$  satisfy  $E$.  Now  the
% \emph{LU-problem} in the setting of uncertainty is as follows: given a
% a \fbox{simple?}  energy timed automata with uncertainty and an energy
% constraint $E$,  does there  exist a wining  strategy with  respect to
% $E$.

\subsection{Ternary energy relations}

% \todo{KGL: The ternary energy relation for an energy timed
%   path must be defined.  This will demonstrate the convexity.

%   Make the assumption that on any energy timed path some guard $g_i$
%   will compare some clock $x_j$ with a positive lower bound.  This
%   ensures that some minimum time will necessary elapse in compleeting
%   the path.

%   Recall the result from the LU-Robust note.
%   }

% \begin{align*}
%   R_P^H(w_0,u, v) & \iff \tag{energy relation with uncertainty}\\
% & \exists \, d_0, \dots, d_{n-1}. \,
%  d_0 > 0 \wedge \cdots \wedge d_{n-1} > 0 \wedge \sum_{i = 0}^{n-1} d_i = 1 \\
% &  0\leq w_0\leq H \wedge 0\leq w_1\leq H \,\,\wedge \\
% &  0\, \leq w_0+ \sum_{i=0}^{j} d_i \cdot (r_i - \varepsilon_i) + \sum_{i = 0}^{j-1} c_i \,\,\,(j=0 \ldots n-1) \,\,\wedge\\
% &  w_0+ \sum_{i=0}^{j} d_i \cdot (r_i + \varepsilon_i) + \sum_{i = 0}^{j-1} c_i \leq H \,\,\,(j=0 \ldots n-1) \,\,\wedge\\
% &  0 \leq w_0+ \sum_{i=0}^{j} d_i \cdot (r_i - \varepsilon_i) + c_i \,\,\,(j=0 \ldots n-1) \,\,\wedge\\
% &  w_0+ \sum_{i=0}^{j} d_i \cdot (r_i + \varepsilon_i) + c_i \leq H \,\,\,(j=0 \ldots n-1) \,\,\wedge\\
% &  u \leq w_0+ \sum_{i=0}^{n-1} d_i \cdot (r_i - \varepsilon_i) + c_i  \wedge {} \\
% & v \geq w_0+ \sum_{i=0}^{n-1} d_i \cdot (r_i+\varepsilon_i) + c_i
% \end{align*}

Let $\calP=(\{s_i \mid 0 \le i \le
n\},\{s_0\},X,I,r,T,\epsilon,\Delta)$ be an ETPu and let
$E\in\calI(\bbQ)$ be an energy constraint.  The ternary energy
relation $\calU^E_\calP\subseteq E\times E\times E$ relates all
triples $(w_0,a,b)$ for which there is a strategy $\sigma$ such that
any outcome of $\rho$ from $(s_0,\mathbf{0},w_0)$ satisfies $E$ and
ends in a configuration $(s_n,\mathbf{0},w_1)$ where $w_1 \in[a;b]$.
This relation can be characterized by the following first-order
formula:
%
% \NMshort{We could comment out the part below, until the second
%   expression for~$\calR$}
% \pat{I don't see the point}
\begin{align*}
  \calU_\calP^E(w_0,a,b) \iff &
     \exists (d_i)_{0\leq i<n}. \forall
     (\alpha_i\in[r(s_i)-\epsilon(s_i);r(s_i)+\epsilon(s_i)])_{0\leq i<n}.\\
  &  \forall
    (\beta_i\in[u_j-\Delta(t_j);u_j+\Delta(t_j)])_{0\leq i<n}.\\
    & \Phi_{\text{timing}} \wedge \Phi_{\text{energy}}^u \wedge
      a\leq w_0+\sum_{k=0}^{n-1}(d_k\cdot \alpha_k + \beta_k)\leq b
    %% &
    %% \bigwedge_{0\leq i<n} \left[d_i \in d(s_i) \wedge
    %% \sum_{k=0}^{i} d_k \in g(s_i)\right] \wedge {}\\
    %% &\bigwedge_{0\leq i<n} w_0+\sum_{k=0}^{i}(d_k\cdot r(s_k))
    %% + \sum_{k=0}^{i-1}u_k \in E \wedge{} \\
    %% &\bigwedge_{0\leq i<n} w_0+\sum_{k=0}^{i}(d_k\cdot r(s_k))
    %%     + \sum_{k=0}^{i}u_k \in E.
\end{align*}
%\NMshort{\fbox{$\exists$} should be $\forall$, and the formula is
%quadratic (while Mjollnir only eats linear formulas).
%See example~\ref{ex7} below for a possible solution to this.}
%
where $\Phi_{\text{timing}}$ encodes all the timing constraints that
the sequence~$(d_i)_{0\leq i<n}$ has to fulfill and is identical to
that used in the case of full precision.
%(Sec.~\ref{subsec:binary_energy_relations}).
Also $\Phi_{\text{energy}}^u$ encodes the energy constraints relative
to~$E$.  Formula~$\Phi_{\text{energy}}^u$ is similar to
$\Phi_{\text{energy}}$ from Sec.~\ref{sec-eta},
but refers to $\alpha_i$ and $\beta_i$ rather than to the nominal
rates $r(s_j)$ and updates~$u_i$.

The expression above has two drawbacks: it~mixes existential and
universal quantifiers (which may severely impact efficiency), and the
arithmetic expression is quadratic (for which no efficient tools
provide quantifier elimination). A~better way to characterize the
ternary relation is by expressing inclusion of the set of reachable
energy levels in the energy constraint:
\begin{multline*}
  \calU_\calP^E(w_0,a,b) \iff 
     \exists (d_i)_{0\leq i<n}.\
     \Phi_{\text{timing}} \wedge \Phi_{\text{energy}}^i \wedge {}\\
    \qquad w_0+\sum_{k=0}^{n-1}(r(s_k)\cdot d_k + u_k) +
            \sum_{k=0}^{n-1}([-\epsilon(s_k);\epsilon(s_k)]\cdot d_k
            + [-\Delta(t_k); \Delta(t_k)]) \subseteq [a;b]
\end{multline*}
where $ \Phi_{\text{energy}}^i$ encodes the energy constraints as the
inclusion of the interval of reachable energy levels in the energy
constraint (in~the same way as we do on the second line of the
formula). Interval inclusion can then be expressed as constraints on
the bounds of the intervals.  This~way, we~get linear arithmetic
expressions and no quantifier alternations.
It~is clear  that $\calU^E_{\calP}$ is a closed, convex
subset  of $E\times  E\times  E$  and can  be  described  as a  finite
conjunction  of  linear  constraints  over  $w_0,  a$  and  $b$  using
quantifier elimination.

%% \begin{example}\label{ex7}
%%   We   illustrate    the   above   translation   on    the   ETPu   of
%%   Fig.~\ref{fig-unctp}.   For energy  constraint  $[0;5]$, the  energy
%%   relation can be written as:
%% \begin{xalignat*}1
%% & \calU_\calP^E(w_0,a,b) \iff \exists d_0,d_1.\  d_0\in[0.25;1] \wedge d_1\in [0;1] \wedge d_0+d_1=1 \wedge w_0\in[0;5] \wedge{}\\
%% & \qquad w_0+ [1.9; 2.1]\cdot d_0 \subseteq [0;5]  \wedge{}\\
%% & \qquad w_0+ [1.9; 2.1]\cdot d_0 +[-3.1;-2.9]\subseteq [0;5]  \wedge{}\\
%% & \qquad w_0+ [1.9; 2.1]\cdot d_0 +[-3.1;-2.9] + [3.9;4.1]\cdot d_1\subseteq [0;5]  \wedge{}\\
%% & \qquad w_0+ [1.9; 2.1]\cdot d_0 +[-3.1;-2.9] + [3.9;4.1]\cdot d_1+[-0.1;0.1]\subseteq [a;b]\subseteq [0;5]] 
%% \end{xalignat*}
%% %\fbox{write the energy constraint}.
%% %
%% Applying quantifier elimination, we end up with:
%% \begin{xalignat*}1
%% \calU_\calP^E(w_0,a,b) \iff &
%% 0\leq a\leq b\leq 5 \wedge b\geq a+0.6 \wedge
%% a-0.2 \leq w_0 \leq b+0.7 \wedge{} \\
%% & (4.87+1.9\cdot a)/3.9  \leq w_0 \leq (7.27+2.1\cdot b)/4.1
%% \end{xalignat*}
%% %\fbox{write the final constraint on $w_0$, $u$, and $v$.}
%% We~can use this relation in order to compute the set of initial energy
%% levels from which there is a strategy to end up in~$[2.5;3.1]$ (which
%% was the set of possible final energy levels in the example of
%% Fig.~\ref{fig-unctp}). We~get $w_0\in[7.4/3;13.78/4.1]$, which is
%% (under-)approximately $w_0\in[2.467;3.360]$.
%% %\pat{The rest of the example is postponed to section 3.3}
%% \end{example}

\subsection{Algorithm for SETAu}

Let $\calA=(S,T,P)$ be a SETAu and let $E\in\calI(\bbQ)$ be an energy
constraint. Let $\calW\subseteq S\times E$ be the maximal set of
configurations satisfying the following:
\begin{align}
  (s,w)\in\calW \,\Rightarrow & \,\exists t=(s,s')\in T. \exists a,b \in
                       E. \nonumber\\
  & \,\,\,\calU^E_{P(t)}(w,a,b) 
    \wedge \forall w'\in[a;b]. (s',w')\in\calW
\label{Weq-app}
\end{align}

Now $\calW$ is easily shown to characterize the set of configurations
$(s,w)$ that satisfy the energy-constained infinite-run problem.
%
% \begin{lemma}
%   Let $\calA$ be  an SETAu, $E\in\calI(\bbQ)$   an energy constraint,
%   and  $(s,w)$ a  given  configuration.  Then $(s,w)$ satisfies the
%   energy-constrained infinite-run problem if and only if $(s,w)\in\calW$.
% \end{lemma}
%
Unfortunately this characterization does not readily provide an
algorithm.  For this, we make the following restriction and show that
it leads to decidability of the energy-constrained infinite-run
problem.
\begin{description}
\item[(R)] in any of the ETPu $P(t)$ of $\calA$, on at least one of
  its transitions, some clock $x$ is compared with a postive lower
  bound. Thus, there is an (overall minimal) positive time-duration
  $D$ to complete any $P(t)$ of $\calA$.
\end{description}

%\begin{theorem}
%  \label{unc-thm}
%  Let $\calA$ be an SETAu satisfying {\bf{(R)}}, $E\in\calI(\bbQ)$ an
%  energy constraint, and $(s_0,w_0)$ an initial
%  macro-configuration. Then it is decidable whether the
%  energy-constrained infinite-run problem is satisfied.
%\end{theorem}

\thmuncert*

\begin{proof}
  Under hypothesis {\bf{(R)}}, there is a minimum level of imprecision
  for any transition $t=(s,s')$: whenever $\calU^E_{P(t)}(w,a,b)$ then
  $|b-a|\geq D\cdot\Delta_{\text{min}}$, where $\Delta_{\text{min}}$
  is the minimal imprecision within all ETPu $P(t)$ of $\calA$.  Thus
  if $(s,w)\in\calW$ ``due~to'' some transition $t=(s,s')$, then for
  some interval $[a,b]$ with $|b-a|\geq D\cdot\Delta_{\text{min}}$ all
  configurations $(s',w')$ with $w'\in[a,b]$ must be in $\calW$.  Now
  let $N=\left\lceil
    \frac{|E|}{D\cdot\Delta_{\text{min}}}\right\rceil$.  It follows
  that the subset of $E$ given by $\calW_s=\{w' \mid
  (s',w')\in\calW\}$ may be divided into at most $N$ intervals
  $[a_{s',j},b_{s',j}]$ ($1\leq j\leq N$), each of size at least
  $D\cdot\Delta_{\text{min}}$.  We may therefore rewrite equation
  (\ref{Weq-app}) as the first-order formula:
  \begin{align}
    % \exists (a_{s,j}, b_{s,j})_{s\in S, 1\leq j\leq N}. &
    \bigwedge_{s\in S}\bigwedge_{1\leq j\leq N} & [a_{s,j};b_{s,j}]
    \subseteq E \wedge w_0 \in \bigvee_{1 \le j \le N}
    [a_{s_0,j};b_{s_0,j}] \wedge \forall
    w\in[a_{s,j};b_{s,j}]. \nonumber\\
    & \bigvee_{(s,s')\in T}\big[ \exists a,b\in E.
    \,\,\calU^E_{P(s,s')}(w,a,b)\wedge \bigvee_{1\leq k\leq
      N}([a;b]\subseteq [a_{s',k};b_{s',k}])\big]
      \label{Ueq-app}
    \end{align}
    By quantifier elimination, the above may be rewritten as a boolean
    combination of linear constraints over the variables $a_{s,j},
    b_{s,j}$, and determining the satisfiability of the formula is
    decidable.  \qed
\end{proof}

It is worth noticing that we do \textbf{not} assume flatness of the
model for proving the above theorem. Instead, the minimal-delay
assumption {\bf{(R)}} has to be made.

\subsection{Synthesis of optimal upper bound}

For the (optimal) upper-bound synthesis problem, we have the following
results in the setting of uncertainty.

\thmubuncert*

\begin{proof}
  First, for a cycle ETPu $\calC$ and a lower energy bound $L$, we may
  define a quaternary relation $\calX^L_{\calC}$ on $E$ such that
  $\calX^L_{\calC}(w,a,b,U)$ holds if and only if
  $\calU^{[L;U]}_{\calC}(w,a,b)$.  Clearly $\calX^L_{\calC}$ can be
  described as a first-order formula over linear arithmetic, and by
  quantifier elimination as a finite conjunction of linear constraints
  over $w, a, b$ and $U$.

  Now, since $\calA$ is a depth-1 flat SETAu, we can assume
  w.l.o.g. that $\calA$ consists in a path followed by a cycle that
  one tries to iterate. This is w.l.o.g. since a depth-1 flat SETAu
  can be seen as a finite union of such simple automata. Hence we
  assume $\calA = (S,T,P)$ has two macro states $S = \{s,s'\}$, and
  two macro-transitions $T = \{(s,s'),(s',s')\}$. We let $\calP$ be
  the path $T(s,s')$ and $\calC$ be $T(s',s')$. For any given $U$ it
  suffices to capture the set $\calW_{s'}$ with a single interval
  $[a_{s'};b_{s'}]$ (as in the proof of Thm.~\ref{unc-thm}). We may
  now rewrite the equation (\ref{Ueq-app}) as the first-order formula:
%
  % \begin{align*}
  %   \bigwedge_{s} [a_s;b_s] \subseteq [L;U] \wedge \forall
  %   w\in[a_{s};b_{s}]. & \,\exists a,b \geq L.\\
  %   & \calX^L_{P(s,\text{nxt}(s))}(w,a,b,U) \wedge [a;b]\subseteq
  %   [a_{\text{nxt}(s)};b_{\text{nxt}(s)}] \end{align*}
  % %
  % where $\text{nxt}(s)$ is the successor  state to $s$ on the cycle.
  \begin{align*}
    w_0 \in [L;U] \wedge \exists a,b \ge L.\
    \calX_{\calP}^L(w_0,a,b,U) \wedge [a;b] \subseteq [a_{s'};b_{s'}]
    \subseteq [L;U] \wedge  \\
    \forall w \in [a_{s'};b_{s'}].\ \exists a',b' \ge L'.\ 
    \calX_{\calC}^L(w,a',b',U) 
    \wedge [a';b'] \subseteq [a_{s'};b_{s'}]
  \end{align*}

  By quantifier elimination the above may be rewritten as a boolean
  combination of linear constraints over the variables $a_{s'},
  b_{s'}$ and $U$, and determining the satisfiability of the formula
  is decidable.  In addition, using linear programming, we may find
  the minimal value of $U$.  \qed
\end{proof}

\section{Details on the HYDAC case study}
In this section we present an industrial case study that was provided
by the HYDAC company in the context of a European research project
Quasimodo~\cite{quasimodo}.
%Two~authors of this paper have previously
%addressed this case study in~\cite{CJLRR09}.
The case study 
consists in an on-off control system where the system
to be controlled, depicted in Fig~\ref{fig:hydacoverview}, is composed
of
\begin{enumerate*}[label=(\roman*)]
\item a machine that consumes oil, 
%\item a reservoir containing oil, 
\item an accumulator containing oil and a fixed amount of gas in order to put the oil under pressure, and 
\item a controllable pump which can pump oil in the accumulator. 
\end{enumerate*}
When the system is operating, the machine consumes oil under pressure
out of the accumulator. The level of the oil, and so the pressure
within the accumulator, can be controlled by pumping additional oil in
the accumulator (thereby increasing the gas pressure). The~control
objective is twofold: first the level of oil into the accumulator
(and~so the gas pressure) shall be maintained within a safe interval;
second, at~the end of each operating cycle, the accumulator shall be
in a state that ensures the controllability of the following cycle.
Besides these safety requirements, the~controller should also try to
minimize the oil level in the tank, so as to not damage the system.

\subsection{Modelling the oil pump system.}

In this section we describe the characteristics of each component of
the HYDAC case. Then we model the system as a~SETA.
\begin{trivlist}
\item \emph{The Machine.} The oil consumption of the machine is
  cyclic. One cycle of consumptions, as given by HYDAC, consists of
  $10$ periods of consumption, each having a duration of two seconds,
  as~depicted in Figure~\ref{fig:machinecycle}. Each period is
  described by a rate of consumption~$m_r$ (expressed in litres per
  second).
%and a duration~$D$ (expressed in seconds).
  The~consumption rate is subject to noise: if the mean consumption
  for a period is $c\;\si{\litre/\second}$ (with~$c \geq 0$) its
  actual value lies within $[\max(0,c -\epsilon); c + \epsilon]$,
  where $\epsilon$ is fixed to $0.1\;\si{\litre/\second}$.

\item \emph{The Pump.} The pump is either \on\ or \off, and we assume
  it is initially \off\ at the beginning of a cycle.  While it is~\on,
  it~pumps oil into the accumulator with a rate~$p_r =
  2.2\;\si{\litre/\second}$.  The~pump is also subject
  to timing constraints, which prevent switching it on and off too often. 
%Additionally, the control of the pump
  %must respect the following constraint: it must elapse at least two
  %seconds between two consecutive pump switches, i.e., if it is
  %turned \on\ (resp.\ \off) at time $t$, it must stay \on\
  %(resp.\ \off) at least until time $t + 2$.
%
\item \emph{The Accumulator.} The volume of oil within the accumulator
  will be modelled by means of an energy variable~$v$.
  %depending on the
  %value $m_r \pm \epsilon$ (the~rate of consumption of the machine)
  %and $p_r$ (the rate of incoming oil from the pump).
%
  Its~evolution is given by the differential inclusion $dv/dt -u\cdot
  p_r\in -[m_r+\epsilon; m_r-\epsilon]$ (or~$-[m_r+\epsilon;0]$ if
  $m_r-\epsilon<0$), where $u\in\{0,1\}$ is the state of the pump.
%  where
%  $m_r^{\bowtie}(\epsilon) = m_r \bowtie \epsilon$ if $m_r > 0$ and
%  $m_r$ otherwise.
\end{trivlist}
The controller must operate the pump (switch it on and~off) to ensure
the following requirements:
\begin{enumerate*}[label=(R\arabic*)]
\item the level of oil in the accumulator must always stay within the
  safety bounds $E = [V_{\min}; V_{\max}]$\footnote{The HYDAC company
    has fixed $V_{\min} = 4.9\litr$ \si{\litre} and $V_{\max} = 25.1\litr$.}
\item at the end of each machine cycle, the level of oil in the
  accumulator must ensure the controllability of the following cycle.
\end{enumerate*}
%% Clearly, to make sure that the safety bounds are met, one needs to
%% determine an initial volume of the accumulator $v \in E$ which will
%% make possible the above control problem.
% to control the pump so that it the system will never reach an unsafe configuration. 

\smallskip
By modelling the oil pump system as a SETA~$\mathcal{H}$, the above
control problem can be reduced to finding a deterministic schedule
that results in a safe infinite run in~$\mathcal{H}$. Furthermore, we
are also interested in determining the minimal safety interval~$E$,
i.e., finding interval bounds that minimise~$V_{\max} - V_{\min}$,
while ensuring the existence of a valid controller for~$\mathcal{H}$.

As a first step in the definition of~$\calH$, we~build an ETP
representing the behaviour of the machine, depicted on Fig.~\ref{fig-setaconsum-app}.
\begin{figure}[ht]
  \centering
  \begin{tikzpicture}[xscale=1.3]
    \begin{scope}
\everymath{\scriptstyle}
\draw (0,0) node[rond5,orange] (a) {} node {$0$} node[below=3mm] {$x\leq 2$};
\draw (1,0) node[rond5,orange] (b) {} node {$-1.2$} node[below=3mm] {$x\leq 2$};
\draw (2,0) node[rond5,orange] (c) {} node {$0$} node[below=3mm] {$x\leq 2$};
\draw (3,0) node[rond5,orange] (d) {} node {$0$} node[below=3mm] {$x\leq 2$};
\draw (4,0) node[rond5,orange] (e) {} node {$-1.2$} node[below=3mm] {$x\leq 2$};
\draw (5,0) node[rond5,orange] (f) {} node {$-2.5$} node[below=3mm] {$x\leq 2$};
\draw (6,0) node[rond5,orange] (g) {} node {$0$} node[below=3mm] {$x\leq 2$};
\draw (7,0) node[rond5,orange] (h) {} node {$-1.7$} node[below=3mm] {$x\leq 2$};
\draw (8,0) node[rond5,orange] (i) {} node {$-0.5$} node[below=3mm] {$x\leq 2$};
\draw (9,0) node[rond5,orange] (j) {} node {$0$} node[below=3mm] {$x\leq 2$};
\draw[-latex'] (a) -- (b) node[midway,above] {$x=2$} node[midway,below] {$x:=0$};
\draw[-latex'] (b) -- (c) node[midway,above] {$x=2$} node[midway,below] {$x:=0$};
\draw[-latex'] (c) -- (d) node[midway,above] {$x=2$} node[midway,below] {$x:=0$};
\draw[-latex'] (d) -- (e) node[midway,above] {$x=2$} node[midway,below] {$x:=0$};
\draw[-latex'] (e) -- (f) node[midway,above] {$x=2$} node[midway,below] {$x:=0$};
\draw[-latex'] (f) -- (g) node[midway,above] {$x=2$} node[midway,below] {$x:=0$};
\draw[-latex'] (g) -- (h) node[midway,above] {$x=2$} node[midway,below] {$x:=0$};
\draw[-latex'] (h) -- (i) node[midway,above] {$x=2$} node[midway,below] {$x:=0$};
\draw[-latex'] (i) -- (j) node[midway,above] {$x=2$} node[midway,below] {$x:=0$};
    \end{scope}
  \end{tikzpicture}
  \caption{The ETP representing the oil consumption of the machine.}
  \label{fig-setaconsum-app}
  \medskip

  \begin{tikzpicture}
    \everymath{\scriptstyle}
    \draw (0,0) node[rond5,orange,opacity=.9,dashed] (a) {} node {$-m$} node[below=3mm] {$x\leq 2$};
    \draw (2,0) node[rounded corners=2mm,minimum width=1cm,minimum height=5mm,rouge] (b) {$p-m$} node[below=3mm] {$x\leq 2$};
    \draw (4,0) node[rounded corners=2mm,minimum width=1cm,minimum height=5mm,vert] (c) {$-m$} node[below=3mm] {$x\leq 2$};
    \draw (6,0) node[rond5,jaune,opacity=.9,dashed] (d) {$-m'$} node[below=3mm] {$x\leq 2$};
    \draw[-latex'] (a) -- (b);
    \draw[-latex'] (b) -- (c);
    \draw[-latex'] (c) -- (d) node[midway,above] {$x=2$} node[midway,below] {$x:=0$};
  \end{tikzpicture}
  \caption{An ETP for modelling the pump}
  \label{fig-setapump-app}
\end{figure}
In~order to fully model the behaviour of our oil-pump system, one
would require the parallel composition of this ETP with another
ETP representing the pump. The~resulting ETA would not be a flat SETA,
and would not fit in our setting. Since it~still provides interesting
result, we~develop this approach in~Appendix~\ref{app:hydacfull}).

Instead, we~consider a simplified model of the pump, which only allows
to switch it on and off once during each 2-second slot. This~is
modelled by inserting, between any two states of the model of
Fig.~\ref{fig-setaconsum-app}, a~copy of the ETP depicted on
Fig.~\ref{fig-setapump-app}. In~that ETP, the state with rate~$p-m$
models the situation when the pump is~on. Keeping the pump off for the
whole slot can be achieved by spending delay~zero in that state.
%
%% By relaxing the latency requirement between two consecutive switches
%% of the pump we can model the HYDAC oil pump system as a flat SETA with
%% a single clock, namely $x$. We will present two versions of the HYDAC
%% oil pump system.
%
We~name~$\calH_1=\tuple{M,T,P_1}$ the SETA made of a single
macro-state equiped with a self-loop labelled with the ETP above.

In~order to take into account the timing constraints of the pump
switches, we~also consider a second SETA
model~$\calH_2=\tuple{M,T,P_2}$ where the pump can be operated only
during every other time slot. This amount to inserting the ETP of
Fig.~\ref{fig-setapump-app} only after the first, third, fifth, seventh
and ninth states of the ETP of Fig.~\ref{fig-setaconsum-app}.

%% The second version of the HYDAC system, the SETA $\mathcal{H}_2 = (M,T,P_2)$, has the same underlying graph of $\mathcal{H}_1$, and $P_2(\off_0, \off_{0}) = \mathcal{Q}_0 \cdots \mathcal{Q}_8$ where $\mathcal{Q}_i = \calP_i$ if $i \in 2\N$, and $\overline{\calP}_i$ otherwise, where $\overline{\calP}_i$ is defined
%% as follows.
%% \begin{center}
%% \begin{tikzpicture}
%%     \draw (0,0) node[rond6,fill=red!40!white,inner sep=1.1mm,font=\small] (a) {$\off_i$}
%%       node[below=4mm] {$\stack{x \leq D_i}{\rate: - m_r^i}$} 
%%       node[left=5mm] {$\overline{\calP}_i$:};
%%     \draw (5,0) node[rond6,jaune,font=\small] (d) {$\off_{s(i)}$};
%%     %
%%     \draw (a) edge[-latex'] node[near start,above] {$x=D_i$} node[below] {$x:=0$}
%%       node[near end, above] {$\update: 0$}  (d);
%%   \end{tikzpicture}
%% \end{center} 
%% By having each path of type $\calP_i$ followed by a path of type $\overline{\calP}_{s(i)}$, the SETA $\mathcal{H}_2$ models a pump that can be switched \on\ at most once every two machine periods. Since the duration of each machine period is $\geq 2$, this implicitly constrains the pump to stay \off\ for at least $2$ seconds; however, the time where the pump stays \on\ can't exceed the duration $D_i$ of the current machine period.

We~also consider extensions of both models with uncertainty
$\epsilon=0.1\;\si{\liter/\second}$ (changing any negative rate~$-m$
into rate interval $[-m-\epsilon;-m+\epsilon]$, but changing rate~$0$
into $[-\epsilon;0]$). We~write~$\calH_1(\epsilon)$
and~$\calH_2(\epsilon)$ for the corresponding models.

%% In order to take into consideration the uncertainty of the machine rate
%% consumptions, the above two SETAs  can be straightforwardly
%% extended to model uncertainties in the consumptions rates\footnote{By
%%   associating imprecision $\epsilon = 0.1\;\si{\litre/\second}$ with
%%   each state of the self-cycle.}.

For each model, we synthesise minimal upper bounds~$U$ (within the
interval~$[V_{\min};V_{\max}]$) that admit a solution to the
energy-constrained infinite-run problem for energy constraint $E =
[V_{\min};U]$. Then, we~compute the greatest stable interval~$[a;b] \subseteq
[L;U]$ of the cycle witnessing the existence of an $E$-constrained
infinite-run.  This~is done by closely following the methods described
in Sections~\ref{sec:ETA} and~\ref{sec:ETAu}.

Finally for each model we synthesise \emph{optimal} strategies that,
given an initial volume $w_0 \in [a,b]$ of the accumulator, return a
sequence of pump activation times~$t_i^{\text{on}}$
and~$t_i^{\text{off}}$ to be performed during the cycle.  This~is
performed in two steps: first, using Mjollnir, we~get a safe
\emph{permissive strategy} as a linear constraint linking~$w_0$,
the~intevals~$[L;U]$ and~$[a;b]$, and the times~$t_i^{\text{on}}$
and~$t_i^{\text{off}}$. We~then pick those safe delays that minimize
the average oil volume in the tank during one consumption cycle
(we~use the function \texttt{FindMinimum} of Mathematica to minimize
this non-linear function).
The resulting strategies are displayed on Fig.~\ref{fig:strategies-app}:
there, each horizontal line at a given initial oil level indicates the
delays (green intervals) where the pump will be running.

%Each strategy measures the volume
%just once at the beginning of each cycle and play the corresponding
%``local strategy'' until the beginning of the next machine cycle.

\begin{figure}[t]


\centering
\begin{subfigure}[b]{0.45\textwidth}
 \includegraphics[width=1 \textwidth]{schedH1opt}
%\caption{Strategy for $\mathcal{H}_1$} \label{fig:h1strategy}
\end{subfigure}
\quad
\begin{subfigure}[b]{0.45\textwidth}
 \includegraphics[width=1 \textwidth]{schedH2opt}
%\caption{Strategy for $\mathcal{H}_2$} \label{fig:h2strategy}
\end{subfigure}
\\[1ex]
\begin{subfigure}[b]{0.45\textwidth}
 \includegraphics[width=1 \textwidth]{schedH1un}
%\caption{Strategy for $\mathcal{H}_1(\epsilon)$} \label{fig:h1Ustrategy}
\end{subfigure}
\quad
\begin{subfigure}[b]{0.45\textwidth}
 \includegraphics[width=1 \textwidth]{schedH2un}
%\caption{Strategy for $\mathcal{H}_2(\epsilon)$} \label{fig:h2Ustrategy}
\end{subfigure}
 \caption{Local strategies for a single cycle of the HYDAC system. (top-left) $\mathcal{H}_1$; (top-right) $\mathcal{H}_2$; (bottom-left) $\mathcal{H}_1(\epsilon)$; (bottom-right) $\mathcal{H}_2(\epsilon)$
 ($\epsilon = 0.1\;\si{\litre/\second}$).} \label{fig:strategies-app}
\end{figure}

%% Our strategies are optimal in the sense that among all $E$-safe and $[a,b]$-stable strategies, ours minimise the mean value of the accumulated volume during the execution of the cycle. 
%% This was done by characterising the set of $E$-safe and $[a,b]$-stable strategies as a set of linear constraints and solving the optimisation problem with respect to a non-linear objective function modelling the mean accumulated volume\footnote{This was performed using the function \texttt{FindMinimum} provided by Mathematica.}. 

\begin{table}[t]
\begin{center}
\footnotesize
\setlength{\tabcolsep}{1.5ex}
\begin{tabular}[t]{|c|c|c|c|}
\hline
Controller & $[L;U]$ & $[a;b]$ & Mean vol. (\si{\litre}) \\ \hline\hline
$\mathcal{H}_1$ & $[4.9;5.84]$ & $[4.9;5.84]$ & 5.43 \\ \hline
$\mathcal{H}_1(\epsilon)$ & $[4.9; 7.16]$ & $[5.1; 7.16]$ & 6.15 \\ \hline
$\mathcal{H}_2$ & $[4.9;7.9]$ & $[4.9;7.9]$ & 6.12 \\ \hline
$\mathcal{H}_2(\epsilon)$ & $[4.9;9.1]$ & $[5.1;9.1]$ & 7.24 \\ \hline 
\hline
G1M1~\cite{CJLRR09} & $[4.9;25.1]^{(*)}$ & $[5.1;9.4]$ & 8.2 \\ \hline
G2M1~\cite{CJLRR09} & $[4.9;25.1]^{(*)}$ & $[5.1;8.3]$ & 7.95 \\ \hline
\hline
\cite{DBLP:conf/fm/ZhaoZKL12} & $[4.9;25.1]^{(*)}$ & $[5.2;8.1]$ & 7.35 \\ \hline
\end{tabular}
\end{center}
%\NMshort{Shouldn't $a$ (lower bound of~$[a;b]$) in~$\calH_1(\epsilon)$ and $\calH_2(\epsilon)$ be $5.1$?}
\caption{Characteristics of the synthesised strategies, compared with the 
strategies proposed in~\cite{CJLRR09,DBLP:conf/fm/ZhaoZKL12} $(*)$ Safety interval given by the HYDAC company.}
\label{tab:schedulers-app}
\end{table}
The first part of Table~\ref{tab:schedulers-app} summarises the results
obtained for our models. It~gives the optimal volume constraints, the
greatest stable intervals, and the values of the worst-case (over all initial oil levels in~$[a;b]$)
mean volume.
%\footnote{$\mathit{MeanVol}(\mathcal{H}) =\max \{
%  \mathit{AccVol}(v_0, \mathit{strategy}(v_0,\mathcal{H})) / 20 \mid
%  v_0 \in [a;b] \}$.}.
It~is worth noting that the models without
uncertainty outperform the respective version with uncertainty.
Moreover, the~worst-case mean volume obtained both for
$\mathcal{H}_1(\epsilon)$ and $\mathcal{H}_2(\epsilon)$ are significantly
better than the optimal strategies synthesised both in~\cite{CJLRR09}
and~\cite{DBLP:conf/fm/ZhaoZKL12}.

The reason for this may be that
\begin{enumerate*}[label=(\roman*)]
\item our models relax the latency requirement for the pump, 
\item the strategies of~\cite{CJLRR09} are obtained using a
  discretisation of the dynamics within the system, and
\item the strategies of~\cite{CJLRR09}
  and~\cite{DBLP:conf/fm/ZhaoZKL12} where allowed to activate the pump
  respectively two and three times during each cycle.
\end{enumerate*}

%{\color{red}Note also that we have been able to synthesise better
%stable intervals where $a$ coincides with $L$, whereas the stable
%intervals synthesised in \cite{CJLRR09} had to add a ``safe'' margin
%to their stable intervals in order to cope with the stability issues
%that might arise by using a discrete dynamics.}\todo{cut?}

\begin{figure}[t]
\centering
\begin{subfigure}[b]{0.45\textwidth}
 \includegraphics[width=1 \textwidth, height=4cm]{simH1}
%\caption{Strategy for $\mathcal{H}_1$} \label{fig:h1strategy}
\end{subfigure}
\quad
\begin{subfigure}[b]{0.45\textwidth}
 \includegraphics[width=1 \textwidth, height=4cm]{simH2}
%\caption{Strategy for $\mathcal{H}_2$} \label{fig:h2strategy}
\end{subfigure}
\\[1ex]
\begin{subfigure}[b]{0.45\textwidth}
 \includegraphics[width=1 \textwidth, height=4cm]{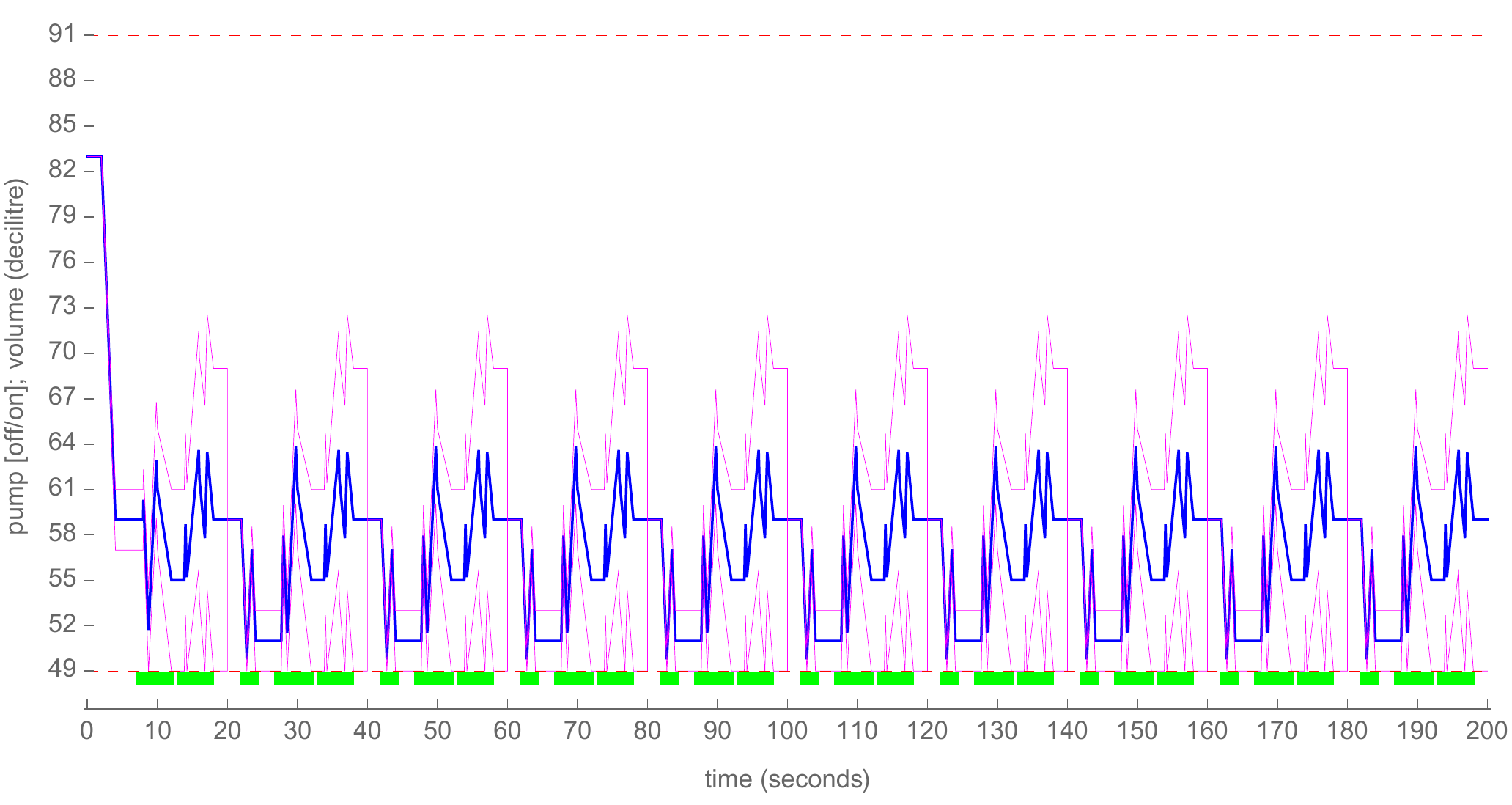}
%\caption{Strategy for $\mathcal{H}_1(\epsilon)$} \label{fig:h1Ustrategy}
\end{subfigure}
\quad
\begin{subfigure}[b]{0.45\textwidth}
 \includegraphics[width=1 \textwidth, height=4cm]{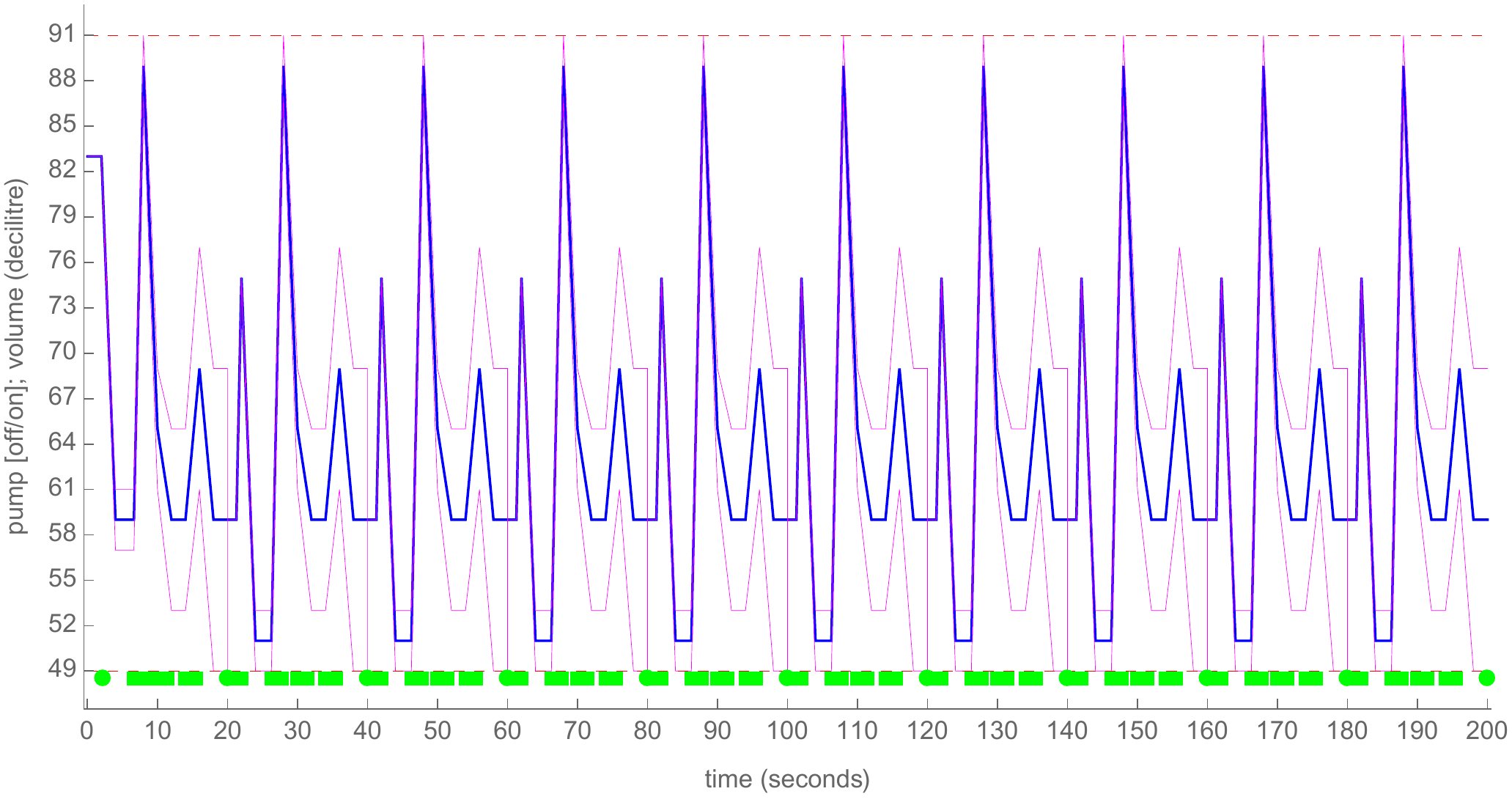}
%\caption{Strategy for $\mathcal{H}_2(\epsilon)$} \label{fig:h2Ustrategy}
\end{subfigure}
 \caption{Simulations of 10 consecutive machine cycles performed resp.\ with the strategies for (top-left) $\mathcal{H}_1$; (top-right) $\mathcal{H}_2$; (bottom-left) $\mathcal{H}_1(\epsilon)$; and (bottom-right) $\mathcal{H}_2(\epsilon)$.} \label{fig:simulations-app}
\end{figure}

We proceed by comparing the performances of our strategies in terms of
accumulated oil volume.  Figure~\ref{fig:simulations-app} shows the
result of simulating our strategies for a duration of~$200\;\si{\second}$,
i.e., $10$~machine cycles. The~plots illustrate the
dynamics of the oil level in the accumulator as well as the state
of the pump. The~initial volume used for evaluating the strategies is
$8.3\litr$, as done in~\cite{CJLRR09} for evaluating respectively the
Bang-Bang controller, the Smart Controller developed by HYDAC, and the
controllers G1M1 and G2M1 synthesised with
\textsc{uppaal-tiga}\footnote{We refer the reader to~\cite{CJLRR09}
  for a more detailed description of the controllers.}.

\begin{table}[t]
\begin{center}
\setlength{\tabcolsep}{1ex}
\begin{tabular}[t]{|c|c|c|}
\hline
Controller & Acc.\ vol. (\si{\litre}) & Mean vol. (\si{\litre}) \\ \hline
\hline
$\mathcal{H}_1$ & 1081.77 & 5.41 \\ \hline
$\mathcal{H}_2$ & 1158.9 & 5.79 \\ \hline
$\mathcal{H}_1(\epsilon)$ & 1200.21 & 6.00 \\ \hline
$\mathcal{H}_2(\epsilon)$ & 1323.42 & 6.62 \\ \hline 
\end{tabular}
\begin{tabular}[t]{|c|c|c|}
\hline
Controller & Acc.\ vol. (\si{\litre}) & Mean vol. (\si{\litre}) \\ \hline
\hline
Bang-Bang & 2689 & 13.45 \\ \hline
\textsc{hydac} & 2232 & 11.6 \\ \hline
G1M1 & 1518 & 7.59 \\ \hline
G2M1 & 1489 & 7.44 \\ \hline
\end{tabular}
\end{center}
\caption{Performance based on simulations of $200\;\si{\second}$ starting with $8.3\litr$.}
\label{tab:simulations-app}
\end{table}

Table~\ref{tab:simulations-app} presents, for each of the strategies, the
resulting accumulated volume of oil, and the corresponding mean
volume.  There is a clear evidence that the strategies for
  $\mathcal{H}_1$ and~$\mathcal{H}_2$ outperform all the other
  strategies. Clearly, this is due to the fact that they assume full
  precision in the rates, and allow for more switches of the pump.
  However, these results shall be read as what
  one could achieve by investing in more precise
  equipment.
  %}\todo{cut?}
%
The~results also confirm that both our strategies outperform those
presented in~\cite{CJLRR09}. In particular the strategy for
$\mathcal{H}_1(\epsilon)$ provides an improvement of $55\%$, $46\%$,
$20\%$, and $19\%$ respectively for the Bang-Bang controller, the
Smart Controller of HYDAC, and the two strategies synthesised with
\textsc{uppaal-tiga}.

\paragraph{Tool Chain.}
Our results have been obtained using Mathematica~\cite{Mathematica} and
Mjollnir~\cite{Monniaux10}. Specifically, Mathematica was used to
construct the formulas modelling the post fixed-points of the energy
functions while Mjollnir was used for performing quantifier
elimination on them.  The computation of the optimal upper bounds, and
greatest stable intervals were then handled with Mathematica, as well
as the computation of the optimal schedules and the respective
simulations.  It~is worth mentioning that Mathematica provides the
built-in function \texttt{Resolve} for preforming quantifier
elimination, but Mjollnir was preferred to it both for its
performances and its concise output.  By~calling Mjollnir from
Mathematica \emph{while constructing our predicates}, we~were able to
simplify formulas with more than $27$ quantifiers in approximately
$0.023$~sec. In contrast, resolving the same formula directly in
Mjollnir took us more that 20 minutes!

\section{Non-flat model of the HYDAC case}\label{app:hydacfull}

We~briefly present a more precise model of the HYDAC example, closer
to what appeard in~\cite{CJLRR09}, using a non-flat SETA. The~model is
built by considering two flat ETPs running in parallel: one~ETP models
the consumption cycle of the machine (with fixed delays;
see~Fig.~\ref{fig-setaconsum}), and the second one models the state of
the pump over a complete cycle of the machine, allowing for instance
at most 4 switches during one cycle (see~Fig.~\ref{fig-hnonflat}). This
almost exactly corresponds to the model considered in~\cite{CJLRR09}.

\begin{figure}[ht]
\centering
  \begin{tikzpicture}[xscale=1.5]
%%     \begin{scope}
%% \draw (0,0) node[rond6,vert] (a) {$0$} node[below=3mm] {$x\leq 2$};
%% \draw (1,0) node[rond6,vert] (b) {$1.2$} node[below=3mm] {$x\leq 2$};
%% \draw (2,0) node[rond6,vert] (c) {$0$} node[below=3mm] {$x\leq 4$};
%% \draw (3,0) node[rond6,vert] (d) {$1.2$} node[below=3mm] {$x\leq 2$};
%% \draw (4,0) node[rond6,vert] (e) {$2.5$} node[below=3mm] {$x\leq 2$};
%% \draw (5,0) node[rond6,vert] (f) {$0$} node[below=3mm] {$x\leq 2$};
%% \draw (6,0) node[rond6,vert] (g) {$1.7$} node[below=3mm] {$x\leq 2$};
%% \draw (7,0) node[rond6,vert] (h) {$0.5$} node[below=3mm] {$x\leq 2$};
%% \draw (8,0) node[rond6,vert] (i) {$0$} node[below=3mm] {$x\leq 2$};
%% \everymath{\scriptstyle}
%% \draw[-latex'] (a) -- (b) node[midway,above] {$x=2$} node[midway,below] {$x:=0$};
%% \draw[-latex'] (b) -- (c) node[midway,above] {$x=2$} node[midway,below] {$x:=0$};
%% \draw[-latex'] (c) -- (d) node[midway,above] {$x=4$} node[midway,below] {$x:=0$};
%% \draw[-latex'] (d) -- (e) node[midway,above] {$x=2$} node[midway,below] {$x:=0$};
%% \draw[-latex'] (e) -- (f) node[midway,above] {$x=2$} node[midway,below] {$x:=0$};
%% \draw[-latex'] (f) -- (g) node[midway,above] {$x=2$} node[midway,below] {$x:=0$};
%% \draw[-latex'] (g) -- (h) node[midway,above] {$x=2$} node[midway,below] {$x:=0$};
%% \draw[-latex'] (h) -- (i) node[midway,above] {$x=2$} node[midway,below] {$x:=0$};
%%     \end{scope}
    \begin{scope}[yshift=-1.6cm,xshift=1cm,xscale=1.2]
  \draw (0,0) node[rond6,bleu] (a) {$0$};
  \draw (1,0) node[rond6,bleu] (b) {$2.2$};
  \draw (2,0) node[rond6,bleu] (c) {$0$};
  \draw (3,0) node[rond6,bleu] (d) {$2.2$};
  \draw (4,0) node[rond6,bleu] (e) {$0$} node[below=3mm] {$t\leq 20$};
\everymath{\scriptstyle}
  \draw[-latex'] (a) -- (b) node[midway,above] {$y\geq 2$} node[midway,below] {$y:=0$};
  \draw[-latex'] (b) -- (c) node[midway,above] {$y\geq 2$} node[midway,below] {$y:=0$};
  \draw[-latex'] (c) -- (d) node[midway,above] {$y\geq 2$} node[midway,below] {$y:=0$};
  \draw[-latex'] (d) -- (e) node[midway,above] {$y\geq 2$} node[midway,below] {$y:=0$};
  \draw[-latex'] (e) -- +(0:10mm) node[midway,above] {$t=20$};
\end{scope}
\end{tikzpicture}
\caption{An ETP modelling the pump}
\label{fig-hnonflat}
\end{figure}
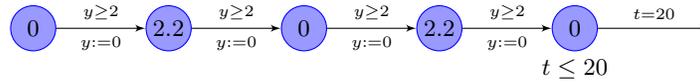

The resulting model is an ETA, which can actually be turned into a
non-flat SETA.
%(see~Fig.~\ref{fig-setanonflat}).
Hence it only fits in
our framework with uncertainty.  However, for~fixed~$L$ and~$U$, it~is
still possible to write down the energy relation, with or without
uncertainty: it~results in a (large) list of cases, because of
interleavings.

Following~\cite{CJLRR09}, we~then compute $m$-stable intervals, i.e.,
intervals~$[a;b]$ of oil levels for which there is a schedule to end
up with final oil level in~$[a+m;b-m]$.
In~the absence of uncertainties, fixing $L=4.9\litr$ and
$m=0.4\litr$, we~could then prove that there
are $m$-stable intervals as soon as $U\geq 8.1\litr$.

%% Our results can be summarized as follows:
%% %(see details in Table~\ref{}):
%% \begin{itemize}
%% \item without uncertainty, there is an $m$-stable interval as soon as
%%   $U\geq 8.1\litr$ (which could be found by binary search):
%%   \begin{itemize}
%%   \item for~$U=8.1\litr$, the only $m$-stable interval is $[7.3; 8.1]\litr$;
%%     %% 23 seconds
%%   \item for~$U=9.5\litr$, the $m$-stable intervals~$[c;d]$ are characterized by
%%     $d-c\geq 0.8\litr$ and either $c\in[4.9,5.38]\litr$ or
%%     $c\in[7.3;8.7]\litr$;
%%     %% 23 seconds
%%   \item for $U\geq 10.0\litr$, all the  intervals~$[c;d]\subseteq [L;U]$ 
%%     with $d-c\geq 0.8\litr$ are $m$-stable.
%%     %% 23 seconds
%%   \end{itemize}
%%   Fixing $U=10.0\litr$ and $[c;d]=[4.9;5.7]\litr$, we~can then have Mjollnir
%%   compute the schedules (as~a~set of constraints over the initial
%%   oil level~$w$ and the delays spent in all 4 states we control).
%%   This results in a (huge) formula characterizing all possible schedules,
%%   which we can use to get a schedule optimizing other constraints.
%%   %% 14.000 chars, 50 secs for U=100, d=100,
%%   %% 600 chars, 4 secs. for U=100, d=57.

With~uncertainties, we~obtain an $m$-stable interval $[5.1;8.9]\litr$
as soon as $U\geq 11.5\litr$.  This again significantly improves
on~\cite{CJLRR09} (which considered discrete time).  Notice we did not
apply our algorithm based on Formula~\eqref{Ueq}~here (hence we~may
have missed better solutions): the~formula would be very large,
and~would involve $(U-L)/0.2$ intervals~$[a_{s,j};b_{s,j}]$.
%Hence we may not have found all
%solutions.

For the $m$-stable interval~$[5.1;8.9]\litr$, we~computed the
constraints characterising all safe
strategies. Figure~\ref{fig-strat115} displays our strategies (notice
the similarities with Fig.~5 of~\cite{CJLRR09}). We~were not able to
select the optimal strategy for the mean volume because expressing the
mean volume results in a piecewise-quadratic function. Instead we
selected the strategy that fills in the tank as late as possible
(which intuitively tends to reduce the mean volume over one
cycle). A~simulation over 10 cycles is displayed on
Fig.~\ref{fig-simufull}.

  \begin{figure}[ht]
    \centering
    \begin{minipage}{.45\linewidth}
      \includegraphics[width=\linewidth] {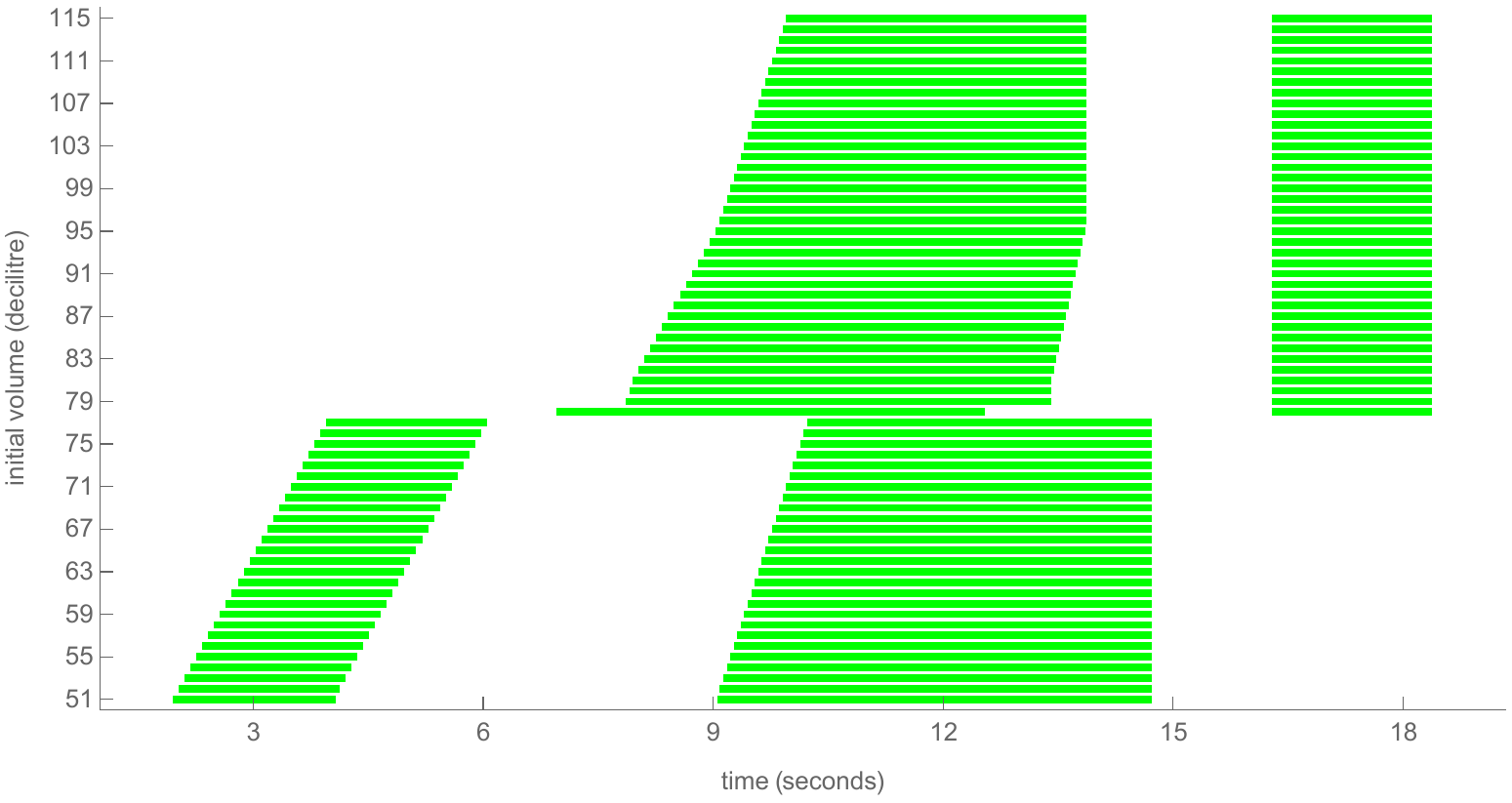}
      \caption{Strategies for the $m$-stable interval $[5.1;
          8.9]\litr$ (for $U=11.5\litr$)}\label{fig-strat115}
    \end{minipage}%
    \hfill
    \begin{minipage}{.45\linewidth}
      \includegraphics[width=\linewidth] {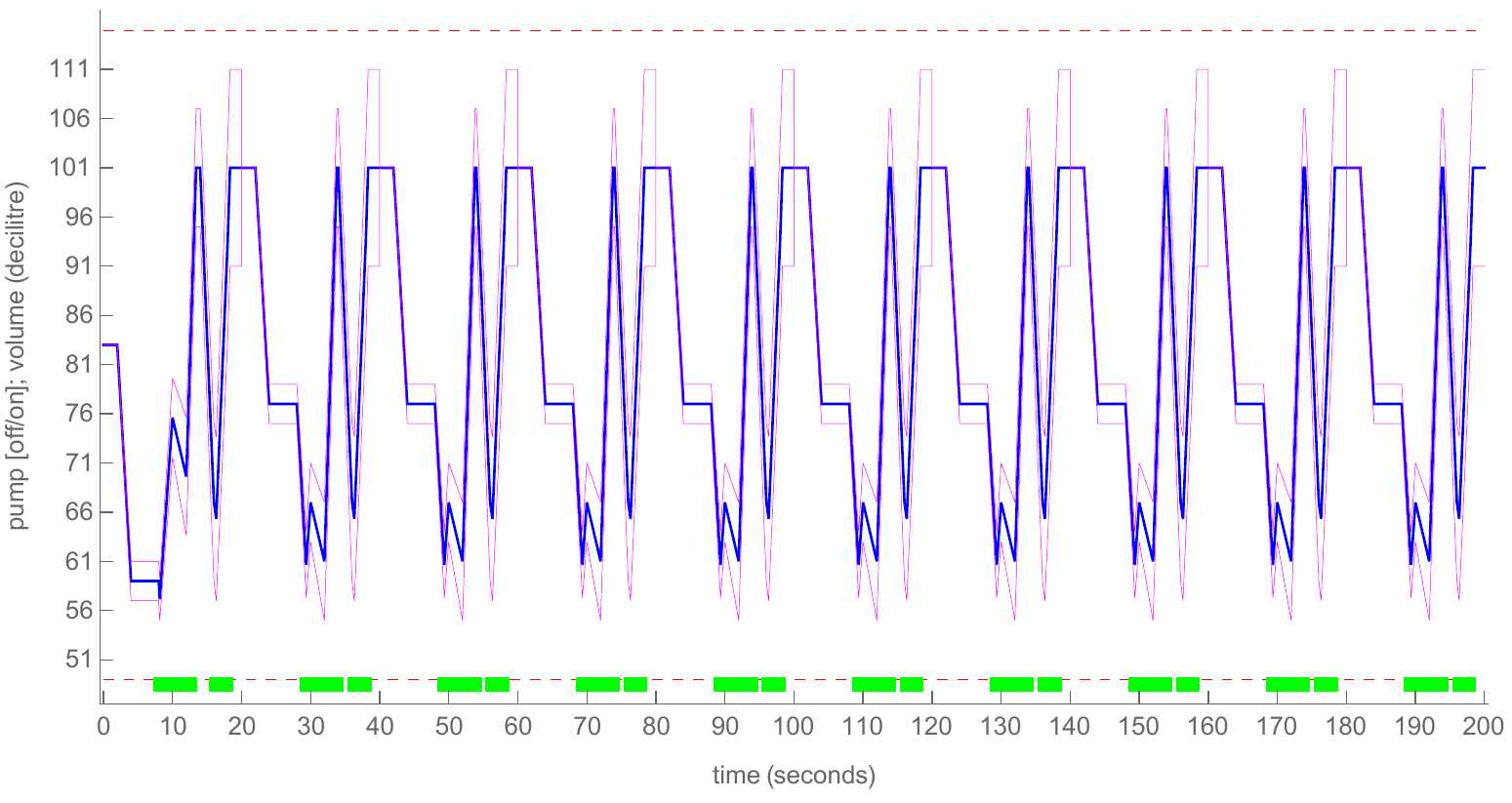}
      \caption{Simulation of 10 cycles}\label{fig-simufull}
    \end{minipage}
  \end{figure}
%\end{itemize}

\end{document}